\newtheorem{proposition}{Proposition}[subsection]
\newtheorem{theorem}{Theorem}[section]
\newtheorem{lemma}{Lemma}[subsection]
\newcommand{\nabb}{\mbox{$\nabla \mkern-13mu /$\,}}
\newcommand{\Trb}{\mbox{$\triangle \mkern-13mu /$\,}}
\def\pa{\partial}
\begin{document}
\title{The wave equation on
Schwarzschild-de Sitter spacetimes}
\author{Mihalis Dafermos\thanks{University of Cambridge,
Department of Pure Mathematics and Mathematical Statistics,
Wilberforce Road, Cambridge CB3 0WB United Kingdom}
\and
 Igor Rodnianski\thanks{Princeton University,
Department of Mathematics, Fine Hall, Washington Road,
Princeton, NJ 08544 United States}
}
\maketitle
\begin{abstract}
We consider solutions to the linear wave equation $\Box_g\phi=0$ on a non-extremal
maximally extended
Schwarzschild-de Sitter spacetime
 arising from arbitrary smooth initial data
 prescribed on an arbitrary Cauchy hypersurface. (In particular, no symmetry
 is assumed on initial data, and the support of the solutions may
 contain the sphere of bifurcation of the black/white hole horizons and the
 cosmological horizons.) We prove that
 in the region bounded by a set of black/white hole horizons and cosmological horizons, 
solutions $\phi$ converge pointwise to a constant
faster than any given polynomial rate, where the decay is measured
with respect to natural future-directed advanced and retarded time coordinates.
We also give such uniform decay bounds for the energy associated to the Killing field
as well as for the energy measured by local observers crossing the event horizon.
The results in particular include decay rates along the horizons themselves.
Finally, we discuss the relation of these results to previous heuristic
 analysis of Price and Brady et al. 
\end{abstract}
\tableofcontents
\section{Introduction}
The introduction of a positive cosmological constant
 in the Einstein equations of general relativity gives
rise to a wide variety of new interesting solution spacetimes, in particular, spacetimes containing both
``black hole'' and ``cosmological'' regions. As in the case of black-hole spacetimes with vanishing cosmological constant,
the stability of these spacetimes as solutions
to the Einstein equations is a fundamental
open problem of gravitational physics. Yet even the simplest questions concerning
the behaviour of linear waves on such spacetime backgrounds today remain unanswered.
In this paper, we initiate in the above context the mathematical
study of decay for solutions to the linear wave equation.

The simplest family of black-hole spacetimes with positive cosmological constant is
the so-called \emph{Schwarzschild-de
Sitter} family. If the cosmological constant $\Lambda>0$ is considered fixed, this 
is a $1$-parameter family of solutions $(\mathcal{M},g)$ 
to the \emph{Einstein vacuum equations}
\begin{equation}
\label{Eeq}
R_{\mu\nu}-\frac12g_{\mu\nu}R=-\Lambda g_{\mu\nu},
\end{equation}
with parameter $M$, called the \emph{mass}.
We shall consider only the non-extremal black-hole case, corresponding
to parameter values 
\begin{equation}
\label{NEBH}
0<M<\frac1{3\sqrt\Lambda}.
\end{equation}

As with the Schwarzschild family,
the first manifestation of the Schwarzschild-de Sitter family of solutions
was an expression for the metric  in local coordinates, in this case
first published in 1918 by Kottler~\cite{kottler}, and independently by Weyl~\cite{Weyl}, in the 
form 
\begin{equation}
\label{original}
-\left(1-\frac{2M}r-\frac13\Lambda r^2\right)dt^2
+\left(1-\frac{2M}r-\frac13\Lambda r^2\right)^{-1}dr^2
+r^2d\sigma_{{\mathbb S}^2}.
\end{equation}
Here $d\sigma_{{\mathbb S}^2}$ denotes the standard metric
on the unit $2$-sphere.
The global structure of maximal spherically symmetric
vacuum extensions of such
metrics was only understood much later~\cite{brandon,lake,GibHawk} based on the methods of
formal Penrose diagrams introduced 
by B.~Carter.
In fact, maximally extended
spherically symmetric vacuum spacetimes $(\mathcal{M},g)$ 
with various different topologies 
can be constructed, all of which equally well merit
the name ``Schwarzschild-de Sitter with parameter $M$ and
cosmological constant $\Lambda$''. 
Such solutions $(\mathcal{M},g)$ all share the 
property that the universal 
cover $\tilde{\mathcal{Q}}$ of the $2$-dimensional 
Lorentzian quotient $\mathcal{Q}=\mathcal{M}/SO(3)$ 
consists of an infinite chain of regions as depicted
in the Penrose diagram below:
\[
\begin{picture}(0,0)%
\includegraphics{desit.pstex}%
\end{picture}%
\setlength{\unitlength}{2763sp}%
\begingroup\makeatletter\ifx\SetFigFont\undefined%
\gdef\SetFigFont#1#2#3#4#5{%
  \reset@font\fontsize{#1}{#2pt}%
  \fontfamily{#3}\fontseries{#4}\fontshape{#5}%
  \selectfont}%
\fi\endgroup%
\begin{picture}(6849,2164)(2389,-4694)
\put(5626,-2686){\makebox(0,0)[lb]{\smash{\SetFigFont{8}{9.6}{\rmdefault}{\mddefault}{\updefault}{\color[rgb]{0,0,0}$r=\infty$}%
}}}
\put(5701,-4636){\makebox(0,0)[lb]{\smash{\SetFigFont{8}{9.6}{\rmdefault}{\mddefault}{\updefault}{\color[rgb]{0,0,0}$r=\infty$}%
}}}
\put(3991,-2821){\makebox(0,0)[lb]{\smash{\SetFigFont{8}{9.6}{\rmdefault}{\mddefault}{\updefault}{\color[rgb]{0,0,0}$r=0$}%
}}}
\put(7351,-2791){\makebox(0,0)[lb]{\smash{\SetFigFont{8}{9.6}{\rmdefault}{\mddefault}{\updefault}{\color[rgb]{0,0,0}$r=0$}%
}}}
\put(3991,-4486){\makebox(0,0)[lb]{\smash{\SetFigFont{8}{9.6}{\rmdefault}{\mddefault}{\updefault}{\color[rgb]{0,0,0}$r=0$}%
}}}
\put(7306,-4531){\makebox(0,0)[lb]{\smash{\SetFigFont{8}{9.6}{\rmdefault}{\mddefault}{\updefault}{\color[rgb]{0,0,0}$r=0$}%
}}}
\end{picture}

\]
The results of this paper do not depend on the topology, and
for definiteness, one may assume in what follows that the name ``Schwarzschild
de-Sitter'' and the notation
$(\mathcal{M},g)$ refer to the spacetime with
quotient precisely the universal cover depicted above.

It is then the wave equation
\begin{equation}
\label{waveeq}
\Box_g\phi=0
\end{equation}
 on this background $(\mathcal{M},g)$
whose mathematical study we wish to initiate here. 
There is already a rich body of heuristic work on this problem in the physics
literature. (See Section~\ref{discussion} below for a discussion.)
The motivation for
the study of $(\ref{waveeq})$ in the present context is multifold. In particular, as in the case 
of vanishing cosmological constant,
studied in our previous~\cite{dr3}, we believe that proving bounds on decay
rates for solutions to  $(\ref{waveeq})$ is a first step
to a mathematical understanding of non-linear stability problems for spacetimes containing
black holes, that is to say, to the problem of stability in the context of the dynamics
of $(\ref{Eeq})$. For a more detailed discussion,
we refer the reader to the introductory remarks
of~\cite{dr3}.


\subsection{The initial value problem for the wave equation}
We are interested in solutions of $(\ref{waveeq})$ arising from suitably regular initial data
prescribed on a Cauchy surface $\Sigma$ of $\mathcal{M}$. For future applications
to non-linear stability problems, it is crucial that all assumptions have a natural geometric
interpretation independent of special coordinate systems.
Moreover,
our primary concern in this paper is the region $\mathcal{D}$ bounded
by a set of black/white hole horizons $\mathcal{H}^+\cup
\mathcal{H}^-$ and cosmological horizons $\overline{\mathcal{H}}^+\cup\overline
{\mathcal{H}}^-$:
\begin{equation}
\label{Ddef}
\mathcal{D}= {\rm clos}\big(J^-(\mathcal{H}^+\cup \overline{\mathcal{H}}^+)
\cap J^+(\mathcal{H}^-\cup\overline{\mathcal{H}}^-)\big)
\end{equation}
as depicted below\footnote{We employ in this paper the standard notation of
Lorentzian geometry (e.g.~$J^+$, $J^-$, etc.), and Penrose diagrams. See~\cite{he:lssst}.}
\[
\begin{picture}(0,0)%
\includegraphics{desitnewwo.pstex}%
\end{picture}%
\setlength{\unitlength}{2763sp}%
\begingroup\makeatletter\ifx\SetFigFont\undefined%
\gdef\SetFigFont#1#2#3#4#5{%
  \reset@font\fontsize{#1}{#2pt}%
  \fontfamily{#3}\fontseries{#4}\fontshape{#5}%
  \selectfont}%
\fi\endgroup%
\begin{picture}(3549,2164)(2914,-4694)
\put(5626,-2686){\makebox(0,0)[lb]{\smash{{\SetFigFont{8}{9.6}{\rmdefault}{\mddefault}{\updefault}{\color[rgb]{0,0,0}$r=\infty$}%
}}}}
\put(5701,-4636){\makebox(0,0)[lb]{\smash{{\SetFigFont{8}{9.6}{\rmdefault}{\mddefault}{\updefault}{\color[rgb]{0,0,0}$r=\infty$}%
}}}}
\put(3991,-2821){\makebox(0,0)[lb]{\smash{{\SetFigFont{8}{9.6}{\rmdefault}{\mddefault}{\updefault}{\color[rgb]{0,0,0}$r=0$}%
}}}}
\put(3991,-4486){\makebox(0,0)[lb]{\smash{{\SetFigFont{8}{9.6}{\rmdefault}{\mddefault}{\updefault}{\color[rgb]{0,0,0}$r=0$}%
}}}}
\put(4276,-3961){\rotatebox{315.0}{\makebox(0,0)[lb]{\smash{{\SetFigFont{8}{9.6}{\rmdefault}{\mddefault}{\updefault}{\color[rgb]{0,0,0}$\mathcal{H}^-$}%
}}}}}
\put(3526,-3661){\makebox(0,0)[lb]{\smash{{\SetFigFont{8}{9.6}{\rmdefault}{\mddefault}{\updefault}{\color[rgb]{0,0,0}$\Sigma$}%
}}}}
\put(5476,-3061){\rotatebox{315.0}{\makebox(0,0)[lb]{\smash{{\SetFigFont{8}{9.6}{\rmdefault}{\mddefault}{\updefault}{\color[rgb]{0,0,0}$\overline{\mathcal{H}}^+$}%
}}}}}
\put(4351,-3361){\rotatebox{45.0}{\makebox(0,0)[lb]{\smash{{\SetFigFont{8}{9.6}{\rmdefault}{\mddefault}{\updefault}{\color[rgb]{0,0,0}$\mathcal{H}^+$}%
}}}}}
\put(4951,-3586){\makebox(0,0)[lb]{\smash{{\SetFigFont{8}{9.6}{\rmdefault}{\mddefault}{\updefault}{\color[rgb]{0,0,0}$\mathcal{D}$}%
}}}}
\put(5626,-4186){\rotatebox{45.0}{\makebox(0,0)[lb]{\smash{{\SetFigFont{8}{9.6}{\rmdefault}{\mddefault}{\updefault}{\color[rgb]{0,0,0}$\overline{\mathcal{H}}^-$}%
}}}}}
\end{picture}%

\]
By causality, the global behaviour of $\phi$ in $\mathcal{D}$ can be understood independently
of the behaviour near $r=0$ and $r=\infty$. The behaviour in say
$\mathcal{D}\cap J^+(\Sigma)$ 
is completely
determined by the behaviour of appropriate initial data 
on $\Sigma\cap J^-(\mathcal{D})$.\footnote{Note, as depicted, that $\Sigma\cap J^-(\mathcal{D})$ is
not necessarily $\Sigma \cap \mathcal{D}$.}
We review briefly in the next paragraph
the solvability and domain of dependence property
for the initial value problem for $(\ref{waveeq})$.

Let $\Sigma\subset \mathcal{M}$ be a smooth Cauchy surface  and
let $n^\mu$ denote the future-directed unit normal of $\Sigma$.
For $s\ge 1$, let $\varphi$ be an $H^s_{\rm loc}(\Sigma)$ function
and 
$\dot\varphi:\Sigma \to \mathbb R$ an $H^{s-1}_{\rm loc}(\Sigma)$ function.
Then 
there exists a unique global solution $\phi:\mathcal{M}\to \mathbb R$ of $\Box_g\phi=0$
such that for all smooth spacelike hypersurfaces $\tilde{\Sigma}$ with future directed unit normal
$\tilde{n}$, $\phi|_{\Sigma'}\in H^{s}_{\rm loc}$, $(\tilde{n}\phi)|_{\Sigma'}
\in H^{s-1}_{\rm loc}$, and $\phi|_{\Sigma}=\varphi$, $n\phi|_{\Sigma}=\dot\varphi$.
Moreover, if
$K\subset \mathcal{M}$ is closed and 
$\phi_1$, $\phi_2$ are two such solutions corresponding to data
$(\varphi_1,\dot\varphi_1)$, $(\varphi_2,\dot\varphi_2)$ such that
$\varphi_1|_{\Sigma'\cap K}=\varphi_2|_{\Sigma'\cap K}$,
$\dot\varphi_1|_{\Sigma'\cap K}=\dot\varphi_2|_{\Sigma'\cap K}$, then
$\phi_1=\phi_2$ on $\mathcal{M}\setminus (J^+(\Sigma\setminus K)\cup J^-(\Sigma\setminus
K))$. 
In particular, setting $K=J^-(\mathcal{D})$, we obtain that $\phi_1=\phi_2$
on $J^+(\Sigma)\cap \mathcal{D}$.

\subsection{The main theorem}
\subsubsection{Norms on initial data}
Since our results will be quantitative, 
we need to introduce relevant norms on the compact manifold
with boundary $\Sigma\cap J^-(\mathcal{D})$. 
Let $\|\cdot\|$ denote the Riemannian $L^2$ norm
on $\Sigma\cap J^-(\mathcal{D})$.  This induces a norm on sections of 
the tangent bundle, a norm we will denote also by $\|\cdot\|$. If 
$\varphi\in H^1_{\rm loc}(\Sigma)$, 
 $\dot\varphi\in L^2_{\rm loc}(\Sigma)$, then
 let us denote by $\phi$ the unique solution of $\Box_g\phi=0$ corresponding
 to initial data $(\varphi,\dot\varphi)$. 

Let us define now for all real $s\ge 0$
the quantity
\begin{eqnarray}
\label{altydef}
{\bf E}_s(\varphi,\dot\varphi)
			&\doteq&			\|\nabla_{\Sigma}\varphi\|^2+ \|\dot\varphi\|^2+\sum_{\ell\ge 1}
			 r^{2s}
				\ell^{2s} ||\nabla_{\Sigma}\phi_\ell ||^2+r^{2s} \ell^{2s} ||n\phi_{\ell}||^2,
\end{eqnarray}
where $\phi_\ell$ denotes the projection of $\phi$ to the $\ell$'th eigenspace of
$\Trb$, i.e.~the $\ell$'th spherical harmonic of $\phi$. 
The function $r$ is discussed in Section~\ref{cordsec}. If 
$\Sigma$ itself is spherically symmetric, then we may replace $\phi_\ell$, $n\phi_\ell$
be $\varphi_\ell$ and $\dot\varphi_\ell$, and the above expression is a sum of
integrals on initial data.
For general $\Sigma$, 
a sufficient condition for the finiteness of $(\ref{altydef})$ is
that $\varphi\in H^{s+1}_{\rm loc}(\Sigma)$,
$\dot\varphi\in H^{s}_{\rm loc}(\Sigma)$.

In the case $m\ge 0$ an integer, we can characterize ${\bf E}_m$
geometrically as follows.
Let $\Omega_i$, $i=1,\ldots,3$ denote a basis of Killing fields generating
the Lie algebra $so(3)$ associated to the spherical symmetry of $(\mathcal{M},g)$. 
We call $\Omega_i$ \emph{angular momentum operators}. It easily follows that
\begin{eqnarray*}
{\bf E}_m(\varphi,\dot\varphi)	\sim
								\sum_{p_1,\ldots p_{m-1}=0,1}\sum_{1\le i_1, \ldots
								i_{m-1}\le 3}&\|\nabla_\Sigma (\Omega_{i_1}^{p_1}
								\cdots\Omega_{i_{m-1}}^{p_{m-1}}\phi)\|^2\\
						&\hbox{}+
								\|n(\Omega_{i_1}^{p_1}
								\cdots\Omega_{i_{m-1}}^{p_{m-1}}\phi) \|^2.
\end{eqnarray*}
Again, if $\Sigma$ itself
is spherically symmetric, we may replace $\phi$ with $\varphi$ in the first
term, and remove the $n$ from the second, replacing $\phi$ with $\dot\varphi$.

\subsubsection{First statement of the theorem}
The main result of this paper is contained in the following
\begin{theorem}
\label{gendata}
Let $(\mathcal{M},g)$ denote the Schwarzschild-de Sitter spacetime with parameter
$M$ and cosmological constant $\Lambda$ satisfying $(\ref{NEBH})$ and let
$\Sigma$ be a Cauchy surface for $\mathcal{M}$.
Let $\mathcal{D}\subset \mathcal{M}$ denote a region
as defined in $(\ref{Ddef})$ and let $s\ge 0$.
Then, there exist constants $C_s$ depending only on $s$, $M$, $\Lambda$,  
and the geometry of $\Sigma \cap J^-(\mathcal{D})$ such that
for all solutions $\phi$ of the wave equation $\Box_g\phi=0$ on
$\mathcal{M}$ such that ${\bf E}_s(\varphi, \dot\varphi)$ is finite, where
$\varphi\doteq\phi|_\Sigma$, $\dot\varphi \doteq n\phi|_{\Sigma}$,
and for 
all achronal hypersurfaces $\Sigma'\subset\mathcal{D}\cap J^+(\Sigma')$,
the bound
\begin{equation}
\label{fbxwris}
\int_{\Sigma'}T_{\mu\nu}(\phi)T^\mu n^\nu \le C_s \,
{\bf E}_s(\varphi, \dot\varphi)(v_+(\Sigma')^{-s}+u_+(\Sigma')^{-s})
\end{equation}
holds, 
where $u$ and $v$ denote fixed Eddington-Finkelstein 
advanced and retarded coordinates\footnote{See
Section~\ref{cordsec}. Although these coordinates are only defined in $\mathcal{D}^o$,
the statements $(\ref{fbxwris})$, $(\ref{ptwise1})$ can be interpreted in all of $\mathcal{D}$ in view of
conventions $(\ref{conve1})$--$(\ref{conve4})$.},
$u_+\doteq\max \{u,1\}$, $u_+(\Sigma)\doteq\inf_{x\in\Sigma} u_+(x)$, etc, 
$T^\mu$ denotes the Killing field coinciding in the interior of $\mathcal{D}$ with 
$\frac{\partial}{\partial t}$, $T_{\mu\nu}(\phi)$ denotes the standard
energy-momentum tensor, and $n^\nu$ is the future-directed
unit normal wherever $\Sigma'$ is spacelike, in which case the integral is
taken with measure the induced volume form.\footnote{A correct interpretation of
$n^\mu$ and the measure of integration for general achronal $\Sigma'$
can be derived by a limiting procedure.}

In addition, $(\ref{fbxwris})$ 
holds if  $T^\mu$ is replaced by
the vector field $N^\mu$ defined in Section~\ref{SMEsec}.
If  $s>1$, then the pointwise bound 
\begin{equation}
\label{ptwise1}
|\phi-\underline\phi|\le C_{s}\,{\bf E}^{\frac 12}_s(\varphi, \dot\varphi)\Big(v_+^{\frac{-s+1}2}+u_+^{\frac{-s+1}2}\Big)
\end{equation}
holds
in $J^+(\Sigma)\cap \mathcal{D}$,
where $\underline\phi$ is a constant satisfying
\[
|\underline{\phi}| \le  \sup_{x\in\Sigma}|\phi(x)| + C_0\, {\bf E}^{\frac 12}_0(\varphi, \dot\varphi).
\]
\end{theorem}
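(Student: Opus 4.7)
The plan is to use the vector-field method, in a manner paralleling our Schwarzschild analysis~\cite{dr3}, exploiting that the region $\mathcal{D}$ has compact spatial sections and two non-degenerate bifurcate horizons. The $s=0$ case of $(\ref{fbxwris})$ is essentially conservation of the $T$-energy: the Killing vector $T$ is timelike in $\mathcal{D}^o$ and null on the horizons, and the divergence theorem applied to $T_{\mu\nu}(\phi)T^\nu$ between $\Sigma\cap J^-(\mathcal{D})$ and $\Sigma'$ gives a flux identity controlling $\int_{\Sigma'}T_{\mu\nu}T^\mu n^\nu$ by the analogous quantity on $\Sigma$. To replace $T$ by the $N$ of Section~\ref{SMEsec}, which is strictly timelike across both horizons, I would exploit the strict positivity of the surface gravities of $\mathcal{H}^+$ and $\overline{\mathcal{H}}^+$, ensured by non-extremality~$(\ref{NEBH})$, to obtain a red-shift identity in which the bulk term $\nabla^\mu(T_{\mu\nu}(\phi)N^\nu)$ is nonnegative in one-sided neighbourhoods of each horizon and controlled by $T$-quantities elsewhere; this upgrades $T$-control to $N$-control modulo a small spacetime integral.

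The crux of the theorem lies in quantitative decay, for which the key ingredient is an \emph{integrated local energy decay} (Morawetz) estimate of the schematic form
\[
\int_{\mathcal{R}(\tau_1,\tau_2)} \chi\bigl(|\nabb\phi|^2 + |T\phi|^2 + \phi^2\bigr) + (1-\chi)|\partial_{r^*}\phi|^2 \, dV \;\le\; C\, E_N[\phi](\tau_1),
\]
where $\mathcal{R}(\tau_1,\tau_2)$ is a spacetime slab between two $T$-translates of a reference achronal hypersurface interpolating between $\mathcal{H}^+$ and $\overline{\mathcal{H}}^+$, and the cutoff $\chi$ vanishes at the \emph{photon sphere} $r=r_{\rm ph}$, the unique critical point in $\mathcal{D}^o$ of $(1-2M/r-\Lambda r^2/3)/r^2$, at which null geodesics are trapped. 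One proves this by the divergence theorem applied to a multiplier $f(r^*)\partial_{r^*}+\,\text{l.o.t.}$, with $f$ changing sign at $r_{\rm ph}$; in SdS the required coercivity and boundary signs can be checked directly from the explicit potential. The forced degeneration of the bulk at $r_{\rm ph}$ is the \emph{principal obstacle}: it is unavoidable by general trapping obstructions. To overcome it, I would observe that every trapped null geodesic is tangent to a sphere of symmetry, so the angular momentum operators $\Omega_i$, which commute with $\Box_g$, generate precisely the lost directions; combining the Morawetz estimate applied to $\phi$ with the same estimate applied to $\Omega\phi$ restores non-degenerate integrated decay at the price of loss of one angular derivative.

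Once non-degenerate integrated decay is established, a dyadic pigeonhole argument in each slab $\mathcal{R}(\tau,2\tau)$ selects a level set of good $N$-energy, from which the red-shift estimate propagates control forward, yielding $E_N[\phi](\tau)\le C\tau^{-1}{\bf E}_1(\varphi,\dot\varphi)$. Iterating the scheme $s$ times, with one further $\Omega_i$-commutation at each step and matching the commuted norms to those appearing in ${\bf E}_s$, gives the rate $\tau^{-s}$. To recast the estimate in the stated form with $v_+(\Sigma')^{-s}+u_+(\Sigma')^{-s}$, I would exploit the discrete isometry of the Penrose diagram that exchanges $u\leftrightarrow v$ and $\mathcal{H}^+\leftrightarrow \overline{\mathcal{H}}^+$ (equivalently, use foliations adapted to each horizon): in the interior of $\mathcal{D}$ both $u_+$ and $v_+$ are comparable to $\tau$, while near each horizon the decay in the corresponding null coordinate is governed by the approach to that surface.

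For the pointwise bound $(\ref{ptwise1})$ when $s>1$, decompose $\phi=\phi_0+\phi_{\ge 1}$ into spherical harmonics. For $\phi_{\ge 1}$, the Poincar\'e inequality on the spheres of symmetry together with Sobolev embedding on $\mathbb{S}^2$, followed by a one-dimensional Sobolev embedding along integral curves of $T$, converts the $E_N$-decay at rate $\tau^{-s}$ into a pointwise bound at rate $\tau^{(-s+1)/2}$; this is the reason $s>1$ is needed. The spherically symmetric mode $\phi_0$ satisfies a reduced $1+1$ wave equation on $\mathcal{D}/SO(3)$; integrating $T\phi_0$ along a future-directed timelike curve from $\Sigma$ and using the decay of $E_N[\phi_0]$ produces a constant $\underline{\phi}$ to which $\phi_0$ converges uniformly at the same rate, with the bound $|\underline{\phi}|\le\sup_\Sigma|\phi|+C_0{\bf E}_0^{1/2}$ following from the fundamental theorem of calculus applied to a timelike curve originating on $\Sigma$.
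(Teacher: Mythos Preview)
Your proposal is essentially correct and assembles the same ingredients as the paper: a Morawetz-type current degenerating at the photon sphere $r=3M$, red-shift currents near both horizons, and commutation with the angular momentum operators $\Omega_i$ to cure the trapping degeneration. Where you genuinely diverge from the paper is in how decay is extracted from these estimates. The paper does \emph{not} iterate a pigeonhole argument with one extra $\Omega$-commutation per step. Instead, it decomposes $\phi=\sum_\ell\phi_\ell$ into spherical harmonics from the outset (indeed, the Morawetz current $J^{X,3}$ is itself built harmonic-by-harmonic, with the multiplier $f_\ell$ vanishing at an $\ell$-dependent radius $r_{h_\ell}\to 3M$), observes that for a fixed mode the loss in the hierarchy estimate $(\ref{MEnew})$ is exactly a factor of $\ell^2$, and then feeds this into an ODE-type lemma (Lemma~\ref{calculus}) to obtain \emph{exponential} decay ${\bf Z}^N_{\phi_\ell}(t)\le C\,{\bf Z}^N_{\phi_\ell}(0)\,e^{-2ct/\ell^2}$ for each mode. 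Theorem~\ref{gendata} then follows by summing over $\ell$ via the elementary bound $e^{-2ct/\ell^2}\le C_s(\ell^2/t)^s$, which produces precisely the weights defining ${\bf E}_s$.

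Your direct iteration scheme is a valid alternative and is in fact closer in spirit to later ``black-box'' decay arguments; it proves Theorem~\ref{gendata} without passing through Theorem~\ref{second}. The paper's route, by contrast, yields Theorem~\ref{second} as the primary statement, with its sharper per-mode exponential rate, and obtains Theorem~\ref{gendata} only as a corollary. One caution: your description of the Morawetz multiplier as a single $f(r^*)$ vanishing at $r_{\rm ph}$ slightly oversimplifies what the paper actually does, since in Schwarzschild--de Sitter the zeroth-order terms in $K^{V,3}$ force the vanishing radius to depend on $\ell$; a uniform choice of $f$ would require additional care with these lower-order terms.
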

{\bf
In particular, the theorem applies to arbitrary smooth initial data $\varphi\in C^\infty(\Sigma)$, $\dot\varphi\in
C^\infty(\Sigma)$, where $s$ can be taken arbitrarily large.}
There are no unphysical
assumptions regarding vanishing of $\phi$ at the \emph{sphere of bifurcation} of the horizons,
i.e.~at the sets
$\mathcal{H}^+\cap \mathcal{H}^-$ and $\overline{\mathcal{H}}^+\cap \overline{\mathcal{H}}^-$.
The decay rates $(\ref{ptwise1})$, $(\ref{fluxbound})$ are uniform, i.e.~they
hold up to and including the horizons,
setting $u_+=\infty$ or $v_+=\infty$. 
In particular, $\Sigma'$
in $(\ref{fluxbound})$ can be taken (as depicted below)
\[
\begin{picture}(0,0)%
\includegraphics{desitnew.pstex}%
\end{picture}%
\setlength{\unitlength}{2763sp}%
\begingroup\makeatletter\ifx\SetFigFont\undefined%
\gdef\SetFigFont#1#2#3#4#5{%
  \reset@font\fontsize{#1}{#2pt}%
  \fontfamily{#3}\fontseries{#4}\fontshape{#5}%
  \selectfont}%
\fi\endgroup%
\begin{picture}(3549,2164)(2914,-4694)
\put(5701,-4636){\makebox(0,0)[lb]{\smash{{\SetFigFont{8}{9.6}{\rmdefault}{\mddefault}{\updefault}{\color[rgb]{0,0,0}$r=\infty$}%
}}}}
\put(5626,-2686){\makebox(0,0)[lb]{\smash{{\SetFigFont{8}{9.6}{\rmdefault}{\mddefault}{\updefault}{\color[rgb]{0,0,0}$r=\infty$}%
}}}}
\put(3991,-4486){\makebox(0,0)[lb]{\smash{{\SetFigFont{8}{9.6}{\rmdefault}{\mddefault}{\updefault}{\color[rgb]{0,0,0}$r=0$}%
}}}}
\put(3991,-2821){\makebox(0,0)[lb]{\smash{{\SetFigFont{8}{9.6}{\rmdefault}{\mddefault}{\updefault}{\color[rgb]{0,0,0}$r=0$}%
}}}}
\put(4276,-3961){\rotatebox{315.0}{\makebox(0,0)[lb]{\smash{{\SetFigFont{8}{9.6}{\rmdefault}{\mddefault}{\updefault}{\color[rgb]{0,0,0}$\mathcal{H}^-$}%
}}}}}
\put(3526,-3661){\makebox(0,0)[lb]{\smash{{\SetFigFont{8}{9.6}{\rmdefault}{\mddefault}{\updefault}{\color[rgb]{0,0,0}$\Sigma$}%
}}}}
\put(5476,-3061){\rotatebox{315.0}{\makebox(0,0)[lb]{\smash{{\SetFigFont{8}{9.6}{\rmdefault}{\mddefault}{\updefault}{\color[rgb]{0,0,0}$\overline{\mathcal{H}}^+$}%
}}}}}
\put(4351,-3361){\rotatebox{45.0}{\makebox(0,0)[lb]{\smash{{\SetFigFont{8}{9.6}{\rmdefault}{\mddefault}{\updefault}{\color[rgb]{0,0,0}$\mathcal{H}^+$}%
}}}}}
\put(4951,-3586){\makebox(0,0)[lb]{\smash{{\SetFigFont{8}{9.6}{\rmdefault}{\mddefault}{\updefault}{\color[rgb]{0,0,0}$\mathcal{D}$}%
}}}}
\put(4726,-3211){\makebox(0,0)[lb]{\smash{{\SetFigFont{8}{9.6}{\rmdefault}{\mddefault}{\updefault}{\color[rgb]{0,0,0}$\Sigma'$}%
}}}}
\put(5626,-4186){\rotatebox{45.0}{\makebox(0,0)[lb]{\smash{{\SetFigFont{8}{9.6}{\rmdefault}{\mddefault}{\updefault}{\color[rgb]{0,0,0}$\overline{\mathcal{H}}^-$}%
}}}}}
\end{picture}%

\]
to contain subsets of $\mathcal{H}^+$ 
and/or
$\overline{\mathcal{H}}^+$.

\subsubsection{Comparison with the Schwarzschild case}
The statement of Theorem~\ref{gendata}
should be compared with the results of our previous~\cite{dr1, dr3}
concerning the wave equation on a Schwarzschild exterior.

Recall that in the region $r>2M$, the Schwarzschild metric is given by
the expression  $(\ref{original})$ for $\Lambda=0$, $M>0$.
The Penrose diagramme of the closure of this region in the maximally 
extended Schwarzschild spacetime is given below:
\[
\begin{picture}(0,0)%
\includegraphics{schwarzschildother2.pstex}%
\end{picture}%
\setlength{\unitlength}{2960sp}%
\begingroup\makeatletter\ifx\SetFigFont\undefined%
\gdef\SetFigFont#1#2#3#4#5{%
  \reset@font\fontsize{#1}{#2pt}%
  \fontfamily{#3}\fontseries{#4}\fontshape{#5}%
  \selectfont}%
\fi\endgroup%
\begin{picture}(2632,2325)(5134,-6100)
\put(6258,-3907){\makebox(0,0)[lb]{\smash{{\SetFigFont{9}{10.8}{\rmdefault}{\mddefault}{\updefault}{\color[rgb]{0,0,0}$i^+$}%
}}}}
\put(7351,-5011){\makebox(0,0)[lb]{\smash{{\SetFigFont{9}{10.8}{\rmdefault}{\mddefault}{\updefault}{\color[rgb]{0,0,0}$i_0$}%
}}}}
\put(5401,-4636){\rotatebox{45.0}{\makebox(0,0)[lb]{\smash{{\SetFigFont{9}{10.8}{\rmdefault}{\mddefault}{\updefault}{\color[rgb]{0,0,0}$\mathcal{H}^+$}%
}}}}}
\put(5176,-5311){\rotatebox{315.0}{\makebox(0,0)[lb]{\smash{{\SetFigFont{9}{10.8}{\rmdefault}{\mddefault}{\updefault}{\color[rgb]{0,0,0}$\mathcal{H}^-$}%
}}}}}
\put(6676,-4336){\makebox(0,0)[lb]{\smash{{\SetFigFont{9}{10.8}{\rmdefault}{\mddefault}{\updefault}{\color[rgb]{0,0,0}$\mathcal{I}^+$}%
}}}}
\put(6826,-5686){\makebox(0,0)[lb]{\smash{{\SetFigFont{9}{10.8}{\rmdefault}{\mddefault}{\updefault}{\color[rgb]{0,0,0}$\mathcal{I}^-$}%
}}}}
\end{picture}%

\]

In~\cite{dr1}, an analogue of $(\ref{fbxwris})$ is proven for all $s<6$ and spherically
symmetric initial data. Modulo an $\epsilon$,
this result is expected to be sharp, as it
not expected to be true for $s>6$, in view of heuristic arguments due to 
Price~\cite{rpr:ns}.

In~\cite{dr3}, an analogue of $(\ref{fbxwris})$ is proven for $s=2$ for arbitrary,
not necessarily spherically symmetric, initial data.

In view of the fact that solutions of the wave equation vanish on $\mathcal{I}^+$,
the results of~\cite{dr3} allow one to obtain the uniform pointwise decay rate
$|\phi|\le Cv_+^{-1}$. As a uniform decay bound in $v$,
this decay rate is in fact sharp.

\subsubsection{Second statement of the theorem}
The loss of angular derivatives in the result of Theorem~\ref{gendata}
can be more precisely quantified by decomposing $\phi$ into spherical harmonics.
Each spherical harmonic $\phi_{\ell}$ decays at least exponentially, but the bound on the
exponential rate
obtained here
\emph{decreases} inverse quadratically in the spherical harmonic number.
We have the following
\begin{theorem}
\label{second}
Let $(\mathcal{M},g)$, $\Sigma$, $\mathcal{D}$ be as in Theorem~\ref{gendata}.
Then there exists a constant $c$ depending only on $M$ and $\Lambda$,
and $C_0$ depending only on $M$, $\Lambda$, and
the geometry of $\Sigma\cap J^-(\mathcal{D})$, such that for all
$\phi_{\ell}$  
solutions of the wave equation on $\mathcal{M}$ with spherical harmonic number
$\ell$ with
\[
{\bf E}_0(\varphi_{\ell},\dot\varphi_{\ell}) 
= \|\nabla \varphi_{\ell} \|^2 + \|\dot\varphi_{\ell}\|^2 <\infty
\] 
and all achronal
hypersurfaces $\Sigma'\subset J^+(\Sigma)\cap \mathcal{D}$, 
the bound
\begin{equation}
\label{fluxbound}
\int_{\Sigma'}T_{\mu\nu}(\phi_\ell)T^\mu n^\nu \le C_0\, {\bf E}_0(\varphi_{\ell},\dot\varphi_{\ell}) 
\Big (e^{-2cv_+(\Sigma')/\ell^2}+e^{-2cu_+(\Sigma')/\ell^2}\Big)
\end{equation}
holds 
for all $\ell \ge 0$,
and, again as before, also
 with $T^\mu$ above replaced with $N^\mu$ defined in Section~\ref{SMEsec}.
 In addition, the pointwise bounds
\[
|\phi_{\ell}(u,v)|\le C_0\, {\bf E}^{\frac 12}_0(\varphi_{\ell},\dot\varphi_{\ell})  (e^{-cv_+/\ell^2}+e^{-cu_+/\ell^2})
\]
for $\ell\ge 1$, and
\[
|\phi_0(u,v)-\underline\phi|\le C_0\,{\bf E}^{\frac 12}_0(\varphi_{0},\dot\varphi_{0}) (e^{-cv_+/\ell^2}+e^{-cu_+/\ell^2}),
\]
for $\ell = 0$, hold in $J^+(\Sigma)\cap \mathcal{D}$,
where $\underline\phi$ is a constant satisfying
\[
|\underline\phi|\le \inf_{x\in \Sigma}|\phi_0(x)| + C_0\, {\bf E}^{\frac 12}_0(\varphi_0,\dot\varphi_0).
\]
\end{theorem}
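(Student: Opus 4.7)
The strategy is to reduce to each fixed spherical harmonic $\phi_\ell$, establish an integrated local energy decay (ILED) with an $\ell^2$--loss coming from trapping at the photon sphere, and then convert this to exponential decay in time by pigeonhole/iteration. After separating variables, $\phi_\ell$ satisfies a $1+1$--dimensional wave equation on the quotient $\mathcal{Q}\cap\mathcal{D}$ with an effective potential $V_\ell$ carrying an $\ell(\ell+1)/r^2$ term from the angular Laplacian $\Trb$. The Killing field $T=\partial/\partial t$ and the red-shift field $N$ from Section~\ref{SMEsec} supply two currents whose associated energies $E_T(\phi_\ell)$ and $E_N(\phi_\ell)$ are controlled on any $\Sigma'\subset\mathcal{D}\cap J^+(\Sigma)$; $E_N$ is moreover coercive up to and including the future horizons $\mathcal{H}^+\cup\overline{\mathcal{H}}^+$ and bounded in terms of its initial value.

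The analytic core is an estimate of the form
\[
\int_{\tau_1}^{\tau_2} E_N(\phi_\ell)(\tau)\,d\tau \;\le\; C\,\ell^2\, E_N(\phi_\ell)(\tau_1),
\]
produced by a Morawetz-type multiplier $X=f(r_*)\partial_{r_*}$ with $f$ chosen to vanish simple-linearly at the photon sphere $r=r_{\mathrm{ph}}$ of Schwarzschild--de Sitter. The resulting spacetime current is coercive in all derivatives away from $r_{\mathrm{ph}}$ but degenerate in the $T\phi_\ell$--direction at the trap. To recover $T\phi_\ell$ near $r_{\mathrm{ph}}$, one uses the wave equation to trade $T^2\phi_\ell$ for $\partial_{r_*}^2\phi_\ell+\Trb\phi_\ell+\mathrm{l.o.t.}$ and the identity $\Trb\phi_\ell=-\ell(\ell+1)r^{-2}\phi_\ell$; inverting the last relation on the $\ell$--eigenspace amounts to multiplication by $-r^2/\ell(\ell+1)$, which is exactly where the $\ell^2$ factor enters. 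Boundary fluxes at the horizons are either of the favourable sign or are absorbed via the red-shift identity for $N$.

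With the ILED in hand, a pigeonhole argument on any time interval of length $T=2C\ell^2$ produces a slice on which $E_N(\phi_\ell)$ has halved; iteration gives $E_N(\phi_\ell)(\tau)\le C\,e^{-c\tau/\ell^2}E_N(\phi_\ell)(\tau_1)$, which immediately implies the $N$-version of $(\ref{fluxbound})$ and, a fortiori, the $T$-version. Decay simultaneously in $v_+$ and $u_+$ is obtained by running the argument on two foliations adapted respectively to the two null coordinates — equivalently, by exploiting the discrete time-reversal symmetry of $\mathcal{D}$ between the pair of future horizons $\mathcal{H}^+$ and $\overline{\mathcal{H}}^+$ — yielding the sum of the two exponentials. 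Pointwise bounds follow by commuting $(\ref{waveeq})$ with the Killing fields $T, \Omega_1,\Omega_2,\Omega_3$ (which preserve the spherical-harmonic decomposition), applying the energy decay to the commuted solution to control a finite number of higher derivatives, and then invoking a Sobolev trace inequality on the compact slice $\Sigma'\cap\mathcal{D}$; any polynomial $\ell$-factor arising from the $L^\infty$-vs-$L^2$ comparison of spherical harmonics is absorbed into the exponential by shrinking $c$ slightly.

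For $\ell=0$ there is no angular momentum barrier and the multiplier $X$ can be arranged to be coercive throughout $\mathcal{D}$, so the ILED holds with no $\ell$-loss and the decay rate is a fixed constant; the constant $\underline\phi$ reflects the kernel generated by $\phi\equiv\mathrm{const}$ and is identified as the limit of spherical means of $\phi_0$, its size being bounded through the Killing energy conservation and the classical maximum principle for $(\ref{waveeq})$ in the compact region, which together with the sup-bound on $\Sigma$ gives the stated estimate on $|\underline\phi|$. The main obstacle is the construction of the Morawetz multiplier implementing the normally hyperbolic trapping at $r=r_{\mathrm{ph}}$ with the precise $\ell^2$-loss and constants uniform in $\ell$; this loss is the mechanism producing the inverse-quadratic degradation of the decay rate asserted in Theorem~\ref{second}.
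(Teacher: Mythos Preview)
Your overall architecture --- Morawetz current based on $X=f(r^*)\partial_{r^*}$ vanishing at the photon sphere, red-shift currents near the two horizons, then pigeonhole/iteration to exponential decay with rate $c/\ell^2$ --- matches the paper's. But your account of where the $\ell^2$ loss comes from is backwards and, as written, would not produce the claimed ILED. Inverting $\Trb$ on the $\ell$-eigenspace multiplies by $-r^2/\ell(\ell+1)$, a \emph{small} factor for large $\ell$; that is a gain, not a loss, and in any case one cannot recover $T\phi_\ell$ from $T^2\phi_\ell$ via the equation without an additional integration. In the paper the Morawetz identity already gives $\int_{\mathcal R}\big((\partial_{r^*}\phi_\ell)^2 + (r-3M)^2|\nabb\phi_\ell|^2\big) \le C\,{\bf Z}^N_{\phi_\ell}(t_1)$ with \emph{no} $\ell$-loss; the $(r-3M)^2$ degeneracy is then removed by applying the same estimate to $\Omega_i\phi_\ell$, which yields $\int\eta_1|\partial_{r^*}\nabb\phi_\ell|^2\le C\sum_i{\bf Z}^N_{\Omega_i\phi_\ell}(t_1)\sim C\,\ell^2\,{\bf Z}^N_{\phi_\ell}(t_1)$, after which a one-dimensional Poincar\'e inequality in $r^*$ gives nondegenerate control of $\int\eta_2|\nabb\phi_\ell|^2$. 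The $\ell^2$ is thus the initial-data cost of the angular commutation, not an elliptic inversion. Recovery of $(T\phi_\ell)^2$ near $r=3M$ is carried out via separate auxiliary currents (the paper's $J^{X^b,0}$, $J^{X^c,0}$, $J^{X^d,4}$), not directly from the wave equation.

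A secondary gap concerns the iteration: after pigeonhole produces a slice on which $E_N$ is small, $E_N$ is not monotone in $t$, so you cannot simply iterate. The paper supplies the missing almost-monotonicity ${\bf Z}^N_{\phi_\ell}(t_2)\le C\big({\bf Z}^N_{\phi_\ell}(t_1)+{\bf Q}_{\phi_\ell}(t_1,t_2)\big)$ from the red-shift current, and packages the iteration as a two-function lemma (Lemma~\ref{calculus}) in the pair $({\bf Z}^{\tilde N,P}_{\phi_\ell},{\bf Z}^N_{\phi_\ell})$ rather than a direct halving argument for $E_N$ alone.
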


The above theorem can easily be seen to imply Theorem~\ref{gendata}.

\subsection{Overview of the proof}
In this paper, we insist on a framework of proof that in principle may have 
relevance to the non-linear stability problem, that is to say, the 
problem of the dynamics of $(\ref{Eeq})$ starting from initial data close
to those induced on a Cauchy hypersurface $\Sigma$ of Schwarzschild-de Sitter.
This leads us to try to exploit compatible currents. In this section, we will describe this
general approach, and the natural relation of the currents we will define with
various geometric and analytical aspects of the problem at hand.

\subsubsection{Vector fields and compatible currents}
For quasilinear hyperbolic systems (like $(\ref{Eeq})$) in $3+1$ dimensions,
all known techniques for studying the global dynamics 
are based on $L^2$ estimates. In the Lagrangian case,
the origin of such 
estimates can be understood geometrically in terms of \emph{compatible currents}
(see Christodoulou~\cite{book2}). These are $1$-forms $J_\mu$
such that  at each point  $x\in \mathcal{M}$,
both $J_\mu$ and the divergence $K= \nabla^\mu J_\mu$
depend only on the $1$-jet of $\phi$.
An important class of these are the currents $J^V_\mu$
obtained by contracting the energy momentum tensor $T_{\mu\nu}$ with an arbitrary
vector field $V^{\mu}$.  See Section~\ref{CCsec} for a discussion in the context
of the linear wave equations $\Box_g\phi=0$ studied here.
All estimates in this paper are obtained by exploiting the integral identities
\begin{equation}
\label{introid}
\int_\mathcal{R} K = \int_{\partial \mathcal{R}} J_\mu n^\mu
\end{equation}
corresponding to
compatible currents
of the form $J^V_\mu$ and straightforward modifications thereof $J_\mu=J^V_\mu+\cdots$, 
where the vector fields $V$ are directly related
to the geometry of the problem, and the region $\mathcal{R}$ is suitably chosen.


\subsubsection{The photon sphere and the currents $J^X_\mu$}
\label{pssec}
The timelike hypersurface $r=3M$ is known as the \emph{photon sphere}.
This has the ominous property
of being spanned by null geodesics. If additional regularity is not imposed, then
it is clear by a geometric optics approximation that
solutions of the wave equation can concentrate their energy 
along such geodesics for arbitrary long times, and one can thus not achieve
a quantitative bound for the rate of decay in terms of initial energy alone. 
In particular, $(\ref{fbxwris})$ cannot
hold for $s>0$ if ${\bf E}_s$ is replaced by ${\bf E}_0$.

It is truly remarkable that this obstruction arising
from geometrical optics is captured, and quantified, 
by a current $J_\mu$ associated to a vector field $V$ of the form $f(r^*)\partial_{r^*}$
for a well-chosen function $f$.\footnote{This insight, in the case of the wave equation on
the Schwarzschild solution,
is originally due to Blue and Soffer~\cite{bs1}. See, however,~\cite{bs2}.}
The story is not entirely straightforward, however. 
The desired current is in fact not precisely of the form
$J^V_\mu$, but a modification thereof, to be denoted $J^{X,3}_\mu$,
which is associated in a well defined way to a collection of vector fields
$X_\ell=f_\ell(r^*)\partial_{r^*}$. The current is defined by summing over
currents $J^{X_\ell,3}$ which act on individual
spherical harmonics $\phi_\ell$.

The current $J^{X,3}$ yields a nonnegative $K^{X,3}$, modulo an error term 
supported near the horizons. In a first approximation, we may pretend that in fact
$K^{X,3}\ge 0$, but degenerates (in regular coordinates) near the horizon.
The identity $(\ref{introid})$ can then be used as an estimate
for its left hand side, in view of the fact that its right hand side will in fact be bounded
by the flux of $J^T$, for the Killing field $T$, which is conserved.
The role of the photon sphere will be exemplified by the degeneration at $r=3M$
of the quantity controlled by this spacetime integral.

In order to obtain decay results from the above, one would have to 
gain information about the quantity estimating the boundary terms--namely
$\int_{\partial\mathcal{R}}J^T_\mu n^\mu$, from the  control of
spacetime integral.
The difficulty for this is that the 
spacetime integral estimates one obtains degenerate at the photon sphere $r=3M$
and at the horizons.  This does not allow one to control $J^T_\mu n^\mu$ there.

The problem at the photon sphere is cured by
applying the estimate also to angular derivatives. It is here that the 
argument ``loses'' an angular derivative.
It is this loss that leads to the form
of decay proven in $(\ref{fbxwris})$.

 The problem on the horizon, on the other hand, turns out to be illusionary.
 The horizon is in fact a very favourable place for estimating the solution! For this,
 we will need to consider
 the  ``local observer'' vector fields $Y$, $\overline Y$, to be described in the next section.

\subsubsection{The red-shift effect and the currents $J^Y_\mu$ and $J^{\overline{Y}}_\mu$}
\label{rssec}
The heuristic mechanism ensuring decay near the horizons has been understood
for many years, and is known as the \emph{red shift effect}. This is typically described 
in the language of geometric optics. If two observers $A$ and $B$ cross the event horizon
at advanced times $v_A<v_B$, and $A$ sends a signal to $B$ at a certain frequency, as he ($A$)
measures it,
then the frequency at which $B$ receives it is exponentially damped in the quantity $v_B-v_A$. 

It turns out that this exponential damping property can be captured
by the integral identities $(\ref{introid})$ corresponding to the
currents $J_\mu^Y$ and $J_\mu^{\overline{Y}}$ associated to vector fields
$Y$, $\overline{Y}$, defined in Section~\ref{Ysec}.
These vector fields are
supported near the horizons $\mathcal{H}^+$, $\overline{\mathcal{H}}^+$,
respectively.
The estimates $(\ref{introid})$ corresponding to the currents $J_\mu^Y$,
$J_\mu^{\overline{Y}}$ fulfill the double role of (a) correcting for the error
region where $K^{X,3}<0$ by dominating this term near the horizon by
$K^Y+K^{\overline Y}$ and (b) controlling the spacetime integrated energy
measured by local observers near the horizon.
The choice of $Y$, $\overline{Y}$ is delicate, because there is an ``error region''
where $K^Y+K^{\overline Y}<0$, 
which must be controlled with the help of the currents of the previous section.
The use of the currents corresponding to $X$, $Y$ and $\overline{Y}$ are thus strongly coupled.

\subsubsection{Comparison with the Schwarzschild case}
To see the above arguments in context, the reader may wish to compare
with our previous~\cite{dr3}, where versions 
of the currents $J^X_\mu$, $J^Y_\mu$ are also employed. 
The relation of our arguments
with the physical mechanisms at play are in fact much clearer
in the present paper, than in~\cite{dr3}. This is due on the one hand to the absence here
of the Morawetz-type vector field (denoted $K$ in~\cite{dr3}), and, on the other hand,
to the relative simplicity here in the construction of the current $J^X_\mu$.
We give here some comments on these points. 

The Morawetz vector field employed in~\cite{dr3} is a highly unnatural quantity at the horizon from
the geometric point of view. On the other hand, in view of its weights,
it somewhat magically captures a polynomial (as opposed to the proper exponential) version of the red shift. The pointwise decay rates achieved via $K$ at the horizon are
worse than the decay rates away from the horizon, but 
sufficient if one is only interested in the behaviour of the solution
away from the horizon. (See also~\cite{bs:le}.) 
In our~\cite{dr3}, uniform decay rates
up to the horizon were indeed obtained with the help of $J^Y_\mu$. But these estimates
could be obtained \emph{a posteriori}. From the point of view
of the non-linear stability problem, this decoupling appears to be an
exceptional feature. It is in this sense that the scheme proposed in the present paper
is perhaps
more naturally connected to the geometry of general black holes.

The second point to be made here concerns the construction of $J^X_\mu$.
In~\cite{dr3}, positivity of the analogue
of what we denote here $K^{X,3}$ relied on an unmotivated recentring
and rescaling of the derivatives of the functions $f_\ell$ which obscured perhaps the fundamental
connection with the photon sphere. 
Here, this connection appears much more clear. Of course, this is at the expense
of having to bound $-K^{X,3}$ from $K^Y$  and $K^{\overline{Y}}$. 
This should in no way be thought of as a disadvantage. The red-shift effect
has a lot to offer. It should be used and not obscured.

\subsection{Discussion}
\label{discussion}
As noted above, the study of the asymptotic behaviour of
solutions to $\Box_g\phi=0$ on both Schwarzschild and
Schwarzschild-de Sitter backgrounds has a long tradition 
in the physics literature. In the Schwarzschild  case, the pioneering
heuristic study is due to Price~\cite{rpr:ns}. See also~\cite{gpp:de1}.
For the Schwarzschild-de Sitter case, there is
numerical work of Brady et al~\cite{brady},
the subsequent~\cite{brady2}, and references therein.

The above studies are based entirely upon decomposition of $\phi$ into 
spherical harmonics.
The results of these heuristics or numerics are typically presented in terms
of the asymptotic behaviour of the tail:
\begin{equation}
\label{faivetai0}
\phi_\ell(r,t) \sim  t^{-2\ell -3}, \qquad \phi_\ell(u,v)\sim v^{-2\ell -3},\qquad
r\phi_\ell(u,v)\sim u^{-2\ell-2}
\end{equation}
for Schwarzschild, where $2M<r<\infty$ is fixed in the first formula, $u\ge v$
in the second, and $v \ge 2u$ in the third,
and 
\begin{equation}
\label{faivetai}
\phi_\ell(r,t) \sim e^{-c\ell t}, \qquad \phi_\ell(u,v)\sim e^{-c\ell v},
\qquad \phi_\ell (u,v) \sim e^{-c\ell u}
\end{equation}
for Schwarzschild-de Sitter and $\ell\ge 1$, where $r_b<r<r_c$ is fixed in the first formula,
and $u\ge v$ in the second, and $v\ge u$ in the third.

At first glance, statements $(\ref{faivetai})$ may appear stronger
than what is actually proven in Theorem~\ref{second}.
As quantitative statements of decay, however, statements $(\ref{faivetai0})$
and $(\ref{faivetai})$ are in fact much weaker than what has now been mathematically
proven, here and in~\cite{dr3}.
For, rewriting, in particular, the first formula of $(\ref{faivetai})$
as 
\begin{equation}
\label{ifwewrite}
|\phi_\ell(r,t) | \le C_\ell(r) e^{-c\ell t},
\end{equation}
then there is no indication
as to what $C_\ell(r)$ depends on, indeed, if there is any bound on  $C_\ell$ provided
by some norm of
initial data, and if so, what is the behaviour as $\ell\to\infty$. This does not concern a mathematical
pathology, but is intimately connected with the physical effect caused
by the photon sphere. Indeed, a geometric optics approximation
shows easily that if $(\ref{ifwewrite})$ is to hold and
if $C_\ell$ is to depend, say, on the initial energy of the spherical harmonic, then
$C_\ell\to \infty$ as $\ell\to \infty$. \emph{It is the rate of this divergence that would 
then determine the decay rate (if any) for $\phi$.}

If one is interested in quantitative statements of
decay, a statement like
$(\ref{faivetai})$ provides no more information than the statement
\begin{equation}
\label{faivetai2}
\lim_{(u,v) \to (\infty,\infty)}\phi_\ell (u,v)=0.
\end{equation}
It is worth noting that the above statement at the level of individual spherical harmonics, 
together with the (uniform)  boundedness\footnote{The result~$(\ref{ubd})$ was shown for Schwarzschild in fundamental work of Kay and Wald~\cite{kw:lss}. 
Our~\cite{dr3} gives an alternative proof not relying on the discrete symmetries
of the maximal development. For Schwarzschild-de Sitter, the statement $(\ref{ubd})$  
of course follows
from Theorem~\ref{gendata} for any $s>1$.
We have not found another statement of this 
in the literature.}
 result
\begin{equation}
\label{ubd}
|\phi|\le C\sup|\varphi|+C \, {\bf E}_s^{\frac12}(\varphi,\dot\varphi),
\end{equation}
can indeed be used to show, for fixed $r$, the statement
\begin{equation}
\label{heuristic}
\lim_{(u,v)\to (\infty,\infty)} (\phi-\phi_0)(u,v)=0,
\end{equation}
for the total $\phi$. This can be termed the statement of
(uniform) decay without a rate.

Thus it is truly only $(\ref{heuristic})$, and not the results of~\cite{dr3} or
Theorem~\ref{gendata},
that can be said to be suggested by heuristic and numerical studies.

Results like $(\ref{heuristic})$ or even just $(\ref{ubd})$ are sometimes referred to as ``linear stability''
in the physics literature.\footnote{Sometimes, even the statement $\forall r_b<r<r_c,
\lim_{t\to\infty} \phi(r,t)=0$ is termed ``linear stability''. Such a result does not even imply
$(\ref{ubd})$. It is in fact entirely consistent
with the statement $\sup_{r\in(r_b,r_c)t\in[0,\infty)}|\phi(r,t)|=\infty$! } 
One should keep in mind, however, that
 were $(\ref{heuristic})$ the sharp
decay result, it would in fact suggest  \emph{instability} for
Schwarzschild or Schwarzschild-de Sitter once one
passes to the next order in perturbation theory.
At very least, it would exclude all known techniques for proving non-linear stability for
supercritical non-linear wave equations like $(\ref{Eeq})$. {\bf It is 
only quantitative uniform decay bounds with decay rate sufficiently fast, such that moreover
the bound depends only on a suitable norm of initial data,
which indeed can the thought of as suggestive of non-linear stability.}
One should thus be careful in associating the heuristic and numerical
tradition exemplified in~\cite{rpr:ns} with the conjecture that black holes are stable.

\subsection{Note added}
While the final version of this manuscript was being prepared, an interesting
preprint~\cite{bh} appeared addressing a special case of the problem under consideration here
with the methods of time-independent scattering theory.
The special case where $\phi$ is not supported at $\mathcal{H}^+\cap
\mathcal{H}^-$ and $\overline{\mathcal{H}}^+\cap \overline{\mathcal{H}}^-$
is considered and quantitative exponential decay
bounds are proven for $\int_{\Sigma'}T_{\mu\nu}T^\mu n^\nu$ in the coordinate
$t$, where one must restrict to
$\Sigma'= \{t\}\times [r_0,R_0]$, for $r_b<r_0$, $R_0<r_c$. The
bounds lose only an $\epsilon$ of an angular derivative,
but depend on $r_0$, $R_0$, and the initial support of $\phi$ in
an unspecified way. The work~\cite{bh} depends in an essential way on a previous detailed
analysis of S\'a Barreto and Zworski~\cite{SB-Zworski}  
concerning resonances of an associated elliptic problem.

For the special case of the data considered in~\cite{bh}, given that result, then
the estimates of the present paper, in particular, those provided by the
currents $J^Y_\mu$, $J^{\overline{Y}}_\mu$, can be applied \emph{a posteriori}
to obtain uniform (i.e.~holding up to the horizons)
exponential decay bounds.

\subsection{Acknowledgements}
The authors thank the Massachusetts Institute of Technology for hospitality
in the Spring of 2006 where this research began. M.D.~is supported
by a Clay Research Scholarship. I.R.~is supported in part by NSF grant
DMS-0702270.

\section{The Schwarzschild de-Sitter metric in coordinates}
\label{cordsec}
We refer the reader to the references~\cite{brandon, GibHawk, lake}  for detailed
discussions of the geometry of Schwarzschild-de Sitter.

\subsection{Schwarzschild coordinates $(r,t)$}
We recall that so-called Schwarzschild coordinates $(r,t)$ map
$\mathcal{D}^o$ onto $(r_b,r_c)\times(-\infty,\infty)$, 
in which the metric takes the form $(\ref{original})$. Let the choice of the $t$ coordinate
be fixed.
Here $0<r_b<r_c$ denote the two positive roots
of the equation 
\begin{equation}
\label{sxe3iswsn}
1-\frac{2M}r-\frac{\Lambda}3r^2=0.
\end{equation}

The function $r$ can be given  a geometric interpretation
\begin{equation}
\label{rdef}
r(p)=\sqrt{{\rm Area}(\hat{\pi}^{-1}(\hat\pi(p)))/4\pi},
\end{equation}
 where here $\hat{\pi}:\mathcal{M}\to\mathcal{Q}$ is the natural
projection. Thus $r$ can be defined as a smooth function on all of $\mathcal{M}$.
It is known as the \emph{area-radius function}. This function also clearly descends
to $\mathcal{Q}$.

The $(r,t)$ coordinates degenerate along the horizons $\mathcal{H}^+\cup\mathcal{H}^-$
and $\overline{\mathcal{H}}^+\cup \overline{\mathcal{H}}^-$, on which $r=r_b$,
$r=r_c$, respectively.			

It is immediate from the explicit form of the metric that the vector field $\frac\partial{\partial t}$
is Killing in $\mathcal{D}^o$. This extends to a globally defined Killing field $T$ on $(\mathcal{M},g)$,
which is null along $\mathcal{H}^+\cup\mathcal{H}^-$ and $\overline{\mathcal{H}}^+
\cup\overline{\mathcal{H}}^-$, and vanishes along
$\mathcal{H}^+\cap\mathcal{H}^-$ and $\overline{\mathcal{H}}^+
\cap\overline{\mathcal{H}}^-$.

\subsection{Regge-Wheeler coordinates $(r^*,t)$}
We now proceed to define two related coordinate systems on $\mathcal{D}^o$.
Let us denote the unique negative root of $(\ref{sxe3iswsn})$ as $r_-$, and 
let us set
\[
\kappa_{b}= \frac{d}{dr}\left(1-\frac{2M}r-\frac{\Lambda}3r^2
\right)\big|_{r=r_{b}},
\]
and similarly $\kappa_{c}$, $\kappa_-$.
We now set
\begin{eqnarray*}
r^*&\doteq&
-\frac{1}{2\kappa_{c}}\log \left|\frac{r}{r_{c}}-1
\right|+\frac{1}{2\kappa_{b}}\log
\left|\frac{r}{r_{b}}-1\right|\\
&&\hbox{}+
\frac{1}{2\kappa_{-}}\log\left|\frac{r}{r_{-}}-1\right|
-C^*
\end{eqnarray*}
where $C^*$ is a constant we may choose arbitrarily.
For convenience, let us choose
$C^*$ so that
$r^*=0$ when $r=3M$, the so-called
\emph{photon sphere}. 
We call the coordinates $(r^*,t)$ so-defined \emph{Regge-Wheeler coordinates}.

\subsection{Eddington-Finkelstein coordinates $(u,v)$}
From Regge-Wheeler coordinates $(r^*,t)$, we can define now \emph{retarded and advanced Eddington-Finkelstein coordinates}
$u$ and $v$, respectively, by
\[
t=v+u
\]
and
\[
r^*=v-u.
\]
These coordinates turn out to be null: Setting
$\mu=\frac {2M}r+\frac13\Lambda r^2,$
the metric takes the form
\[
-4(1-\mu)du dv+r^2d\sigma_{\mathbb S^2}^2.
\]

We shall move freely between the two coordinate systems $(r^*,t)$ and
$(u,v)$ in this paper.
Note that in either, region  $\mathcal{D}^o$ is covered
by $(-\infty,\infty)\times (-\infty,\infty)$.

By appropriately rescaling $u$ and $v$ to have finite range, one
can construct coordinates which are in fact regular on 
$\mathcal{H}^\pm$
and $\tilde{\mathcal{H}}^\pm$. 
By a slight abuse of language, one can parametrize the
future and past horizons in our present
$(u,v)$ coordinate systems as 
\begin{equation}
\label{conve1}
\mathcal{H}^+=\{(\infty,v)\}_{v\in[-\infty,\infty)},
\end{equation}
\begin{equation}
\label{conve2}
\mathcal{H}^-=\{(u,-\infty)\}_{u\in(-\infty,\infty]},
\end{equation}
\begin{equation}
\label{conve3}
\overline{\mathcal{H}}^+=\{(u,\infty)\}_{u\in[-\infty,\infty)},
\end{equation}
\begin{equation}
\label{conve4}
\overline{\mathcal{H}}^+=\{(-\infty,v)\}_{v\in[-\infty,\infty)}.
\end{equation}
Under these conventions, the statements of Theorem~\ref{gendata} can be applied
up to the boundary of $\mathcal{D}$.

\subsection{Useful formulae}
\label{useful}
Finally, we collect various formulas for future reference:
\[
\mu=\frac {2M}r+\frac13\Lambda r^2,
\]
\[
g_{uv}=(g^{uv})^{-1}=-2(1-\mu),
\]
\[
\partial_vr=(1-\mu),\qquad \partial_ur=-(1-\mu)
\]
\[
dt=dv+du, dr^*=dv-du,
\]
\[
T=\frac{\partial}{\partial t}=\frac12\left(\frac{\partial}{\partial v}
+\frac{\partial}{\partial
u}\right),
\]
\[
\frac{\partial}{\partial r^*}=\frac12\left(\frac{\partial}{\partial v}-\frac{\partial}{\partial
u}\right),
\]
\[
dVol_{\mathcal{M}}= 2 r^2(1-\mu)\, du\, dv\, d A_{{\mathbb S}^2},
\]
\[
dVol_{t={\rm const}}= r^2\sqrt{1-\mu}\, dr^*\, d A_{{\mathbb S}^2},
\]
\[
\Box\psi=\nabla^\alpha\nabla_\alpha\psi= -(1-\mu)^{-1}\left(
\partial_t^2\psi-r^{-2}\partial_{r^*}(r^2\partial_{r^*}\psi)\right)
+\nabb^A\nabb_A\psi.
\]
Here $\nabb$ denotes the induced covariant derivative on the group orbit
spheres.

\section{The energy momentum tensor and compatible currents}
\label{CCsec}
As discussed in the introduction, the results of this paper will rely
on $L^2$-based estimates. Such estimates arise naturally in view of
the Lagrangian structure of the wave equation. We review briefly here.

Let $\phi$ be a solution of $\Box_g\phi=0$.
In general coordinates,
the \emph{energy-momentum tensor} $T_{\alpha\beta}$ for $\phi$  is defined by the expression
\[
T_{\alpha\beta}(\phi)=\partial_\alpha\phi\, \partial_\beta\phi-\frac12g_{\alpha\beta}\,g^{\gamma\delta}
\partial_\gamma\phi\,\partial_\delta\phi.
\]
The tensor $T_{\alpha\beta}$ is symmetric and divergence-free, i.e.~we have
\begin{equation}
\label{divfree}
\nabla^\alpha T_{\alpha\beta}=0.
\end{equation}

For the null coordinate system $u,v,x^A,x^B$ we have defined, where
$x^A$, $x^B$ denote coordinates on $\mathbb S^2$, we compute the components
\[
T_{uu}=(\partial_u\phi)^2,
\]
\[
T_{vv}=(\partial_v\phi)^2,
\]
\[
T_{uv}=-\frac12g_{uv}|\nabb\phi|^2=(1-\mu)|\nabb\phi|^2.
\]
Here the notation $| \nabb \psi|^2= g^{AB}\partial_A\psi\partial_B\psi = r^{-2} |d\psi|^2_{d\sigma}$.
Note moreover
that $|\nabb\psi|^2 = r^{-2}\sum_{i=1}^3| \Omega_i\psi|^2$.

Let $V^\alpha$ denote an arbitrary vector field.
Let $\pi_V^{\alpha\beta}$ denote the deformation tensor of $V$,
i.e.,
\begin{equation}
\label{defdef}
\pi_V^{\alpha\beta}\doteq \frac12(\nabla^\alpha V^\beta+\nabla^\beta V^\alpha).
\end{equation}
In local coordinates we have the following expression:
\begin{eqnarray*}
T_{\alpha\beta}(\phi)\pi ^{\alpha\beta}_V&=&\frac{1}{4(1-\mu)}\left(
(\partial_u\phi)^2\partial_v(V_v(1-\mu)^{-1})
                                +(\partial_v\phi)^2\partial_u(V_u(1-\mu)^{-1})\right.\\
                                &&\hbox{}\left.+|\nabb\phi|^2(\partial_u
V_v+\partial_v V_u)
                                        \right)
                        -\frac1{2r}(V_u-V_v)(|\nabb\phi|^2-\phi^\alpha\phi_\alpha).
\end{eqnarray*}

Set
\begin{equation}
\label{edworizetai}
J^V_\alpha=T_{\alpha\beta}V^\alpha.
\end{equation}
The relations $(\ref{divfree})$ and $(\ref{defdef})$ give
\[
K^V\doteq\nabla^\alpha J^V_\alpha \doteq T_{\alpha\beta}(\phi)\pi^{\alpha\beta}_V,
\]
and the divergence theorem applied to an arbitrary region $\mathcal{R}$ 
gives $(\ref{introid})$.

Identity $(\ref{introid})$ is particularly useful when the vector field $V$ is Killing,
for instance the vector field $T$ defined previously. For then, $K^T=0$
and one obtains a \emph{conservation law} for
the boundary integrals. Moreover, when $\partial\mathcal{R}$ corresponds to
two homologous
timelike hypersurfaces, the
integrands $J_\mu^T(\phi)n^\mu$ on 
the right hand side of $(\ref{conid})$ when properly oriented are positive semi-definite
in the derivatives of $\phi$. 
 
Were the vector field $T$ timelike in all of $\mathcal{D}$, then
by applying 
$(\ref{introid})$ to $\phi$, $\Omega_i\phi$, etc., one could show the uniform boundedness
of all derivatives of $\phi$. Since $T$ becomes null on $\mathcal{H}^+\cup
\overline{\mathcal{H}}^+$, the integrand does not control all quantities on the horizon.
It is for this reason that even proving uniform boundedness for solutions of
$\Box_g\phi=0$ on $\mathcal{D}$ is non-trivial. (See~\cite{kw:lss}.)

For $\phi$ a solution to $\Box_g\phi=0$, the $1$-form $J^V_\mu(\phi)$ defined above has
the property that both it and its divergence $\nabla^\mu J^V_\mu$ 
depend only on the $1$-jet of $\phi$.
Following Christodoulou~\cite{book2}, we shall
call one-forms $J_\mu$ and their divergences
$K=\nabla^\mu J_\mu$ with the aforementioned property (thought of as form-valued
and scalar valued maps on the bundle of $1$-jets, respectively)
\emph{compatible currents}.

\section{Constants and cutoffs}
\subsection{The special values $r_i$, $R_i$}
\label{specialval}
In the course of this proof we shall 
require special values $r_i$, $R_i$, $i=0,1,2$, satisfying
\[
-\infty<\frac12r_0^*<2r_1^*<\frac12 r_1^*<2 r_2^*<0< 2R_2^*<\frac12 R_1^*<2 R_1^*<\frac12 R_0
<\infty.
\]

Eventually, specific choices of these
constants will be made, and these choices 
will depend only on $M$, $\Lambda$.
Constants $r_2$, $R_2$ are in fact only constrained by Lemma~\ref{vanlem}.
Constants $r_1$, $R_1$ are constrained by the necessity of satisfying 
Proposition~\ref{polukrisimo} of Section~\ref{choicesec}.

To choose $r_0$, $R_0$, on the other hand, is more subtle, as we will have to
keep track  of a certain competition of constants as $r_0$, $R_0$ vary.
We adopt, thus, the convention described in the next section.

\subsection{Dependence of constants $C$, $E$, and $\epsilon$ on $r_i$, $R_i$}
\label{depcon}
In all formulas that follow in this paper, constants which can be chosen independently
of $r_0^*$, $R_0^*$ 
shall be denoted by $C$. 
Constants $C$ will thus depend on $M$, $\Lambda$, $r_i$, $R_i$, for $i=1,2$,
and, after $r_i$, $R_i$ have been chosen, will depend only on $M$, $\Lambda$.

Constants which depend on $M$, $\Lambda$, $r_0^*$ and $R_0^*$ and tend to $0$ as
$r_0^*\to-\infty$, $R_0^*\to\infty$ will be denoted by $\epsilon$.
Finally, all other constants depending on $M$, $\Lambda$, $r_0^*$ and $R_0^*$, will
be denoted by $E$. {\bf Constants denoted by $E$ in principle diverge as $r_0^*\to\infty$,
$R_0^*\to\infty$. }

We will also use the convention $A\approx B$ whenever $C^{-1} A\le B\le C A$
with  a constant $C$ understood as above.

In view of our above conventions, note finally
 the obvious algebra of constants: $C\pm C=C$, $\epsilon C=\epsilon$, 
$CE=E$, $\epsilon E=E$, etc.

\subsection{Cutoffs}
\label{cutsec}
Associated to these special values of $r$, we will define a number of cutoff functions.
It is convenient to introduce also the notation 
\[
r^*_{-1}\doteq 4r^*_0, \qquad R^*_{-1}\doteq 4R^*_0.
\]

\subsubsection{The cutoffs $\eta_i$}
\label{cutseceta}
Let $\eta:[0,\infty)\to \mathbb R$ be a nonnegative smooth
cutoff function which is equal to $1$ in $[0,1]$ and $0$
outside $[0,2]$.

For $i=-1,0,1,2$, define 
\begin{eqnarray*}
\eta_i(r^*) &=& \eta( r^*/ r_i^* ) \qquad{\rm for\ }r^* \le 0\\
		&=& \eta (r^*/R_i^*) \qquad{\rm for\ } r^*\ge 0.
\end{eqnarray*}
Clearly $\eta_i$ has the property that $\eta_i=1$ in $[r_i^*,R_i^*]$
and $\eta_i=0$ in $(-\infty, \frac12r_{i-1}^*]\cup(\frac12 R_{i-1}^*,\infty)$.
Moreover, $\sup \eta'_i\to 0$ as $r_i^*\to-\infty$, $R_i^*\to\infty$.

\subsubsection{The cutoffs $\chi_i$ and $\overline\chi_i$}
\label{cutsecchi}
Now let $\chi:(-\infty,\infty)\to \mathbb R$, $\overline\chi:(\infty,\infty)\to\mathbb R$ be
nonnegative smooth cutoff functions such that $\chi$ is $1$ in $(-\infty,-1]$
and $0$ in $[-\frac12,\infty)$, and $\overline\chi$ is $1$ in $[1,\infty)$ 
and $0$ in $(-\infty, \frac12]$.

Now for $i=-1, 0,1$ define
\begin{eqnarray*}
\chi_i &=&\chi( -r^*/r_i^*)\\
\overline\chi_i &=&\overline\chi  (r^*/R_i^*).
\end{eqnarray*}

Clearly, $\chi_i$ has the property that $\chi_i=1$ in $(-\infty,r_i^*]$,
and $0$ in $[2 r_{i+1}^*,\infty)$. Similarly $\overline\chi_i$ has the property
that $\overline\chi_i=1$ in $[R_i^*,\infty)$, and $0$ in $(-\infty, 2R^*_{i+1}]$.

\section{The hypersurfaces $\Sigma_t$ and the region $\mathcal{R}(t_1,t_2)$}
\label{Thereg}
Let $r_1$, $R_1$ be the special values announced in Section~\ref{specialval}.
For all $t$, define 
\[
\Sigma_{t} \doteq \{t\}\times[ r_1,R_1] \cup \{(t-R_1^*)/2\}\times[(t+R^*_1)/2,\infty]
\cup[(t-r^*_1)/2,\infty]\times\{(t+r^*_1)/2\}
\]
and for any $t_2>t_1$, define
\[
\mathcal{R}(t_1,t_2)= J^+(\Sigma_{t_1})\cap J^-(\Sigma_{t_2}).
\]
The diagram below may be helpful:
\[
\begin{picture}(0,0)%
\includegraphics{regions00.pstex}%
\end{picture}%
\setlength{\unitlength}{2368sp}%
\begingroup\makeatletter\ifx\SetFigFont\undefined%
\gdef\SetFigFont#1#2#3#4#5{%
  \reset@font\fontsize{#1}{#2pt}%
  \fontfamily{#3}\fontseries{#4}\fontshape{#5}%
  \selectfont}%
\fi\endgroup%
\begin{picture}(6549,3691)(3514,-7794)
\put(5575,-6290){\rotatebox{60.0}{\makebox(0,0)[lb]{\smash{{\SetFigFont{7}{8.4}{\rmdefault}{\mddefault}{\updefault}{\color[rgb]{0,0,0}$r=r_1$}%
}}}}}
\put(7782,-5777){\rotatebox{300.0}{\makebox(0,0)[lb]{\smash{{\SetFigFont{7}{8.4}{\rmdefault}{\mddefault}{\updefault}{\color[rgb]{0,0,0}$r=R_1$}%
}}}}}
\put(5411,-7206){\makebox(0,0)[lb]{\smash{{\SetFigFont{7}{8.4}{\rmdefault}{\mddefault}{\updefault}{\color[rgb]{0,0,0}$\Sigma_{t_1}$}%
}}}}
\put(6611,-7101){\makebox(0,0)[lb]{\smash{{\SetFigFont{7}{8.4}{\rmdefault}{\mddefault}{\updefault}{\color[rgb]{0,0,0}$t=t_1$}%
}}}}
\put(6607,-6231){\makebox(0,0)[lb]{\smash{{\SetFigFont{7}{8.4}{\rmdefault}{\mddefault}{\updefault}{\color[rgb]{0,0,0}$t=t_2$}%
}}}}
\put(4459,-6693){\rotatebox{315.0}{\makebox(0,0)[lb]{\smash{{\SetFigFont{7}{8.4}{\rmdefault}{\mddefault}{\updefault}{\color[rgb]{0,0,0}$v=v_1$}%
}}}}}
\put(4779,-6280){\rotatebox{315.0}{\makebox(0,0)[lb]{\smash{{\SetFigFont{7}{8.4}{\rmdefault}{\mddefault}{\updefault}{\color[rgb]{0,0,0}$v=v_2$}%
}}}}}
\put(8519,-6646){\rotatebox{45.0}{\makebox(0,0)[lb]{\smash{{\SetFigFont{7}{8.4}{\rmdefault}{\mddefault}{\updefault}{\color[rgb]{0,0,0}$u=u_2$}%
}}}}}
\put(8865,-7179){\rotatebox{45.0}{\makebox(0,0)[lb]{\smash{{\SetFigFont{7}{8.4}{\rmdefault}{\mddefault}{\updefault}{\color[rgb]{0,0,0}$u=u_1$}%
}}}}}
\end{picture}%

\]

We shall repeatedly apply the identity $(\ref{introid})$ in the region $\mathcal{R}(t_1,t_2)$.
We record below its explicit form:
\begin{eqnarray}
\label{conid}
0=\int_{\mathcal{R}(t_1,t_2)} K(\phi)&+& 
\int_{\Sigma_{t_2}}J_\mu(\phi) n^\mu - \int_{\Sigma_{t_1}} J_\mu(\phi) n^\mu \\
&&\hbox{}
\nonumber
+\int_{\mathcal{H}^+\cap \{v_1\le v \le v_2\} }J_\mu(\phi) n^\mu
+\int_{\overline{\mathcal{H}}^-\cap \{u_1\le u\le v_2\} }J_\mu(\phi) n^\mu.
\end{eqnarray}
Here,
$n={(1-\mu)^{-\frac 12}}T$ whenever $(u,v)$ belongs to the space-like portion of $\Sigma_t$,
and the measure of integration, call it $dm$,
is there understood to be the induced volume form.
Let us moreover define
$n\doteq \frac{\pa}{\pa u}$ and $n\doteq\frac{\pa}{\pa v}$ whenever $(u,v)$ belongs to the
$v={\rm const}$ and $u={\rm const}$
portions of $\Sigma_t$ respectively. With this choice, the measure
of integration $dm$ in the respective null segments is understood to be given by
\[
dm_{v=\text{const}}=r^2\, dA _{{\Bbb S}^2} du,\qquad dm_{u={\text{const}}}=r^2\, dA_{{\Bbb S}^2} dv.
\]

All integrals over $\Sigma_t$
 that appear in the sections that follow are understood to be with respect to the measure
$dm$ defined above.
The integral over $\mathcal{R}(t_1,t_2)$ is to be understood to be with respect to
the spacetime volume form. See Section~\ref{useful}.

\section{The main estimates}
\label{SMEsec}
In this section we will give a geometric statement of the main estimates.

\subsection{The vectorfields $N$, $\tilde{N}$ and $P$}
Recall the cutoffs functions $\eta_i$,  $\chi_1$, $\overline\chi_1$,
from Section~\ref{cutsec}.
\label{vfsec}
Define the vector fields
\begin{align}
&N\doteq \frac {\chi_1}{1-\mu} \frac{\pa}{\pa u}+\frac {\overline\chi_1}{1-\mu} \frac{\pa}{\pa v}+T,
\label{eq:N}\\
&P_{i}\doteq \eta_{i}(1-\mu)^{-1/2}\frac{\pa}{\pa r^*},\label{eq:P}\\
&\tilde{N}\doteq (r-3M)^2N\label{Ntildedef}.
\end{align}
For convenience, let us denote $P_2$ by $P$.

The above vector fields will provide the fundamental directions in which the energy
momentum tensor $T_{\mu\nu}$ is to be contracted and/or $\phi$ is to be
differentiated in the definition of the 
fundamental quantities appearing in the main estimates. 
See Section~\ref{thequansec} below.

The coordinate-dependent definitions given above notwithstanding, 
the important features of these vector fields can be understood geometrically.
All three are invariant with 
respect to the action of
$\Psi_t$, the one-parameter group of differentiable maps $\mathcal{D}\to \mathcal{D}$ generated by the Killing field $T$, $N$ is future-directed timelike on $\Sigma_t$, $\tilde{N}$ is future-directed
timelike everywhere except $r=3M$, where it vanishes quadratically, and $P$ is
supported away from the horizon and orthogonal to $T$.

\subsection{The quantities ${\bf Z}^{\tilde{N},P}$, ${\bf Z}^N$ and ${\bf Q}$}
\label{thequansec}
Let $T_{\mu\nu}$ be the energy-momentum tensor defined in Section \ref{CCsec}. Define
the quantities
\begin{align}
{\bf Z}_{\phi}^{\tilde{N},P}(t)& \doteq \int_{\Sigma_t} \left(T_{\mu\nu}(\phi)\tilde{N}^\mu n^\nu + 
({P}\phi)^2\right),\label{eq:hatZ}\\
{\bf Z}^N_\phi(t) &\doteq \int_{\Sigma_{t}} 
					T_{\mu\nu}(\phi)N^\mu n^\nu,\label{eq:justZ}\\
{\bf Q}_{\phi}(t_1,t_2)&\doteq \int_{t_1}^{t_2} {\bf Z}_{\phi}^{\tilde {N},P}(t)\, dt\label{eq:justQ}.
\end{align}

The quantity ${\bf Q}_{\phi}$ is equivalent to the spacetime integral of the
density $q(\phi)$ defined by
\[
q(\phi)\doteq\left(T_{\mu\nu}(\phi)\tilde{N}^\mu \frac{n^\nu}{1-\mu}+ 
({P}\phi)^2\right),
\]
in the sense of the formula
\begin{equation}\label{eq:ZQ}
{\bf Q}_{\phi}(t_1,t_2)\approx \int_{\mathcal{R}(t_1,t_2)}q(\phi),
\end{equation}
understood
with the conventions of Section~\ref{depcon}. We will
make use of this equivalence often in what follows.
Recall also that the spacetime integral on the right hand side of $(\ref{eq:ZQ})$ is to be understood
with respect to the  volume form. See Section~\ref{useful}.

Note that the quantity ${\bf Z}^N_\phi$ has integrand positive definite in $d\phi$. It is in fact precisely
the flux through $\Sigma_t$ of the current $J^N_\mu (\phi)$. The quantity
${\bf Z}^{\tilde{N},P}_\phi$
differs from ${\bf Z}^N_\phi$ in that control of the angular and $t$-derivatives
degenerates quadratically at $r=3M$. 
Similarly, the integrand of ${\bf Q}_\phi(t_1,t_2)$ also degenerates at
$r=3M$. This hypersurface $r=3M$ is the so-called photon sphere
discussed already in the introduction.

\subsection{Statement of the estimates}
The main estimates of this paper are contained in the following
\begin{theorem}
\label{ME}
There exists a constant $C$ depending only on $M$, $\Lambda$ such that for all $t_2>t_1$
and all sufficiently regular solutions $\phi$ of $\Box_g\phi=0$ in $\mathcal{R}(t_1,t_2)$
we have 
\begin{equation}
\label{MEstatement}
{\bf Q}_\phi(t_1,t_2)\le C\, {\bf  Z}^N_\phi(t_1),
\end{equation}
\begin{equation}
\label{MEnew}
{\bf Z}^N_{\phi}(t_2)\le C\, {\bf Z}^{\tilde{N},P}_\phi(t_2)+C(t_2-t_1)^{-1}\Big({\bf Z}^N_{\phi}(t_1)+
{\bf Q}_{\phi}(t_1,t_2)+\sum_{i=1}^3{\bf Q}_{\Omega_i\phi}(t_1,t_2)\Big),
\end{equation}
\begin{equation}
\label{MEstatementnew}
{\bf Z}^N_{\phi}(t_2)\le C\big({\bf Z}^N_{\phi}(t_1) +{\bf Q}_\phi(t_1,t_2)\big).
\end{equation}
More generally than $(\ref{MEstatementnew})$, 
if $\Sigma'\subset \mathcal{R}(t_1,t_2)$ is achronal then
\begin{equation}
\label{piogevika}
\int_{\Sigma'} T_{\mu\nu}(\phi)N^\mu n^\nu \le C\big({\bf Z}^N_\phi(t_1)+
 {\bf Q}_\phi(t_1,t_2)\big).
\end{equation}
\end{theorem}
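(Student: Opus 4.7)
The plan is to derive each of the four estimates from the divergence identity $(\ref{conid})$ applied in $\mathcal{R}(t_1,t_2)$ to a suitable compatible current built from the ingredients sketched in Sections~\ref{pssec} and~\ref{rssec}. The master object is a linear combination
\[
J_\mu = \alpha\, J^{X,3}_\mu + e\, J^Y_\mu + \bar e\, J^{\overline{Y}}_\mu + b\, J^T_\mu + c\, J^N_\mu,
\]
where $J^{X,3}$ is the spherical-harmonic-by-spherical-harmonic radial current associated to a family $X_\ell = f_\ell(r^*)\partial_{r^*}$ tuned to capture trapping at the photon sphere $r=3M$, $J^Y$ and $J^{\overline{Y}}$ are the red-shift currents localised near $\mathcal{H}^+$ and $\overline{\mathcal{H}}^+$, $J^T$ is conserved and contributes only to the boundary fluxes, and $J^N$ is the non-degenerate local-observer current we ultimately want to bound. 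The coefficients $\alpha,e,\bar e,b,c$ will be chosen differently for each of the four statements.

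For $(\ref{MEstatement})$ I would apply the identity with $c=0$. The first task is to choose the $f_\ell$ so that $K^{X,3}$ is pointwise non-negative, coercive away from $r=3M$ and from the two horizons, degenerate as $(r-3M)^2$ at $r=3M$ on higher spherical harmonics (this is the loss of an angular derivative caused by trapping, and is precisely the degeneration built into $\tilde N$ and hence into $q(\phi)$), with a compactly supported horizon error of wrong sign remaining. I would then take $e,\bar e$ large so that $eK^Y + \bar eK^{\overline{Y}}$, coercive near the horizons by the red-shift, absorbs this horizon error; the $Y,\overline{Y}$ currents in turn introduce a smaller indefinite intermediate region that $\alpha K^{X,3}$ reabsorbs away from $r=3M$. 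Finally $b$ is chosen large enough that $J_\mu n^\mu$ on the spacelike portions of $\Sigma_t$ is non-negative and bounded by $C\,T_{\mu\nu}(\phi)N^\mu n^\nu$, since the conserved $T$-flux dominates the $X,Y,\overline{Y}$ contributions in magnitude. Applying $(\ref{conid})$ and discarding the non-negative $\Sigma_{t_2}$ and horizon boundary terms yields $(\ref{MEstatement})$.

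The non-degenerate boundedness $(\ref{MEstatementnew})$ and its achronal generalization $(\ref{piogevika})$ follow from the same identity applied with only $c \ne 0$. Since $N=T$ away from the horizons, $K^N$ is supported only in the cutoff transition regions and in small neighbourhoods of $\mathcal{H}^+$ and $\overline{\mathcal{H}}^+$; in the horizon neighbourhoods $K^N\ge 0$ by the red-shift, and in the transition regions $|K^N|\le C\,q(\phi)$. The horizon flux of $J^N$ is non-negative because $N$ is future-timelike. Dropping this flux and estimating $\int_{\mathcal{R}} K^N \le C\,{\bf Q}_\phi(t_1,t_2)$ gives $(\ref{MEstatementnew})$; replacing $\Sigma_{t_2}$ by an arbitrary achronal $\Sigma'\subset\mathcal{R}(t_1,t_2)$ in the identity produces $(\ref{piogevika})$ with only cosmetic changes.

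The subtle estimate is $(\ref{MEnew})$, because ${\bf Z}^{\tilde N, P}_\phi(t_2)$ fails to control the non-degenerate angular and $T$ derivatives at $r=3M$. I would recover them by a pigeonhole-plus-commutation argument: from $(\ref{eq:justQ})$ select $t_*\in[t_1,t_2]$ with ${\bf Z}^{\tilde N,P}_\phi(t_*)\le (t_2-t_1)^{-1}{\bf Q}_\phi(t_1,t_2)$, then apply the already-proved $(\ref{MEstatementnew})$ to $\phi$ and to each $\Omega_i\phi$ on $[t_*,t_2]$, the latter being a solution of $\Box_g(\Omega_i\phi)=0$ since $\Omega_i$ is Killing. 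The commuted estimates supply non-degenerate control of angular derivatives at $r=3M$, and the wave equation itself is used to solve algebraically for $T^2\phi$ in terms of $\partial_{r^*}^2\phi$ and angular derivatives, converting degenerate $T$-derivative control into non-degenerate control. This is the origin of the ${\bf Q}_{\Omega_i\phi}$ contributions on the right-hand side. The main obstacle throughout lies in the second paragraph: the simultaneous tuning of $\alpha,e,\bar e,b$ is delicate because the $X$ and $Y$ currents are strongly coupled, and this tuning ultimately dictates the precise weights in $\tilde N$ and $P$, and hence the form of $q(\phi)$.
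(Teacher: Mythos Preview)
Your outline for $(\ref{MEstatement})$ and $(\ref{MEstatementnew})$/$(\ref{piogevika})$ is broadly correct but glosses over a real difficulty. You assert that $K^{X,3}$ is ``coercive away from $r=3M$ and from the two horizons''. In the paper this is false: the bound $(\ref{eq:X3})$ shows that $K^{X_\ell,3}(\phi_\ell)$ controls only $(\partial_{r^*}(r\phi_\ell))^2$ and $(r-r_{h_\ell})^2\phi_\ell^2$ in $r_0\le r\le R_0$. It does \emph{not} control $(\partial_t\phi)^2$, nor (after summing in $\ell$) does it control $(r-3M)^2|\nabb\phi|^2$. The paper therefore introduces a whole battery of auxiliary currents $J^{X^a,3}$, $J^{X^b,2}$, $J^{X^c,0}$, $J^{X^d,4}$ (Section~\ref{auxcursec}) and proves Lemmas~\ref{q1lem}--\ref{q4lem} to upgrade the bare $K^{X,3}$ positivity to control of the full density $q(\phi)$. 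This is not a cosmetic step: without these auxiliary currents you cannot get from the divergence identity to an inequality whose left side dominates ${\bf Q}_\phi$.

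Your argument for $(\ref{MEnew})$ has a more serious gap. The pigeonhole on ${\bf Q}_\phi=\int{\bf Z}^{\tilde N,P}_\phi\,dt$ produces a time $t_*$ where ${\bf Z}^{\tilde N,P}_\phi(t_*)$ is small, but to feed into $(\ref{MEstatementnew})$ you need ${\bf Z}^N_\phi(t_*)$, which requires non-degenerate control of $(\partial_t\phi)^2$ and $|\nabb\phi|^2$ at $r=3M$ on the slice $\Sigma_{t_*}$. Commuting with $\Omega_i$ and a Poincar\'e inequality in $r^*$ does indeed recover $|\nabb\phi|^2$, but your proposal to ``solve algebraically for $T^2\phi$'' from the wave equation yields a \emph{second} time derivative and says nothing about $(T\phi)^2$. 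The paper proceeds quite differently: it introduces a time-cutoff vector field $\Theta=\xi(t)T$ (Section~\ref{pragmaux}) whose divergence $K^\Theta$ carries an explicit $\zeta'\sim(t_2-t_1)^{-1}$ factor, giving directly ${\bf Z}^N_\phi(t_2)\le C{\bf Z}^{\tilde N,P}_\phi(t_2)-C\int K^\Theta$ (Proposition~\ref{privavafora}). The remaining task is then the non-degenerate \emph{spacetime} bound $(\ref{wouldfollow})$, and for the $(\partial_t\phi)^2$ part this is again obtained not from the wave equation but from the auxiliary $X$-currents via the identity $(\ref{eq:Xbd02})$, whose left side involves the indefinite combination $(\partial_t\phi)^2-|\nabb\phi|^2$. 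Once $|\nabb\phi|^2$ is controlled by Poincar\'e, this yields $(\partial_t\phi)^2$.
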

These estimates will be used in Section~\ref{proofgendata} to prove 
Theorems~\ref{gendata} and~\ref{second}.

As described in the introduction,
the proof of Theorem~\ref{ME} 
will be accomplished in Section~\ref{proofME}
with the help of so called energy currents $J_\mu$ 
associated to the vector fields $X_{\ell}$, $Y$, $\overline{Y}$ and $\Theta$.
We turn in the next sections to the definition of these currents.

\subsection{Discussion}
Were it ${\bf Z}^{\tilde{N},P}$ on the right hand side of $(\ref{MEstatement})$, 
or alternatively,
were ${\bf Q}$ defined as the time-integral of ${\bf Z}^{N}$, 
then inequality $(\ref{MEstatement})$ would immediately lead to exponential decay in
$t$ for ${\bf Q}(t,t_*)$  (cf.~Lemma~\ref{calculus}). 

The appearance of ${\bf Q}_{\Omega_i\phi}$ on the right hand side of $(\ref{MEnew})$
signifies that the 
estimates ``lose'' an angular derivative.
At the level of any fixed spherical harmonic $\phi_\ell$, 
estimates $(\ref{MEstatement})$ and $(\ref{MEnew})$ lead
immediately to exponential decay for ${\bf Q}_{\phi_{\ell}}(t,t_*)$ and ${\bf Z}^N_{\phi}$. The nature of
the loss of angular derivative in $(\ref{MEnew})$ means 
that for the total ${\bf Q}_{\phi}(t,t_*)$ and ${\bf Z}^N_\phi$, one can only obtain
polynomial decay in $t$, where the bound on the decay rate exponent is linear
 in the angular derivatives lossed. Exponential decay for $\phi$ would
be retrieved if the ``loss in angular derivatives'' in estimate $(\ref{MEnew})$
were logarithmic. See the dependence in $\ell$ in Lemma~\ref{calculus}.

\section{The $J^{X}$ family of currents}\label{X}
We define in this section a family of currents, all loosely based on vector fields parallel
to $\frac{\partial}{\partial r^*}$. 
The role of these currents in capturing the role of the ``photon sphere'' has already been discussed in the introduction.

\subsection{Template currents $J^{V,i}_\mu$ for an arbitrary $V=f\frac{\partial}{\partial r^*}$}
\label{template}
Let $f$ be a function of $r^*$ and consider a vector field
\begin{equation}\label{eq:V}
V= f (r^*)\frac{\partial}{\partial r^*}.
\end{equation}
Define the currents
\begin{eqnarray*}
J_\mu ^{V,0}(\phi)&=& T_{\mu\nu}(\phi){V}^\nu,\\
J_\mu^{V, 1}(\phi) &=& T_{\mu\nu}(\phi){V}^\nu+\frac 14\left(f'+2\frac{1-\mu}r f\right) \pa_\mu (\phi)^2-
\frac 14 \pa_\mu \left(f'+2\frac{1-\mu}r f\right) \phi^2,\\
J_\mu^{V,2}(\phi) &=&T_{\mu\nu}(\phi)V^\nu +\frac 14\left(f'+2\frac{1-\mu}r f\right) \pa_\mu (\phi)^2-
\frac 14 \pa_\mu \left(f'+2\frac{1-\mu}r f\right) \phi^2\\
&&\hbox{}-\frac 12 \frac {f'}{r f} V_\mu \phi^2,\\
J_\mu^{V,3}(\phi)&=&T_{\mu\nu}(\phi)V^\nu +\frac 14\left(f'+2\frac{1-\mu}r f\right) \pa_\mu (\phi)^2-
\frac 14 \pa_\mu \left(f'+2\frac{1-\mu}r f\right) \phi^2\\
&&\hbox{}-\frac 12 \frac {f'}{r f} V_\mu \phi^2-\frac 12 \frac{1-3M/r}{r}
f\,\phi\,\nabb_\mu\phi,\\
J_\mu^{V, 4}(\phi) &=& T_{\mu\nu}(\phi){V}^\nu+\frac 14f' \pa_\mu (\phi)^2-
\frac 14 \pa_\mu f' \phi^2,
\end{eqnarray*}
and the divergences
\[
K^{V,i}= \nabla^\mu J_{\mu}^{V,i}.
\]

Note the identities
\begin{equation}
\label{noteid1}
\frac{\mu'}{2(1-\mu)}+
		\frac{1-\mu}{r}=\frac{r-3M}{r^2},
\end{equation}
\[
\frac{1}{2r}\left(\frac{\mu''}{1-\mu}-\frac{\mu'}r\right)
=
\frac{M}{r^4}\left(3-\frac{8M}r\right)+\frac{M\Lambda}{3r^2}-\frac{2\Lambda^2 r}9.
\]

We compute
\begin{align}
K^{V,0}(\phi) &=
\frac{f'(\partial_{r^*}\phi)^2}{1-\mu}
+|\nabb\phi|^2\left(\frac{\mu'}{2(1-\mu)}+
		\frac{1-\mu}{r}\right)f\notag\\
&-\frac14\left(2f'+4\frac{1-\mu}r f \right)\phi^\alpha
\phi_\alpha,\label{eq:KV0}
\end{align}
\begin{eqnarray*}
K^{V,1}(\phi) &=&
	\frac{f'}{1-\mu}(\partial_{r^*}\phi)^2
		+|\nabb\phi|^2\left(\frac{\mu'}{2(1-\mu)}+
		\frac{1-\mu}{r}\right)f\\
		&&\hbox{}-\frac 14 \left (\Box \left(f'+2\frac{1-\mu}rf\right) 
		\right) \phi^2\\
		&=& 	\frac{f'}{1-\mu}(\partial_{r^*}\phi)^2
		+|\nabb\phi|^2\left(\frac{\mu'}{2(1-\mu)}+
		\frac{1-\mu}{r}\right)f\\
		&&\hbox{}-\frac 14 \left (
		 \frac{1}{1-\mu}f'''+\frac 4rf''-\frac{4\mu'}{r(1-\mu)}f'
	+\frac{2}{(1-\mu)r}\left(\frac{\mu'(1-\mu)}r-\mu''\right)f
		\right)\phi^2,
\end{eqnarray*}
\begin{eqnarray*}
K^{V,2}(\phi) &=&
\frac{f'}{(1-\mu) r^2}\left(\partial_{r^*} (r\phi)\right)^2+
\frac{1-3M/r}{r} \, f\, |\nabb\phi|^2-\frac14\frac1{1-\mu}{f'''}\phi^2\\
&&\hbox{}+ f\left(\frac{M}{r^4}\left(3-\frac{8M}r\right)
+\frac{M\Lambda}{3r^2}-\frac{2\Lambda^2r}9\right) \phi^2,
\end{eqnarray*}
\begin{eqnarray*}
K^{V,3}(\phi) &=&
\frac{f'}{(1-\mu) r^2}\left(\partial_{r^*} (r\phi)\right)^2
-\frac{1-3M/r}{r} \, f\, \phi\Trb\phi-\frac14\frac1{1-\mu}{f'''}\phi^2\\
&&\hbox{}+ f\left(\frac{M}{r^4}\left(3-\frac{8M}r\right)
+\frac{M\Lambda}{3r^2}-\frac{2\Lambda^2r}9\right) \phi^2,
\end{eqnarray*}
\begin{eqnarray*}
K^{V,4}(\phi) &=&
	\frac{f'}{1-\mu}(\partial_{r^*}\phi)^2
		+|\nabb\phi|^2\left(\frac{\mu'}{2(1-\mu)}+
		\frac{1-\mu}{r}\right)f\\
		&&\hbox{}-\frac{1-\mu}r f\phi^\alpha
\phi_\alpha-\frac 14 \left (\Box f'\right) 
		\phi^2\\
		&=& 	\frac{f'}{1-\mu}(\partial_{r^*}\phi)^2
		+|\nabb\phi|^2\left(\frac{\mu'}{2(1-\mu)}+
		\frac{1-\mu}{r}\right)f\\
		&&\hbox{}-\frac{1-\mu}r f\phi^\alpha
\phi_\alpha-\frac 14 \left (
		 \frac{1}{1-\mu}f'''+\frac 2rf''		\right)\phi^2.
\end{eqnarray*}

The expression $J^{V,3}_\mu$ is not a  compatible current in the sense of
Section~\ref{CCsec}, 
since $K^{V,3}$ depends
on the $2$-jet of $\phi$, but it can be treated as such when
restricted to eigenfunctions of $\Trb$.

\subsection{Discussion}

The relation of the photon sphere to currents based on
vector fields $V$ of the form $f(r^*)\partial_{r^*}$ is most clear upon
examining the modified current $J^{V,1}_\mu$ and noting the coefficient
of $|\nabb\phi|^2$ in $K^{V,1}$ vanishes precisely at $r=3M$ in view of $(\ref{noteid1})$.
This indicates that if one is to have say $K^{V,1}\ge 0$, the function $f$ must change sign at $r=3M$, 
and the control of the angular derivatives must degenerate at least quadratically.

The task of choosing a suitable $f$ is simplified by passing to the
further modified current $J^{V,2}$
which effectively ``borrows'' positivity from the $\partial_{r^*}\phi$ term. 
Finally, one can take advantage of the further flexibility provided by chosing separately
$V$ for each spherical harmonic, where now,
after passing to the current $J^{V,3}_\mu$, 
the $0$'th order terms are united with the
angular derivative terms.  In Section~\ref{tvfX},
we shall construct
a current 
\[
J^{X,3}_\mu(\phi)\doteq J^{X_0}_\mu(\phi_0)+ \sum_{\ell} J^{X_{\ell},3}_\mu(\phi_{\ell}).
\]
We do not in fact ensure that $K^{X,3}\ge 0$ everywhere, but rather, only in the region
$r_0\le r\le R_0$. The region near the horizon will be handled with the help of
the spacetime integral terms controlled by the 
currents $J^Y$, $J^{\overline{Y}}$ to be discussed in the next section.

Once an initial positive definite spacetime integral (albeit modulo an error) is constructed,
other quantities can be controlled with the help of auxiliary currents. These 
are defined in Section~\ref{auxcursec}. It is there where we also use the current template $J^{V,4}$.

\subsection{The vector fields $X_\ell$}
\label{tvfX}
\subsubsection{The case $\ell=0$}
\label{casel0}
Define 
\[
f_0=-r^{-2},
\]
and set $X_0=f_0\partial_{r^*}$.
Let $\phi_0$ denote the $0$'th spherical harmonic of a solution $\Box_g\phi=0$
of the wave equation.
Consider $J_\mu^{X_0,0}(\phi_0)$, $K^{X_0,0}(\phi_0)$ as defined above.
We compute
\begin{equation}\label{eq:f0}
K^{X_0,0}(\phi_0) = 2r^{-3}\frac{(\partial_{r^*}\phi_0)^2}{1-\mu}.
\end{equation}
Note that
\[
K^{X_0,0}(\phi_0)\ge 0.
\]

\subsubsection{The case $\ell\ge 1$}
\label{caselg1}
Consider for now the expression $K^{V,3}$ for a general vector field $V$ of the form
\eqref{eq:V}, applied to a spherical harmonic $\phi_\ell$ with 
spherical harmonic number $\ell\ge 1$. 
We have
\begin{eqnarray}
\label{kv2}
K^{V,3}(\phi_\ell)&=&
\frac{f'}{(1-\mu) r^2}\left(\partial_{r^*} (r\phi_\ell)\right)^2-
\frac14\frac1{1-\mu}{f'''}\phi_\ell^2\\
\nonumber
&&\hbox{}+ f\left(\ell(\ell+1) \frac {1-3M/r}{r^3}+\frac{M}{r^4}\left(3-\frac{8M}r\right)
+\frac{M\Lambda}{3r^2}-\frac{2\Lambda^2r}9\right) \phi_\ell^2.
\end{eqnarray}
Define
\[
h_\ell(r)= 
\ell(\ell+1) \frac {1-3M/r}{r^3}+\frac{M}{r^4}\left(3-\frac{8M}r\right)
+\frac{M\Lambda}{3r^2}-\frac{2\Lambda^2r}9.
\]
The following lemma is proven in Appendix~\ref{vanapp}
\begin{lemma}
\label{vanlem}
For all $\ell\ge 1$, there exists a unique zero $r_{h_\ell}$ of the function $h_\ell(r)$ in $[r_b,r_c]$,
and there exist constants $r_2^*<0$, $R_2^*>0$, depending only on $M$ and $\Lambda$, 
such that 
\[
r_2<r_{h_\ell}<R_2.
\]
Moreover, 
$\lim_{\ell\to \infty} r_{h_\ell}\to 3M$. 
\end{lemma}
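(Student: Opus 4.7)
The plan is to clear denominators by multiplying by $r^5$ and study the resulting polynomial
\[
F_\ell(r) \doteq r^5 h_\ell(r) = \ell(\ell+1)\, r(r-3M) + P(r),
\]
where $P(r) \doteq 3Mr - 8M^2 + \frac{M\Lambda r^3}{3} - \frac{2\Lambda^2 r^6}{9}$ is $\ell$-independent. The $\ell$-dependent term vanishes precisely at the photon sphere $r=3M$, which is the value we expect $r_{h_\ell}$ to approach. The crucial first step is to evaluate $P$ at the horizons: using the horizon relation $\Lambda r_h^3 = 3(r_h - 2M)$ extracted from $(\ref{sxe3iswsn})$, one obtains the remarkably clean identities
\[
P(r_b) = -2(r_b - 3M)^2, \qquad P(r_c) = -2(r_c - 3M)^2,
\]
together with $P(3M) = M^2(1-x)(1+2x) > 0$ where $x \doteq 9M^2\Lambda \in (0,1)$ by $(\ref{NEBH})$.

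Existence of a root then follows from the intermediate value theorem: for $\ell \ge 1$,
\[
F_\ell(r_b) = (r_b - 3M)\bigl[\ell(\ell+1)\, r_b - 2(r_b - 3M)\bigr],
\]
whose first factor is negative ($r_b < 3M$) and whose bracket is at least $2 r_b - 2(r_b - 3M) = 6M > 0$ for $\ell \ge 1$, so $F_\ell(r_b) < 0$. Combined with $F_\ell(3M) = P(3M) > 0$, IVT produces a zero $r_{h_\ell} \in (r_b, 3M)$.

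For uniqueness in $[r_b, r_c]$, I would apply Descartes' rule of signs to the degree-six polynomial $F_\ell$. Its nonzero coefficients, from $r^6$ down to $r^0$, are
\[
-\tfrac{2\Lambda^2}{9},\ \tfrac{M\Lambda}{3},\ \ell(\ell+1),\ -3M[\ell(\ell+1)-1],\ -8M^2,
\]
with sign pattern $-,+,+,-,-$, exhibiting exactly two sign changes; hence $F_\ell$ admits at most two positive real roots. A second positive root lies in $(r_c, \infty)$ by IVT applied to $F_\ell(r_c) = (r_c - 3M)[\ell(\ell+1)r_c - 2(r_c - 3M)] > 0$ and the fact that $F_\ell(r) \to -\infty$ as $r \to \infty$. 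With both positive roots accounted for, $F_\ell$ has no root in $[3M, r_c]$, so $r_{h_\ell}$ is the unique zero in $[r_b, r_c]$.

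Finally, dividing $F_\ell(r_{h_\ell}) = 0$ by $\ell(\ell+1)$ yields $r_{h_\ell}(3M - r_{h_\ell}) = P(r_{h_\ell})/[\ell(\ell+1)]$, whose right-hand side tends to $0$ uniformly as $\ell \to \infty$ since the numerator is bounded on $[r_b, r_c]$, forcing $r_{h_\ell} \to 3M$. For the uniform bounds, any $R_2 \in (3M, r_c)$ works since $r_{h_\ell} < 3M$; for the lower bound, I would verify that $P'(r) > 0$ on $(r_b, 3M)$ (a consequence of the non-extremality $9M^2\Lambda < 1$ after the substitution $r = 3Ms$), whence the function $G(r) \doteq P(r)/[r(3M-r)]$ is strictly increasing on the subinterval of $(r_b, 3M)$ where $G \ge 0$. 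Since $G(r_{h_\ell}) = \ell(\ell+1)$, the map $\ell \mapsto r_{h_\ell}$ is then monotone increasing, so $\min_\ell r_{h_\ell} = r_{h_1} > r_b$, and any $r_2 \in (r_b, r_{h_1})$ serves. The main technical obstacle is the identification of the clean boundary values $P(r_b) = -2(r_b-3M)^2$ and $P(r_c) = -2(r_c-3M)^2$; once these are in hand, both the IVT-based existence and the Descartes-style uniqueness argument are reduced to pure algebra.
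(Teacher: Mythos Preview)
Your proof is correct and takes a genuinely different route from the paper's. The paper multiplies by $r^3$ and rewrites $A(r)=r^3h_\ell(r)=\ell(\ell+1)(1-3M/r)+3M/r+3\Lambda Mr-2\mu^2$, then shows $A$ is \emph{concave} on $(r_b,r_c)$; since a concave function has a convex positivity set, the sign change $A(r_b)<0<A(r_c)$ (computed via $\mu(r_b)=\mu(r_c)=1$) forces exactly one zero. You instead multiply by $r^5$ to obtain a degree-six polynomial, evaluate at the horizons via the same relation $\Lambda r_h^3=3(r_h-2M)$ to get the clean values $P(r_h)=-2(r_h-3M)^2$, and then invoke Descartes' rule of signs together with the IVT to count positive roots. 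Your approach is more algebraic and delivers the sharper localisation $r_{h_\ell}\in(r_b,3M)$ directly; the paper's concavity argument is shorter and more geometric.

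One small gap: your monotonicity argument for the uniform lower bound asserts that $P'>0$ on $(r_b,3M)$ implies $G(r)=P(r)/[r(3M-r)]$ is strictly increasing on $\{G\ge0\}$. This implication is not automatic: writing $D(r)=r(3M-r)$, one has $G'>0$ iff $P'D>PD'$, and for $r<3M/2$ (where $D'>0$) this does not follow from $P'>0$, $P>0$ alone. The claim is nevertheless \emph{true} here because one can check $P(3M/2)<0$ (equivalently, with $x=9M^2\Lambda$, $-7/2+x/8-x^2/32<0$), so the unique zero $r_0$ of $P$ in $(r_b,3M)$ satisfies $r_0>3M/2$, and on $(r_0,3M)\subset(3M/2,3M)$ one has $D'<0$, giving $G'>0$ immediately. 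Alternatively, monotonicity is unnecessary: since $r_{h_\ell}\to 3M$ and each $r_{h_\ell}\in(r_b,3M)$, only finitely many terms lie below $3M-\varepsilon$ for any $\varepsilon>0$, so $\inf_\ell r_{h_\ell}=\min_\ell r_{h_\ell}>r_b$ is achieved at some finite $\ell$; this is essentially how the paper handles it.
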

\emph{Let $r_2^*$,  $R_2^*$ now be fixed, chosen according to the above lemma.}

Were the middle term on the right hand side of $(\ref{kv2})$ absent,
then we would have $K^{V,3}(\phi_\ell)\ge 0$ for any $f$ such that $f'\ge 0$,
$f(r_{h_\ell})=0$.

The middle term of $(\ref{kv2})$ vanishes if $f'''=0$, but, with the requirement that $f(r_{h_\ell})=0$, 
in this case the function $f$
cannot be bounded. By suitably cutting off a function linear in $r^*$, we can ensure 
that $K^{V,3}(\phi_\ell)\ge 0$ in $r_0\le r\le R_0$.

Let $\eta_0$ be the cutoff of Section~\ref{cutsecchi}.
%
%
%
Define 
\[
f_{\ell}(r^*) = \int _{r_{h_{\ell}}^*}^{r^*} \eta_0\,  dr^*,
\]
and
the vector field
$X_{\ell}$ by
\begin{equation}
\label{thisform}
X_{\ell}=\frac12 
f_{\ell}\frac{\partial}{\partial v}-\frac12f_{\ell}\frac{\partial}{\partial u}
=f_{\ell}\frac{\partial}{\partial r^*}.
\end{equation}

We have
\begin{equation}
\label{initposit}
K^{X_{\ell}, 3}(\phi_\ell) \ge 0
\end{equation}
in the region
$r^*_0\le r \le R^*_0$. Moreover, in this region we have in fact,
\begin{equation}\label{eq:X3}
\frac{\left(\pa_{r^*}(r\phi_\ell)\right)^2}{1-\mu}+(r-r_{h_\ell})^2 \phi_\ell^2
\le 
C K^{X_{\ell}, 3}(\phi_\ell).
\end{equation}
(Recall our conventions for constants $C$ from Section~\ref{depcon}.)

\subsubsection{The currents $J^{X,i}$}\label{Aux}
Finally, for $i=1,2,3$ define the ``total'' currents
\[
J_\mu^{X,i}(\phi)= J^{X_0,0}_\mu+ \sum_{\ell\ge 1} J_\mu^{X_\ell,i}(\phi_\ell),
\]
and their divergences
\[
K^{X_\ell,i}(\phi_\ell)= \nabla^\mu J_{\mu}^{X_\ell,i}(\phi_\ell),
\]
\[
K^{X,i}(\phi)=K^{X_0,0}(\phi_0)+ \sum_{\ell\ge 1} K^{X_\ell,i}(\phi_\ell) = 
\nabla^\mu J_\mu^{X,i}(\phi).
\]

\subsubsection{Controlling the error}
\label{errcontrol}

Besides obtaining nonnegativity for $K^{X,3}(\phi)$
in the region $r_0\le r\le R_0$, we need to understand
the error in the region $r\le r_0$ and $r\ge R_0$.
It turns out that this error can be controlled by\footnote{Recall the conventions
regarding constants $\epsilon$ in Section~\ref{depcon}.} 
$\epsilon\hat{q}(\phi)$, where $\hat{q}(\phi)$ is a slightly
stronger quantity than the energy density $q(\phi)$. 

Define the quantity
\begin{eqnarray}
\label{hatqpw}
\hat{q}(\phi)&\doteq& \left (T_{\mu\nu}(\phi)\tilde{N}^\mu \frac{n^\nu}{1-\mu}+
\frac{\eta_{-1}(\chi_1+\overline\chi_1)}{1-\mu}\,|r^*|^{-\delta-1} |\nabb\phi|^2+ \eta_1 (P \phi)^2
\right).
\end{eqnarray}
Here $\eta_{-1}$, $\chi_{1}$, $\overline{\chi}_{1}$ 
are the cut-off functions defined in Section \ref{cutsec}.
Note that 
\begin{align}\label{eq:qphi}
\hat q(\phi)\approx \chi_1 \frac {(\pa_u\phi)^2}{(1-\mu)^2} &+  
\overline\chi_1 \frac {(\pa_v\phi)^2}{(1-\mu)^2}+\frac{\eta_{-1}(\chi_1+\overline\chi_1)}{1-\mu}
\,|r^*|^{-\delta-1} |\nabb\phi|^2\\&+(r-3M)^2\left ( (\pa_t\phi)^2 + |\nabb\phi|^2\right)
+(\pa_{r^*}\phi)^2\notag.
\end{align}
Compare with  
\begin{align}\label{eq:justqphi}
q(\phi)\approx \chi_1 \frac {(\pa_u\phi)^2}{(1-\mu)^2} &+  
\overline\chi_1 \frac {(\pa_v\phi)^2}{(1-\mu)^2}+(\pa_{r^*}\phi)^2\\&+(r-3M)^2\left ( (\pa_t\phi)^2 + |\nabb\phi|^2\right)
\notag,
\end{align}
where $q(\phi)$ is 
the density of the main quantity ${\bf Q}_\phi$ defined in \eqref{eq:justQ}.

The inequality replacing $(\ref{initposit})$ which holds globally is given by
\begin{lemma}
\label{controllem}
The inequality 
\begin{equation}
\label{itsufhere}
\int_{\mathbb S^2} K^{X,3}(\phi)\ge -\epsilon \int_{\mathbb S^2} \hat q({\phi})
\end{equation}
holds on all spheres of symmetry.
\end{lemma}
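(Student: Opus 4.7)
The plan is to work spherical-harmonic by spherical-harmonic, exploiting $L^2(\mathbb S^2)$-orthogonality. Decomposing $\phi=\sum_\ell\phi_\ell$, the cross terms vanish and
\[
\int_{\mathbb S^2}K^{X,3}(\phi) = \int_{\mathbb S^2}K^{X_0,0}(\phi_0) + \sum_{\ell\ge 1}\int_{\mathbb S^2}K^{X_\ell,3}(\phi_\ell).
\]
The $\ell=0$ term is manifestly nonnegative by \eqref{eq:f0}, so the real work lies in the $\ell\ge 1$ pieces. Specializing \eqref{kv2} to $V=X_\ell$ with $f_\ell'=\eta_0$ and $f_\ell'''=\eta_0''$, I write
\[
K^{X_\ell,3}(\phi_\ell) = \frac{\eta_0}{(1-\mu)r^2}\bigl(\partial_{r^*}(r\phi_\ell)\bigr)^2 - \frac{\eta_0''}{4(1-\mu)}\phi_\ell^2 + f_\ell h_\ell\,\phi_\ell^2,
\]
and would show that the first and third summands are pointwise $\ge 0$ throughout $(r_b,r_c)$, while the middle summand is the only possible obstruction.

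The first summand is obvious from $\eta_0\ge 0$. For the third, observe that $f_\ell$ is monotone nondecreasing in $r^*$ with its unique zero at $r_{h_\ell}^*$, so $\operatorname{sgn}(f_\ell)=\operatorname{sgn}(r-r_{h_\ell})$; by Lemma~\ref{vanlem} together with a check of the signs of $h_\ell(r_b)$ and $h_\ell(r_c)$, the function $h_\ell$ has the same sign structure, so $f_\ell h_\ell\ge 0$ globally. The middle summand is supported where $\eta_0''\ne 0$, i.e., in the two transition annuli $\{2r_0^*\le r^*\le r_0^*\}$ and $\{R_0^*\le r^*\le 2R_0^*\}$, on which $|\eta_0''|\le C/(r_0^*)^2$ or $C/(R_0^*)^2$ by the chain rule.

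To dominate this remaining bad term by $\epsilon\int_{\mathbb S^2}\hat q(\phi)$ I would apply a Poincar\'e inequality on the sphere: for $\ell\ge 1$,
\[
\int_{\mathbb S^2}\phi_\ell^2\,dA_{\mathbb S^2}\le\frac{r^2}{\ell(\ell+1)}\int_{\mathbb S^2}|\nabb\phi_\ell|^2\,dA_{\mathbb S^2}.
\]
Summing and using $1/(\ell(\ell+1))\le 1/2$, boundedness of $r$, and orthogonality $\sum_{\ell\ge 1}\int_{\mathbb S^2}|\nabb\phi_\ell|^2\le\int_{\mathbb S^2}|\nabb\phi|^2$, I obtain
\[
\sum_{\ell\ge 1}\int_{\mathbb S^2}\frac{|\eta_0''|}{4(1-\mu)}\phi_\ell^2 \le \frac{C|\eta_0''|}{1-\mu}\int_{\mathbb S^2}|\nabb\phi|^2.
\]
On the supports of $\eta_0''$ one checks directly that $\eta_{-1}=1$, $\chi_1+\overline\chi_1=1$ and $|r^*|\le 2\max\{|r_0^*|,R_0^*\}$. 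Choosing the parameter $\delta<1$, the ratio $|\eta_0''|/|r^*|^{-\delta-1}$ is bounded by $C|r_0^*|^{\delta-1}\to 0$, so it is an $\epsilon$-constant in the sense of Section~\ref{depcon}. Since the $(1-\mu)^{-1}$ factors cancel exactly against the matching angular weight in $\hat q$ given in \eqref{hatqpw}, the bad term is controlled by $\epsilon\int_{\mathbb S^2}\hat q(\phi)$, which completes the proof.

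The main obstacle is arranging the compatibility between the cutoff $\eta_0$ and the weighted quantity $\hat q$: the weight $|r^*|^{-\delta-1}$ must pointwise dominate $|\eta_0''|=O(|r_0^*|^{-2})$ on its support, which forces the choice $\delta<1$, and the singular $(1-\mu)^{-1}$ factors must cancel. Both features are built into \eqref{hatqpw} by design, so the estimate closes cleanly without any further absorption or smallness hypothesis beyond the $\epsilon$-convention of Section~\ref{depcon}.
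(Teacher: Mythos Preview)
Your proof is correct and follows essentially the same approach as the paper: both isolate the single bad term $-\frac{f_\ell'''}{4(1-\mu)}\phi_\ell^2$ supported in the transition annuli, bound $|f_\ell'''|\le C|r^*|^{-2}$ there, apply the spherical Poincar\'e inequality to trade $\sum_{\ell\ge1}\phi_\ell^2$ for $|\nabb\phi|^2$, and compare against the weighted angular term $\frac{\eta_{-1}(\chi_1+\overline\chi_1)}{1-\mu}|r^*|^{-\delta-1}|\nabb\phi|^2$ in $\hat q$, yielding $\epsilon\sim|r_0^*|^{\delta-1}+|R_0^*|^{\delta-1}$. Your explicit verification that $f_\ell h_\ell\ge0$ globally (via the sign analysis of Appendix~\ref{vanapp}) is a welcome clarification of what the paper leaves implicit when it drops the first and third summands.
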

%
\begin{proof}
It suffices to consider the regions $r\le r_0$ and $r\ge R_0$.
Relation \eqref{eq:qphi} implies
$$
C \hat q(\phi) \ge 
 \left(1+\frac {\eta_{-1}(\chi_1+\overline\chi_1)}{1-\mu}\right) \, |r^*|^{-\delta-1}\,|\nabb\phi|^2,
$$
in these two regions,
while 
\begin{align*}
\int_{\mathbb S^2} K^{X,3}(\phi)&\ge \sum_{\ell\ge 1}
\int_{\mathbb S^2} K^{X_\ell,3}(\phi_\ell)\\ &\ge -\sum_{\ell\ge 1}
\int_{\mathbb S^2} \frac{f_\ell'''}{1-\mu}\phi_\ell^2.
\end{align*}
The function $f_\ell'''$ is supported in the region 
$[2r_0^*,r_0^*]\cup [R_0^*, 2R_0^*]$ and obeys the pointwise bound 
\begin{equation}
\label{byourapp}
|f_\ell'''(r^*)|\le C |r^*|^{-2},
\end{equation}
under our conventions for the constants denoted $C$.
Note that  $\eta_{-1}(\chi_1+\overline\chi_1)=1$ in the support of $f'''$.
Therefore,
$$
\int_{\mathbb S^2}K^{X,3}(\phi)\ge -C
\int_{\mathbb S^2}\frac 1{1-\mu} |r^*|^{-2} \, (\phi-\phi_0)^2
$$
holds on all spheres of symmetry.
We obtain $(\ref{itsufhere})$
 with 
$$
\epsilon=C\left(|r_0^*|^{-1+\delta}+|R_0^*|^{-1+\delta}\right).
$$
\end{proof}

\subsection{Auxiliary currents}
\label{auxcursec}
We will also need several ``auxiliary'' 
currents. 

\subsubsection{Auxiliary positive definite pointwise quantities}
\label{apdpq}
Let us first define, however, certain auxiliary positive definite quantities.
The auxiliary currents $K$ will be seen to bound these quantities
when integrated on spheres of symmetry in the region $r_0\le r\le R_0$.

Define
\begin{eqnarray*}
q_i(\phi) &\doteq&\eta_i q(\phi) = \eta_i (T_{\mu\nu}(\phi)\tilde{N}^\mu n^\nu(1-\mu)^{-1} + (P_2 \phi)^2)\\
 q^a_i(\phi) &\doteq& \eta_i (P_{-1}\phi)^2,\\
q^{a'}_i(\phi) &\doteq& \eta_i (\phi-\phi_0)^2,\\
q^b_i(\phi)&\doteq& \eta_i (r-3M)^2|\nabb\phi|^2,\\
q^d_i(\phi) &\doteq& \eta_i (r-3M)^2(T \phi)^2.
\end{eqnarray*}
Here $\eta_i$
are the cut-off functions defined in Section \ref{cutsec}.
Note that 
\begin{equation}
 \label{eq:qphi0}
q_1(\phi)\approx C({q}^a_1(\phi)+{q}^b_1(\phi)+{q}^d_1(\phi)),
\end{equation}
\[
q^x_{i+1}\le q^x_{i}
\]
for $x=\emptyset, a, a', b, d$, 
and that
\[
r^{-2} \ell(\ell+1)\int_{\mathbb S^2} (r-3M)^2q_i^{a'}(\phi_\ell) r^2 dA_{\mathbb S^2}
= \int_{\mathbb S^2} q_i^b(\phi_\ell)r^2 dA_{\mathbb S}^2
\]
for all spheres of symmetry.

The currents to be described in what follows are motivated by the problem of
bounding the positive definite quanitites whose latin superscripts they share.

\subsubsection{The current $J^{{X^a},3}$}

Define
\[
{f}^a_{\ell}(r^*)\doteq -\frac16\eta_2(r^*) (r^*-r_{h_\ell})^3,
\]
where $\eta_2$ is as defined in Section~\ref{cutsecchi}.

Note that $({f}^a_{\ell})'''=-1$ on $[r_2,R_2]$,
${f}^a_\ell (r_{h_\ell})=0$, 
and ${f}^a_{\ell}=0$ for $r^*\le 2r^*_2$, $r^*\ge 2 R^*_2$.

Set ${X}^a_\ell = f^a_\ell\partial_{r^*}$ 
and define
\[
J^{{X^a},3}_\mu(\phi) = \sum_{\ell\ge 1} J^{{X}^a_{\ell},3}_\mu(\phi_{\ell}),
\]
\[
K^{{X}^a,3}(\phi)= \sum_{\ell\ge 1} K^{{X}^a_{\ell},3}(\phi_\ell)=
\nabla^\mu J^{{X}^a,3}_\mu(\phi).
\]

The current $J^{X^a,3}(\phi)$ will allow us to bound the spacetime integrals of the quantities
${q}^a_0(\phi)$ and $q^{a'}_0(\phi)$. See Lemma~\ref{q1lem}. 
At this point, we can
see from \eqref{kv2} and \eqref{eq:X3} 
that for each $\ell$, the pointwise bound
\begin{equation}\label{eq:Xa}
{q}^a_1(\phi_\ell)+q^{a'}_0 (\phi_\ell) +\ell(\ell+1)(r-r_{h_\ell})^2q^{a'}_0(\phi_\ell)
\le
C K^{{X}_\ell,3}(\phi_\ell)+C K^{{X}^a_\ell,3}(\phi_\ell)
\end{equation}
holds
in $r_0\le r\le R_0$.

\subsubsection{The currents $J^{{X^b},2}$, $J^{X^b,0}$}

Define 
\[
f^b(r^*) \doteq \eta_2(r^*) (r-3M)
\]
Set ${X}^b=f^b \frac{\partial}{\partial r^*}$
 and define as before the currents
$J^{{X}^b,2}(\phi-\phi_0)$, $J^{{X}^b,0}(\phi)$ and $K^{{X}^b,2}(\phi-\phi_0)$, $K^{{X}^b,0}(\phi)$.

The current $J^{{X}^b,2}(\phi-\phi_0) $ will allow us to bound--in addition to the previous--the
spacetime integral of the quantity $q^b_1(\phi)$. See Lemma~\ref{q3lem}. 
At this point, we can deduce from  \eqref{eq:f0}, \eqref{kv2} and \eqref{eq:Xa} the
bound
\begin{align}\label{eq:Xb}
\int_{{\Bbb S}^2} &\left ( q_0^a(\phi) +q_0^{a'}(\phi)+ q_0^b(\phi) \right) 
r^2 dA_{{\Bbb S}^2}\\
&\le
\int_{{\Bbb S}^2} \left(C K^{X,3}(\phi) +C K^{X^a,3}(\phi) + C K^{X^b,2}(\phi-\phi_0)\right)
r^2 dA_{{\Bbb S}^2}\notag
\end{align}
on each sphere of symmetry with $r^*\in [r_0^*, R_0^*]$.

The current $J^{{X}^b,0}(\phi)$, in conjunction also with $J^{X^c,0}(\phi_0)$ and
$J^{X^d,4}(\phi-\phi_0)$ to be defined below,
will be used to help bound the quantity ${\bf Z}^N_\phi$
in  Proposition~\ref{metaaux}. For now, note that
 from  \eqref{eq:KV0},
\eqref{eq:f0}, \eqref{kv2} and \eqref{eq:Xb}, the estimate
\begin{align}\label{eq:Xb0}
 \int_{{\Bbb S}^2} &\Big(q^a_0(\phi)+q_0^{a'}(\phi)+ q^b_0(\phi)
\\
&\qquad-C\big(\eta_2(1-\mu)(2(r-3M)/r +1)+\eta_2'(r-3M)\big)\, \pa^\mu \phi\,\pa_\mu\phi\Big) r^2 dA_{{\Bbb S}^2},\notag\\
&\le
\int_{{\Bbb S}^2} \left(C K^{X,3}(\phi) +C K^{X^a,3}(\phi) + C K^{X^b,2}(\phi-\phi_0) +C K^{X^b,0}(\phi)\right)
r^2 dA_{{\Bbb S}^2}
\notag
\end{align}
holds on each sphere of symmetry with $r^*\in [r_0^*,R_0^*]$.
Note that the left hand side of \eqref{eq:Xb0}
\emph{a priori} does not have a sign, in view of the term containing $\partial^\mu\phi\partial_\mu
\phi$. After defining $J^{X^c,0}(\phi_0)$ and
$J^{X^d,4}(\phi-\phi_0)$  below, we shall be able to 
improve~\eqref{eq:Xb0} with~\eqref{eq:Xbd02}.
\vskip 1pc

\subsubsection{The current $J^{{X^c},0}$}
Define 
\[
f^c(r^*)\doteq r^2.
\]
Set $X^c=f^c\frac{\partial}{\partial r^*}$,
and define as before $J^{{X}^c,0}$, $K^{X^c,0}$.

We have that
\begin{equation}
\label{giatomndeviko}
K^{X^c,0}(\phi_0)= 2r(\partial_t\phi_0)^2.
\end{equation}

\subsubsection{The current $J^{{X^d},4}$}
Let us then finally define
\[
f^d(r^*)\doteq \eta_1(r^*) (r-3M)^3.
\]
Set ${X}^d=f^d \frac{\partial}{\partial r^*}$
 and define as before the currents
$J^{{X}^d,4}$ and $K^{{X}^d,4}$. 

The currents $J^{X^c,0}(\phi_0)$ and $J^{{X}^d,4}(\phi-\phi_0)$ 
will allow us to bound the spacetime integral of the quantity ${q}^d_1(\phi)$.
See Lemma~\ref{q4lem}. For now, it follows from \eqref{eq:KV0}, \eqref{eq:Xb} and
\eqref{giatomndeviko} that on each sphere of symmetry with
$r^*\in [r_0^*, R_0^*]$, we have
\begin{align}\label{eq:Xd}
 \int_{{\Bbb S}^2} &\left (q_0^a(\phi) + q_1^{a'}(\phi)+
q_0^b(\phi)+ q_1^d(\phi)  \right) r^2 dA_{{\Bbb S}^2}\\
&\le
\int_{{\Bbb S}^2} \left (C K^{X,3}(\phi) +C K^{X^a,3}(\phi) + C K^{X^b,2}(\phi-\phi_0)+
CK^{X^c,0}(\phi_0)\right.\notag \\
&\qquad\left.+
C K^{X^d,4}(\phi-\phi_0)\right)
r^2 dA_{{\Bbb S}^2}.\notag
\end{align}

The currents $J^{{X}^b,0}(\phi)$, $J^{X^c,0}(\phi_0)$ and $J^{X^d,4}(\phi-\phi_0)$ together allow
us to estimate, on each sphere of symmetry for $r^*\in[r_0^*,R_0^*]$, the quantity
\begin{align}\label{eq:Xbd02}
\int_{{\Bbb S}^2} &C \eta_2(2(r-3M)/r +1) ((\partial_t\phi)^2-|\nabb\phi|^2)\, r^2 dA_{{\Bbb S}^2}
\\
&\le
\int_{{\Bbb S}^2} \left (C K^{X,3}(\phi) +C K^{X^a,3}(\phi) + C K^{X^b,2}(\phi-\phi_0)+
CK^{X^b,0}(\phi)\right.\notag\\
&\qquad\left.+CK^{X^c,0}(\phi_0)+ C K^{X^d,0}(\phi-\phi_0)\right)
r^2 dA_{{\Bbb S}^2}.
\notag
\end{align}
Again, the left hand side of $(\ref{eq:Xbd02})$ does not have a sign.

\section{The currents $J^Y$ and $J^{\overline{Y}}$}
\label{Ysec}
In this section, we define the currents $J_\mu^Y$ and $J^{\overline{Y}}_\mu$, associated
to two vector fields $Y$ and $\overline{Y}$, supported near the black hole and cosmological
horizons, respectively. The role of these currents in capturing the ``red-shift'' effect has
been discussed in the introduction.

\subsection{The vector fields $Y$ and $\overline{Y}$}
Let $0<\delta<1 $ be a small number,
and define functions $\alpha(r^*)$, $\beta(r^*)$ as follows.
Recall the cutoff functions $\chi_1$, $\eta_{-1}$ from Section~\ref{cutsecchi},
and define
\begin{equation}
\label{alphadef}
\alpha(r^*)=\chi_1(r^*)(2-\mu+ \eta_{-1}(r^*) |r^*|^{-\delta}),
\end{equation}
\begin{equation}
\label{betadef}
\beta(r^*)=2r_b ^{-2}(2M/{r_b}^2 -2\Lambda r_b/3)^{-1} \chi_1(r^*) (1-\mu+ \eta_{-1}(r^*) |r^*|^{-\delta}).
\end{equation}

We have $\alpha\ge0$, $\beta\ge0$. To see the latter, 
note that 
$$
\frac d{dr} (1-\mu)\Big|_{r=r_b}=\left(\frac {2M}{r^2} -\frac 23 \Lambda r\right)\Big|_{r=r_b}>0
$$
since $r_-<r_b<r_c$ are three non-degenerate roots of $(1-\mu)$, which is non-negative on 
$[r_b,r_c]$. The above expression approaches $0$ as $M$ and $\Lambda$ tend
to the extremal values.

Similarly, define functions 
 $\overline{\alpha}$, $\overline{\beta}$ as follows.
Recall the cutoff $\overline\chi_1$ from Section~\ref{cutsecchi}, and
define
\begin{equation}
\label{alphadef2}
\overline\alpha(r^*)=\overline\chi_1(r^*)(2-\mu+\eta_{-1} |r^*|^{-\delta}),
\end{equation}
\begin{equation}
\label{betadef2}
\overline\beta(r^*)=2r_c ^{-2}(2\Lambda r_c/3-2M/{r_c}^2)^{-1} \overline
\chi_{1}(r^*) (1-\mu+\eta_{-1} |r^*|^{-\delta}).
\end{equation}
Again, note that $\overline\alpha\ge 0$, $\overline\beta\ge 0$.

Set $Y$ to be the vector
field
\[
Y=\frac {\alpha(r^*)}{1-\mu}\frac{\partial}{\partial u} +\beta(r^*)\frac{\partial}{\partial v}
\]
and $\overline{Y}$ to be
\[
\overline{Y} =\frac {\overline{\alpha}(r^*)}{1-\mu}\frac{\partial}{\partial v} +
\overline{\beta}(r^*)\frac{\partial}{\partial u}.
\]

The application of cutoff $\eta_{-1}$  
is  in fact not essential for the arguments of the paper. The cutoff simply
ensures that $Y$ and $\overline{Y}$ have smooth extensions beyond $\mathcal{H}^+$
and $\overline{\mathcal{H}}^+$.

\subsection{Definition of the currents}
Define the currents 
\[
J_\mu^Y(\phi) = T_{\mu\nu}Y^\nu (\phi),
\]
\[
J_\mu^{\overline{Y}}(\phi)=T_{\mu\nu} \overline{Y}^\nu (\phi),
\]
and set
\[
K^Y(\phi)=\nabla^\mu J^Y_\mu(\phi),
\]
\[
K^{\overline{Y}}(\phi)=\nabla^\mu J^{\overline{Y}}_\mu(\phi).
\]
We have
\begin{eqnarray}
\label{toexoumeauto}
\nonumber
K^Y(\phi)&=&T_{\mu\nu}(\phi)\pi_Y^{\mu\nu}\\
	&=&\frac{(\partial_u\phi)^2}{2(1-\mu)^2}
				\left(\alpha \left(\frac{2M}{r^2}-\frac{2
					\Lambda r}3\right) - \alpha'\right)
					+\frac{(\partial_v\phi)^2}{2(1-\mu)}\beta'\\
					&&\hbox{}
\nonumber
		+\frac 12|\nabb\phi|^2\left (\frac {\alpha'}{1-\mu} -
		 \frac {\left(\beta(1-\mu)\right)'}{1-\mu}\right)
				-\frac{1}{r}\left(\frac{\alpha}{1-\mu} -
				\beta\right)\pa_u\phi\, \pa_v\phi
\end{eqnarray}
and similarly 
\begin{eqnarray}
\label{toexoumeauto2}
\nonumber
K^{\overline Y}(\phi)&=&T_{\mu\nu}(\phi)\pi_{\overline Y}^{\mu\nu}\\
		&=&\frac{(\partial_v\phi)^2}{2(1-\mu)^2}
				\left(-\overline\alpha \left(\frac{2M}{r^2}-\frac{2
					\Lambda r}3\right) + \overline\alpha'\right)
					-\frac{(\partial_u\phi)^2}{2(1-\mu)}\overline\beta'\\
					&&\hbox{}
		-\frac 12|\nabb\phi|^2\left ({\overline\alpha'} -
		  {\left(\overline\beta(1-\mu)\right)'}\right)
				+\frac{1}{r}\left({\overline\alpha} -
				\overline\beta(1-\mu)\right)\pa_u\phi\, \pa_v\phi. 
\nonumber
\end{eqnarray}

\subsection{Discussion and the choice of $r_1$, $R_1$}
\label{choicesec}
In this section we exhibit the ``red-shift'' property of the currents $J^Y$, $J^{\overline{Y}}$. This
is essentially contained in Proposition~\ref{polukrisimo} below. In the context of proving this
proposition, we will choose the constants $r_1$, $R_1$.

Note that the polynomial powers in the definitions $(\ref{alphadef})$ $(\ref{betadef})$ would be unnecessary
had they not been present in the definition $(\ref{hatqpw})$. (Their presence
in $(\ref{hatqpw})$ is in turn necessitated by our application of~$(\ref{byourapp})$.)

A slightly unpleasant feature of these polynomial decaying expressions is that, if
left bare, they
would lead to vector fields $Y$, $\overline{Y}$ which fail to be $C^1$ at $\mathcal{H}^+$,
$\overline{\mathcal{H}}^+$, respectively. This would not in fact pose a problem for the analysis
here. We prefer, however, to introduce a cutoff $\eta_{-1}$ in
$(\ref{alphadef})$--$(\ref{betadef2})$ to emphasize the geometric nature of all
objects involved in the proof.

\begin{proposition}
\label{polukrisimo}
For $r_1^*$ sufficiently small and $R_1^*$ sufficiently large, depending only on $M$,
$\Lambda$ we have
\[
C K^Y(\phi) \ge \hat{q}(\phi), \qquad C K^{\overline{Y}}(\phi) \ge \hat{q}(\phi)
\]
in $r\le r_1$
and $r\ge R_1$, respectively.
\end{proposition}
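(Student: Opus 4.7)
The two cases $r \le r_1$ and $r \ge R_1$ are essentially symmetric, so the plan is to give the argument for $J^Y$ in a neighbourhood of $\mathcal{H}^+$; the statement for $J^{\overline{Y}}$ near $\overline{\mathcal{H}}^+$ follows \emph{mutatis mutandis}. The strategy is to compute $K^Y$ directly from~(\ref{toexoumeauto}), show that its coefficient structure is positive definite at $r=r_b$ after handling the cross term, and then extend the pointwise domination of $\hat q$ to a neighbourhood by continuity. Since $\eta_{-1}$ is compactly supported in $r^*$, in a neighbourhood of $\mathcal{H}^+$ one has $\chi_1\equiv 1$ and $\eta_{-1}\equiv 0$, so there $\alpha = 2-\mu$ and $\beta = 2r_b^{-2}c^{-1}(1-\mu)$, where $c := 2M/r_b^2 - 2\Lambda r_b/3 = \kappa_b > 0$ is the surface gravity of $\mathcal{H}^+$. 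Using $dr/dr^*=1-\mu$ and $\mu_r|_{r_b}=-c$, one obtains $\alpha' = c(1-\mu) + O((1-\mu)^2)$ and $\beta' = 2r_b^{-2}(1-\mu) + O((1-\mu)^2)$, giving explicit limits for the four coefficients in~(\ref{toexoumeauto}) as $r\to r_b^+$.

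The leading positive contribution is the \emph{red-shift coefficient}: the coefficient $\tfrac12(\alpha(2M/r^2-2\Lambda r/3)-\alpha')$ of $(\partial_u\phi)^2/(1-\mu)^2$ tends to $c/2 > 0$ (positive because $\alpha\to 1$ while $\alpha'\to 0$), the coefficient $\beta'/(2(1-\mu))$ of $(\partial_v\phi)^2$ tends to $r_b^{-2}$, and the coefficient of $|\nabb\phi|^2$ tends to $c/2$. These three nonnegative terms together control the corresponding pieces of $\hat q$: the factor $(r-3M)^2$ is bounded since $r_b\ne 3M$; $(\partial_{r^*}\phi)^2$ and $(\partial_t\phi)^2$ are dominated by $(\partial_u\phi)^2+(\partial_v\phi)^2$; and $(\partial_u\phi)^2 \le (1-\mu)^2\cdot (\partial_u\phi)^2/(1-\mu)^2$.

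The principal difficulty is the cross term $-\tfrac{1}{r}\bigl(\alpha/(1-\mu)-\beta\bigr)\partial_u\phi\,\partial_v\phi$, whose coefficient behaves like $-1/(r_b(1-\mu))$ and, unlike the other terms, does \emph{not} vanish at the horizon. I would bound it using AM--GM with a weight scaled to $1-\mu$:
\[
|\partial_u\phi\,\partial_v\phi| \le \frac{1}{2\lambda_0(1-\mu)}(\partial_u\phi)^2 + \frac{\lambda_0(1-\mu)}{2}(\partial_v\phi)^2.
\]
This redistributes the cross term as an $O(1)$ loss in the coefficient of $(\partial_u\phi)^2/(1-\mu)^2$ and an $O(1-\mu)$ loss in the coefficient of $(\partial_v\phi)^2$. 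The specific normalisation $2r_b^{-2}c^{-1}$ built into $\beta$ is designed precisely so that, for a suitable $\lambda_0$, the resulting quadratic form in $\bigl(\partial_u\phi/(1-\mu),\,\partial_v\phi\bigr)$ remains positive definite at $r=r_b$. This is the main technical point of the argument, and the only place where the numerical value of the coefficient of $\beta$ (rather than just its sign and its $(1-\mu)$ scaling) enters essentially.

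It remains to control the weighted angular derivative term $\eta_{-1}\chi_1|r^*|^{-\delta-1}|\nabb\phi|^2/(1-\mu)$ appearing in $\hat q$, which is non-trivial only in the transition region where $\eta_{-1}$ is active. The correction $\eta_{-1}|r^*|^{-\delta}$ in $\alpha$ was introduced exactly for this purpose: upon differentiation it contributes a term of order $\delta\,\eta_{-1}|r^*|^{-\delta-1}$ to $\alpha'$, hence $\delta\,\eta_{-1}|r^*|^{-\delta-1}/(1-\mu)$ to the coefficient of $|\nabb\phi|^2$ in $K^Y$, dominating the corresponding piece of $\hat q$. The additional error terms arising from $\chi_1'$, $\eta_{-1}'$ and from the $O((1-\mu)^2)$ corrections above are all supported away from $r=r_b$, where $1-\mu$ is bounded below, and can be absorbed into the leading positive coefficients provided $r_1^*$ is taken sufficiently negative. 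Continuity in $r^*$ then propagates the pointwise domination $CK^Y\ge\hat q$ from $r=r_b$ to the full region $r\le r_1$, fixing the required choice of $r_1^*$. The argument for $\overline{Y}$ in $r\ge R_1$ is structurally identical, with $r_b,c$ replaced by $r_c,\kappa_c$ and $R_1^*$ taken sufficiently large.
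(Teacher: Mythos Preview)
Your approach is essentially the same as the paper's: compute the four coefficients in $K^Y$ from~(\ref{toexoumeauto}), observe that the diagonal terms have favourable signs governed by the surface gravity $c=\kappa_b$, absorb the singular cross term via a weighted Cauchy--Schwarz inequality whose feasibility hinges on the normalisation $\gamma_b=2r_b^{-2}c^{-1}$ built into $\beta$, and account separately for the $\eta_{-1}|r^*|^{-\delta}$ corrections. The paper carries this out by writing the coefficients exactly in closed form for all $r\le r_1$ (rather than as limits at $r_b$ plus $O((1-\mu)^2)$), and isolates the Cauchy--Schwarz step as the explicit ``limiting inequality''~(\ref{limitingineq}).

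One point deserves more care than you give it. You say the $\eta_{-1}'$ error terms are ``supported away from $r=r_b$, where $1-\mu$ is bounded below'' and can therefore be absorbed by taking $r_1^*$ negative enough. But the support of $\eta_{-1}'$ lies in $[8r_0^*,4r_0^*]$, which is \emph{inside} the region $r\le r_1$ and depends on $r_0$; the lower bound on $1-\mu$ there depends on $r_0$ as well. The proposition must produce a constant $C$ independent of $r_0,R_0$ (cf.~the conventions of Section~\ref{depcon} and the paper's closing remark in the proof), so a bare continuity/absorption argument risks smuggling $r_0$-dependence into $C$. The paper handles this by keeping the $\eta_{-1}$ and $\eta_{-1}'$ contributions explicit in the exact coefficient formulas and observing that they enter with the correct sign or with size $O(|r_0^*|^{-1-\delta})$ relative to the leading $c$-term, uniformly. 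Your argument would be complete once you make this uniformity explicit rather than appealing to continuity.
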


\begin{proof}
We give the proof only for $K^Y$.  The proof for $K^{\overline Y}$ is similar.

Recall the 
functions $\alpha$, $\beta$ from Section~\ref{Ysec}. 
Let us denote by 
\[
\gamma_b\doteq 2r_b ^{-2}(2M/{r_b}^2 -2\Lambda r_b/3)^{-1}.
\]

In the region $r \le r_1$,  we have
$$
\alpha=2-\mu +\eta_{-1}(r^*) |r^*|^{-\delta},\qquad
\beta= \gamma_b (1-\mu + \eta_{-1}(r^*) |r^*|^{-\delta}).
$$
As a consequence, for all $r_1^*<0$,
\[
\alpha'= \left(\frac {2M}{r^2} -\frac 23 \Lambda r\right)
(1-\mu)+\eta_{-1}' |r^*|^{-\delta}
+\delta \eta_{-1}' |r^*|^{-\delta},
\]
\[
\beta' = \gamma_b \left(\frac {2M}{r^2} -\frac 23 \Lambda r\right)
(1-\mu)+\gamma_b \eta_{-1}' |r^*|^{-\delta}
+\gamma_b \delta \eta_{-1}' |r^*|^{-1-\delta}.
\]
Recall that 
$$
\left(\frac {2M}{r^2} -\frac 23 \Lambda r\right)\Big|_{r=r_b}>0
$$
and note in addition that $\eta'_{-1}\ge 0$ in the region $r\le r_1$. 

Recall now the
expression~$(\ref{toexoumeauto})$ for $K^Y$.
In the region $r\le r_1$, we compute
\begin{eqnarray*}
\alpha \left(\frac{2M}{r^2}-\frac{2\Lambda r}3\right) - \alpha'&=& (1+\eta_{-1} |r^*|^{-\delta})
\left(\frac {2M}{r^2} -\frac 23 \Lambda r\right)-\eta_{-1}' |r^*|^{-\delta}-\delta \eta_{-1} |r^*|^{-1-\delta},\\
\frac {\alpha'}{1-\mu} -\frac {\left(\beta(1-\mu)\right)'}{1-\mu}&=&
(\mu +\gamma_b(1-\mu)-\gamma_b\eta_{-1}(r^*)|r^*|^{-\delta})  \left(\frac {2M}{r^2} -\frac 23 \Lambda r\right)\\
&&\hbox{}
+((1-\mu)^{-1}-\gamma_b)(\eta_{-1}'|r^*|^{-\delta}+\delta\eta_{-1}|r^*|^{-1-\delta}),\\
\frac 1r \left(\frac \alpha{1-\mu} -\beta\right)
&=& r^{-1}\Big(1-\mu)^{-1} (1+\eta_{-1}(r^*)|r^*|^{-\delta-1})+ 1\\
&&\hbox{} -\gamma_b(1- \mu)  -
\gamma_b\eta_{-1}(r^*)|r^*|^{-\delta}\Big).
\end{eqnarray*}
Note that 
\begin{equation}
\label{limitingineq}
r^{-1}(1-\mu)^{-1} \le    \frac14 (2M/r^2 -2\Lambda r/3) (1-\mu)^{-2}  + 
r^{-2}(2M/r^2 -2\Lambda r/3)^{-1} .
\end{equation}
It follows that $r_1$ can be chosen sufficiently close to $r_b$, where the choice depends
only on $M$, $\Lambda$, such that, in the region $r\le r_1$, we have
\begin{equation}\label{eq:KY}
CK^Y(\phi)\ge \left (\frac{(\partial_u\phi)^2}{(1-\mu)^2}+\left(1 + \frac {\eta_{-1}}{1-\mu} |r^*|^{-\delta-1}
\right)
\left({(\partial_v\phi)^2}+|\nabb\phi|^2\right)\right).
\end{equation}
In deriving~\eqref{eq:KY},
we have used the Cauchy-Schwarz inequality to bound the $\partial_u\phi\partial_v\phi$ term.
(The fact that the constants work out is ensured by the limiting inequality
$(\ref{limitingineq})$. It is here that the presence of the 
constant factor $\gamma_b$ in the definition of
$\beta$ is paramount.)

It now follows from \eqref{eq:qphi} that, in the region $r\le r_1$,
$$
C K^Y(\phi)\ge \hat q(\phi),
$$
as desired.

Note finally, that, in conformance with our conventions of Section~\ref{depcon},
the constant $C$ above does not depend on $r_0$, $R_0$, despite the appearance
of the cutoff $\eta_{-1}$. 

\end{proof}

\emph{
Henceforth, let $r_1$, $R_1$ be chosen so that the conclusion of the above
proposition holds.}

The special values $r_1$, $R_1$, $r_2$, $R_2$ now being fixed, constants denoted $C$ now
depend only on $M$, $\Lambda$.

\section{The current $J^{\Theta}$}
\label{pragmaux}
Let $\zeta:[0,1]\to[0,1]$ be a cutoff function such that
\begin{align*}
&\zeta(x)=1,\qquad x\ge 3/4,\\
&\zeta(x)=0,\qquad x\le 1/2.
\end{align*}
Define
\[
\zeta_{(t_1,t_2)}(\tau)= \zeta((\tau-t_1)/(t_1-t_2)).
\]
Let $\theta$ be the Heaviside step-function and let $\Theta$ to be the vector field
\begin{eqnarray*}
\Theta&\doteq& 
\left (\theta(r_1^*-r^*)\zeta_{(t_1,t_2)}(2v-r_1^*)+\theta(r^*-r_1^*)\theta(R_1^*-r^*)
\zeta_{(t_1,t_2)}(t) \right.\\
&&\left.\hbox{}+ 
\theta(r^*-R_1^*)\zeta_{(t_1,t_2)}(2u+R_1^*)\right) T,
\end{eqnarray*}
and define
as before the currents $J^{\Theta}$ and
$K^{\Theta}$. Despite the appearance of the Heaviside function, $\Theta$ is a $C^{0,1}$ 
vector field. It is of the form $\Theta=\xi T$, where $\xi$ is a
spacetime cut-off function
adapted to the $C^{0,1}$ foliation $\Sigma_t$, constant on each leaf $\Sigma_t$. We have
\begin{align*}
K^\Theta(\phi)&=T_{\mu\nu}(\phi)\pi_\Theta^{\mu\nu}\\
	&=-\frac{1}{2 (1-\mu)} \left ( (\theta(r_1^*-r^*)\zeta'_{(t_1,t_2)}(2v-r_1^*)
	\left ((\pa_u\phi)^2 +\frac 12 (1-\mu)|\nabb\phi|^2\right) \right.\\
	&\left.+2\theta(r^*-r_1^*)\theta(R_1^*-r^*)\zeta_{(t_1,t_2)}(t) 
	\left ((\pa_t\phi)^2+(\pa_{r^*}\phi)^2+ (1-\mu) 
	|\nabb\phi|^2\right)\right.\\&\left.+ 
\theta(r^*-R_1^*)\zeta'_{(t_1,t_2)}(2u+R_1^*)\left ((\pa_v\phi)^2 +\frac 12 (1-\mu)|\nabb\phi|^2\right) \right).
\end{align*}

\section{Proof of the main estimates}
\label{proofME}
\subsection{Auxiliary integral quantities}
\label{auxq}
Let us define
the auxiliary quantities
\begin{eqnarray*}
{\bf Q}^x_{i,\phi}(t_1,t_2)&\doteq& \int_{\mathcal{R}(t_1,t_2)} q^x_i(\phi)\\
\hat {\bf Q}_{\phi}(t_1,t_2)&\doteq& \int_{\mathcal{R}(t_1,t_2)}{\hat {q}(\phi)}\\
{\bf F}^{T}_{\phi}(t_1,t_2)&\doteq&
\int_{\mathcal{H}^+\cap \{v_1\le v \le v_2\} }J^T(\phi) n^\mu
+ \int_{\overline{\mathcal{H}}^+\cap \{u_1\le u \le u_2\} }J^{T}_\mu(\phi) n^\mu\\
{\bf Z}^{T}_\phi(t_i)&\doteq&
\int_{\Sigma_{t_i}} J_\mu^Tn^\mu
\end{eqnarray*}
where $q_i^x(\phi)$ is as defined in Section~\ref{apdpq},
with $x=\emptyset, a,b,\ldots$, and $\hat{q}(\phi)$ is as defined
in Section~\ref{errcontrol}.

\subsection{Preliminary inequalities}
\subsubsection{Auxiliary quantity inequalities}
For the boundary terms, 
note first that on the support of $\eta_1$,
$$
n=\frac 1{\sqrt{1-\mu}} T.
$$
On the other hand, in the expression \eqref{eq:N},
each term
gives a nonnegative
contribution to $J_\mu^N n^\mu$, and we have on $\Sigma_t$
\begin{equation}\label{eq:JN}
J_\mu^N n^\mu\approx  {\chi_1}\frac {({\pa_u}\phi)^2}{1-\mu}+
{\overline\chi_1}\frac {({\pa_v}\phi)^2}{1-\mu}  + (\pa_t\phi)^2 + (\pa_{r^*}\phi)^2
+(1-\mu) |\nabb\phi|^2.
\end{equation}
As a consequence,
\begin{equation}
\label{note2}
{\bf Z}^{\tilde{N},P}_\phi(t_i)\le C\,{\bf Z}^{N}_\phi(t_i),
\end{equation}
and
\begin{equation}
\label{note3}
\,{\bf Z}^{T}_\phi(t_i)\le C\,{\bf Z}^N_\phi(t_i).
\end{equation}

\subsubsection{Boundary term inequalities for currents $J$}
In this section we address questions of size of the boundary terms generated by the 
currents $J^X, J^{X^a}, J^{X^b}, J^{X^c}, J^{X^d}$ from Section~\ref{X} and $J^Y, J^{\overline Y}$ 
from Section~\ref{Ysec}.
\begin{proposition}
\label{Bxs}
For $J= J^{X,i}(\phi)$, $J^{{X}^a,3}(\phi)$, $J^{{X}^b,2}(\phi-\phi_0)$, $J^{{X}^b,0}(\phi)$,
$J^{X^c,0}(\phi_0)$, $J^{X^d,4}(\phi-\phi_0)$ we have
\[
\left| \int_{\Sigma_{t}}J_\mu n^\mu\right| \le E\,{\bf Z}^{T}_\phi(t),
\]
\[
\int_{\mathcal{H}^+\cap \{v_1\le v \le v_2\} }J_\mu n^\mu
+
\int_{\overline{\mathcal{H}}^+\cap \{u_1\le u \le u_2\} }J_\mu n^\mu
\le
E\, {\bf F}^T_{\phi}(t_1,t_2) .
\]
\end{proposition}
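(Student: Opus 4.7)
The plan is to treat each listed current uniformly within the template $J^{V,i}_\mu = T_{\mu\nu}V^\nu + (\text{correction})$ from Section~\ref{template}, with $V = f(r^*)\partial_{r^*}$, where $f$ together with $F = f'+2(1-\mu)f/r$, $f'/(rf)$, $f(1-3M/r)/r$ and their first derivatives are pointwise bounded on the relevant region by constants of type $E$. I would estimate the boundary integrand $J_\mu n^\mu$ by splitting off the principal piece $T_{\mu\nu}V^\nu n^\mu$ from the zeroth- and first-order correction pieces, handling these on each portion of $\partial\mathcal{R}(t_1,t_2)$ separately: the spacelike middle segment of $\Sigma_t$ (where $n\propto T$), the two null legs of $\Sigma_t$ (where $n=\partial_u$ or $\partial_v$), and the horizon segments on $\mathcal{H}^+\cup\overline{\mathcal{H}}^+$ (where $n=\partial_v$ or $\partial_u$ in the rescaled sense).

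The principal piece is handled by the pointwise bound $|T_{\mu\nu}V^\nu n^\mu|\le|f|\,J^T_\mu n^\mu$, obtained on each portion by a direct null-coordinate computation: the identities $\partial_{r^*}=\tfrac12(\partial_v-\partial_u)$ and $T=\tfrac12(\partial_v+\partial_u)$ produce cancellations in the cross terms so that the remaining squared contributions are dominated by the energy integrand, uniformly in the three kinds of segments. Integrating, and using $|f|\le E$, yields the desired bound for this piece both on $\Sigma_t$ and on the horizon segments.

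The correction pieces schematically take the form $F\partial_\mu(\phi^2)$, $(\partial_\mu F)\phi^2$, $V_\mu\phi^2$, and $f\phi\,\nabb_\mu\phi$. Applying Cauchy--Schwarz to the cross terms $F\phi(n\phi)$ yields $\epsilon(n\phi)^2+\epsilon^{-1}F^2\phi^2$, the first being absorbed into $J^T_\mu n^\mu$. The residual $\int\phi^2$ on each sphere of symmetry is controlled by the key structural fact that every listed current carrying a $\phi^2$-correction acts either on $\phi-\phi_0$ (namely $J^{X^b,2}$, $J^{X^d,4}$) or on an individual spherical harmonic $\phi_\ell$ with $\ell\ge1$ (the $\ell\ge1$ summands of $J^{X,i}$ and $J^{X^a,3}$). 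For such projections, Poincar\'e on $\mathbb S^2$ gives $\int_{\mathbb S^2}\phi_\ell^2\le C(\ell(\ell+1))^{-1}\int_{\mathbb S^2}|\nabb\phi_\ell|^2$, and the $\ell^{-2}$ gain is harmlessly summable, so $\int\phi^2$ is absorbed into the angular part of $J^T_\mu n^\mu$ on the spacelike middle of $\Sigma_t$. The remaining currents $J^{X_0,0}(\phi_0)$, $J^{X^b,0}(\phi)$, $J^{X^c,0}(\phi_0)$ are of pure template $J^{V,0}$ and carry no $\phi^2$-correction, so only the principal-piece estimate is needed.

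The main obstacle is the horizon contributions to the correction pieces: on $\mathcal{H}^+$ and $\overline{\mathcal{H}}^+$ the $T$-flux integrand $J^T_\mu n^\mu$ degenerates to a single null-derivative square $(\partial_v\phi)^2$ or $(\partial_u\phi)^2$, and angular Poincar\'e cannot be transferred into it. The remedy is that, for every current on the list, the coefficient functions $F$ and $\partial_\mu F$ attached to the $\phi^2$-corrections in fact vanish at the horizons: for $X^a$, $X^b$, $X^d$ this is by the explicit cutoffs $\eta_1$, $\eta_2$, whose supports are bounded inside $(r_0,R_0)$; for $X_\ell$ with $\ell\ge1$ it is because $f_\ell'=\eta_0$ vanishes outside $[r_0^*/2,R_0^*/2]$, while the remaining piece $2(1-\mu)f_\ell/r$ of $F$ and its derivative (using $\partial_\mu r=\pm(1-\mu)$ in null coordinates) carry a factor $(1-\mu)\to 0$ at $\mathcal{H}^\pm, \overline{\mathcal{H}}^\pm$. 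Thus the correction contributions on the horizons simply drop out, and the required inequality on $\mathcal{H}^+\cup\overline{\mathcal{H}}^+$ is supplied by the principal-piece bound alone. The $E$-dependence of the constants is inherited from the pointwise bounds on $f$ and its derivatives as $r_0^*\to-\infty$, $R_0^*\to\infty$.
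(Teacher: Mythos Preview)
Your approach is essentially the same as the paper's: decompose each template current into its principal part $T_{\mu\nu}V^\nu$ and its lower-order corrections, bound the principal part pointwise by $|f|J^T_\mu n^\mu$, absorb the zeroth-order $\phi^2$-terms via Poincar\'e on the sphere using that all such terms act on harmonics with $\ell\ge1$, and observe that the correction coefficients vanish on the horizons. The paper carries this out in detail for the representative case $J^{X,2}$.

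There is, however, one genuine gap. You treat the spacelike middle of $\Sigma_t$ (where Poincar\'e feeds $\phi_\ell^2$ into $|\nabb\phi_\ell|^2$) and the horizons (where the correction coefficients vanish), but you skip the \emph{null legs of $\Sigma_t$} for the correction pieces. On the leg $v=\text{const}$ (say), one has $n=\partial_u$ and
\[
J^T_\mu n^\mu=\tfrac12\big((\partial_u\phi)^2+(1-\mu)|\nabb\phi|^2\big),
\]
so the angular term into which Poincar\'e must feed carries a factor $(1-\mu)$, while your coefficient $F=f_\ell'+2(1-\mu)f_\ell/r$ contains $f_\ell'=\eta_0$, which is identically $1$ on $[r_0^*,r_1^*]$ where $(1-\mu)$ can be exponentially small. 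The resolution, which is exactly what the paper uses when it records the null-leg bound as $E\big(\cdots+(1-\mu)\phi_\ell^2\big)$, is to split the leg: for $r^*\le 2r_0^*$ one has $f_\ell'=\eta_0=0$ and $f_\ell''=0$, so $F=O(1-\mu)$ and $F'=O(1-\mu)$ (using $\mu'=(1-\mu)\tfrac{d\mu}{dr}$); for $r^*\in[2r_0^*,r_1^*]$ one has $(1-\mu)\ge (1-\mu)(r(2r_0^*))>0$, hence $1\le E(1-\mu)$. In either case the $\phi_\ell^2$-coefficient is $\le E(1-\mu)$, and Poincar\'e then absorbs it into $(1-\mu)|\nabb\phi|^2\subset J^T_\mu n^\mu$. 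This is precisely the place where the constant genuinely becomes of type $E$ rather than $C$; once you insert this step your argument is complete.
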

\begin{proof}
We shall here only prove the proposition in a representative case of the current 
$$
J^{X,2}(\phi)=J^{X_0,0}(\phi_0) +\sum_{\ell\ge 1} J^{X_\ell,2}(\phi_\ell),
$$
defined in Section \ref{Aux}.

Recall that 
$$
{\bf Z}^{T}_\phi(t)=\int_{\Sigma_{t}} T_{\mu\nu}(\phi) T^\mu n^\nu. 
$$
Since on the space-like part of $\Sigma_t$, we have $n=(1-\mu)^{-\frac 12}{T}$,
it follows that there 
$$
T_{\mu\nu}(\phi) T^\mu n^\nu=\frac 1{\sqrt{1-\mu}} T_{\mu\nu}(\phi) T^\mu T^\nu =
\frac 1{4\sqrt{1-\mu}}\left (T_{uu}(\phi) + 2 T_{uv}(\phi) + T_{vv}(\phi)\right),
$$
since in addition $T=1/2(\pa/\pa u+\pa/\pa v)$. On the other hand, since
$n=\pa/\pa u$ and $n=\pa/\pa v$
on the null segments $v$=const and $u$=const of $\Sigma_t$, respectively,
\[
T_{\mu\nu}(\phi) T^\mu n^\nu=
T_{\mu\nu}(\phi) T^\mu \left(\frac {\pa}{\pa u}\right)^\nu=\frac 12 \big(T_{uu}(\phi) + T_{uv}(\phi)
\big),
\]
\[
T_{\mu\nu}(\phi)T^\mu n^\nu=
T_{\mu\nu}(\phi) T^\mu \left(\frac {\pa}{\pa v}\right)^\nu=\frac 12 \big(T_{vv}(\phi) + T_{uv}(\phi)
\big).
\]
Since the space-like portion of $\Sigma_t$ corresponds to $r$-values $r_1\le r\le  R_1$ we have that
$$
{\bf Z}^{T}_\phi(t)\approx \int_{\Sigma_{t}} \left (T_{uv}(\phi) + (1-\overline\chi_1) T_{uu}(\phi) + 
(1-\chi_1) T_{vv}(\phi)\right).
$$
Therefore,
\begin{equation}\label{eq:ZT}
{\bf Z}^{T}_\phi(t)\approx \int_{\Sigma_{t}} \left ((1-\overline\chi_1) (\pa_u\phi)^2+ 
(1-\chi_1) (\pa_v\phi)^2 + (1-\mu) |\nabb\phi|^2\right).
\end{equation}
On the other hand,
\begin{align*}
{\bf F}^{T}_{\phi}(t_1,t_2)&=
\int_{\mathcal{H}^+\cap \{v_1\le v \le v_2\} }J^T(\phi) n^\mu
+ \int_{\overline{\mathcal{H}}^+\cap \{u_1\le u \le u_2\} }J^{T}_\mu(\phi) n^\mu\\ &=
\frac 12\int_{\mathcal{H}^+\cap \{v_1\le v \le v_2\} } (\pa_v\phi)^2
+ \frac 12\int_{\overline{\mathcal{H}}^+\cap \{u_1\le u \le u_2\} } (\pa_u\phi)^2.
\end{align*}

We compare now \eqref{eq:ZT} with the expression for 
$\int_{\Sigma_{t}}J^{X,2}_\mu n^\mu$. Start with the current 
$$
J_\nu^{X_0,0}(\phi_0)=-\frac 1 {r^2} T_{\mu\nu}(\phi_0) \left(\frac{\pa}{\pa r^*}\right)^\mu.
$$
Since $\pa/\pa r^*=1/2(\pa/\pa v-\pa/\pa u)$, we infer that on $\Sigma_t$
\begin{eqnarray*}
\int_{\Sigma_t} |J_\nu^{X_0,0}(\phi_0) n^\nu|&\le& C 
\int_{\Sigma_{t}} \left ((1-\overline\chi_1) (\pa_u\phi_0)^2+ 
(1-\chi_1) (\pa_v\phi_0)^2 + (1-\mu) |\nabb\phi_0|^2\right)\\
& \le& C\, {\bf Z}^{T}_{\phi_0}(t).
\end{eqnarray*}
For $\ell\ge1$, consider now
$$
J_\mu^{X_\ell,2}(\phi_\ell)=-f_\ell T_{\mu\nu}(\phi_\ell) \left ({\pa r^*}\right)^\mu - 
\frac 14 \left(f_\ell'+2\frac{1-\mu}r f_\ell\right)\pa_\mu \phi_\ell^2 + 
\frac 14 \pa_\mu \left(f_\ell'+2\frac {1-\mu}r f_\ell\right)
\phi_\ell^2.
$$
Recall that the functions $f'_\ell(r^*)=1$ on the interval $[r_0^*, R_0^*]$ and 
vanish for $r^*\le 2r^*_0$ and $r^*\ge 2R^*_0$. On the space-like portion of $\Sigma_t$, i.e., for 
$r\in [r_1,R_1]$, we have 
$$
|J_\mu^{X_\ell,2}(\phi_\ell) n^\mu|\le C\left (T_{uv}(\phi_\ell) +T_{uu}(\phi_\ell)+T_{vv}(\phi_\ell)
+\phi_\ell^2\right),
$$
while on the null segments of $\Sigma_t$, we have
$$
\left|J_\mu^{X_\ell,2}(\phi_\ell) \left(\frac{\pa}{\pa v}\right)^\mu\right|\le E
 \left (T_{uv}(\phi_\ell) +T_{uu}(\phi_\ell)+ (\pa_u\phi_\ell)^2
+(1-\mu) \phi_\ell^2\right)
$$
for $v={\rm const}$,
and 
$$
\left|J_\mu^{X_\ell,2}(\phi_\ell) \left(\frac{\pa}{\pa u}\right)^\mu\right|\le E
 \left (T_{uv}(\phi_\ell) +T_{vv}(\phi_\ell)+ (\pa_u\phi_\ell)^2
+(1-\mu) \phi_\ell^2\right)
$$
for $u={\rm const}$. Therefore,
\begin{align*}
\int_{\Sigma_t} |J_\mu^{X_\ell,2}(\phi_\ell) n^\mu|&\le E
\int_{\Sigma_t}  \left (T_{uv}(\phi_\ell) + (1-\overline\chi_1) T_{uu}(\phi_\ell) + 
(1-\chi_1) T_{vv}(\phi_\ell)+(1-\mu) \phi_\ell^2\right)\\&\le E
\int_{\Sigma_t}  \left ( (1-\overline\chi_1)(\pa_u\phi_\ell)^2 + 
(1-\chi_1) (\pa_v\phi_\ell)^2+(1-\mu) (|\nabb\phi_\ell|^2+\phi_\ell^2)\right).
\end{align*}
Summing over $\ell$ and using the identity 
$$
\ell(\ell+1)\int_{\mathbb S^2} \frac{\phi_\ell^2}{r^2} \, r^2 dA_{\mathbb S}^2
=  \int_{\mathbb S^2} |\nabb\phi_\ell|^2 \,r^2 dA_{\mathbb S}^2, 
$$
we obtain 
$$
\int_{\Sigma_t} |J_\mu^{X,2}(\phi) n^\mu|\le 
E\int_{\Sigma_t}  \left ( (1-\overline\chi_1)(\pa_u\phi)^2 + 
(1-\chi_1) (\pa_v\phi)^2+(1-\mu) |\nabb\phi|^2\right)\le E\,{\bf Z}^{T}_{\phi}(t),
$$
as desired. On the other hand,
$$
\int_{\mathcal{H}^+\cap \{v_1\le v \le v_2\} }J^{X,2}_\mu(\phi) n^\mu=
\int_{\mathcal{H}^+\cap \{v_1\le v \le v_2\} }\left (J^{X_0,0}_\mu(\phi_0) +\sum_{\ell\ge 1}
J_\mu^{X_\ell,2}(\phi_\ell)\right) \left (\frac{\pa}{\pa v}\right)^\mu.
$$
On the horizon ${\mathcal H}^+$,
 we have $(1-\mu)=0$, and thus $T_{uv}=0$, 
$f_\ell'=f_\ell''=0$ and 
\begin{align*}
J^{X_0,0}_\mu(\phi_0) \left (\frac{\pa}{\pa v}\right)^\mu=&-\frac 1{r^2} T_{\mu\nu}(\phi_0) \left (\frac{\pa}{\pa v}\right)^\mu= -\frac 1{2r^2} (\pa_v\phi_0)^2,\\ 
J^{X_\ell,2}_\mu(\phi_\ell) \left (\frac{\pa}{\pa v}\right)^\mu=&
\left (-f_\ell T_{\mu\nu}(\phi_\ell) \left ({\pa r^*}\right)^\mu - 
\frac 14 (f_\ell'+2\frac{1-\mu}r f_\ell)\pa_\mu \phi_\ell^2 \right.\\&\left.+ \frac 14 \pa_\mu (f_\ell'+2\frac {1-\mu}r f_\ell)
\phi_\ell^2\right) \left (\frac{\pa}{\pa v}\right)^\mu=-f_\ell  (\pa_v\phi_\ell)^2.
\end{align*}
As a consequence, since $|f_\ell(r^*)|\le E$,
\begin{align*}
\left|\int_{\mathcal{H}^+\cap \{v_1\le v \le v_2\} }J^{X,2}_\mu(\phi) n^\mu \right|&\le
\int_{\mathcal{H}^+\cap \{v_1\le v \le v_2\} }\left (\frac 1{2r^2} (\pa_v\phi_0)^2 + 
\sum_{\ell\ge 1} |f_\ell| (\pa_v\phi_\ell)^2\right)\\ &\le 
E \int_{\mathcal{H}^+\cap \{v_1\le v \le v_2\} } (\pa_v\phi)^2\le E \,{\bf F}^{T}_{\phi}(t_1,t_2),
\end{align*}
as desired. Similar arguments give the inequality on the horizon $\overline{\mathcal H}^+$,
as well as the inequalities for the other
currents of the statement of the proposition.
\end{proof}

\begin{proposition}
\label{posvan}
For $J= J^Y(\phi), J^{\overline{Y}}(\phi)$, we have
\[
0\le  \int_{\Sigma_{t}}J_\mu n^\mu \le C\, {\bf Z}^{\tilde{N},P}_\phi(t),
\]
Moreover
\[
\int_{\mathcal{H}^+\cap \{v_1\le v \le v_2\} }J_\mu^Y(\phi) n^\mu\ge 0,
\]
\[
\int_{\mathcal{H}^+\cap \{v_1\le v \le v_2\} }J_\mu^{\overline{Y}}(\phi) n^\mu=0,
\]
\[
 \int_{\overline{\mathcal{H}}^+\cap \{u_1\le u \le u_2\} }J^Y_\mu(\phi) n^\mu=0,
\]
\[
 \int_{\overline{\mathcal{H}}^+\cap \{u_1\le u \le u_2\} }J^{\overline{Y}}_\mu(\phi) n^\mu\ge 0.
\]
\end{proposition}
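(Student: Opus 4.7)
My plan is to prove each of the five assertions by direct pointwise evaluation of $J^V_\mu n^\mu$ for $V\in\{Y,\overline{Y}\}$ on each hypersurface of interest, using as inputs: (i) the nonnegativity $\alpha,\beta,\overline{\alpha},\overline{\beta}\ge 0$ already established in Section~\ref{Ysec}; (ii) the containment $\mathrm{supp}\,Y\subset\{r^*\le r_1^*/2\}$ and $\mathrm{supp}\,\overline{Y}\subset\{r^*\ge R_1^*/2\}$ coming from the cutoffs $\chi_1$, $\overline{\chi}_1$; and (iii) the vanishing of both $1-\mu$ and $\eta_{-1}$ on each horizon.

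For the four horizon statements, two are trivial by support: since $Y$ vanishes identically on $\overline{\mathcal{H}}^+$ and $\overline{Y}$ vanishes identically on $\mathcal{H}^+$, the corresponding mixed flux integrals are $0$. For the two nonvanishing cases, I would compute on $\mathcal{H}^+$ with $n=\partial/\partial v$,
\[
J^Y_\mu\left(\tfrac{\partial}{\partial v}\right)^\mu = T_{vv}(\phi)Y^v + T_{vu}(\phi)Y^u = \beta\,(\partial_v\phi)^2 + \alpha\,|\nabb\phi|^2.
\]
On $\mathcal{H}^+$ one has $\mu=1$, $\chi_1=1$, and $\eta_{-1}=0$, so the formulas \eqref{alphadef}--\eqref{betadef} give $\alpha=1$, $\beta=0$, and the integrand reduces to $|\nabb\phi|^2\ge 0$. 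The symmetric computation, with $u$ and $v$ swapped, yields $J^{\overline Y}_\mu n^\mu=|\nabb\phi|^2\ge 0$ on $\overline{\mathcal{H}}^+$.

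For the $\Sigma_t$ estimate, I would split $\Sigma_t$ into its spacelike piece and two null segments as in Section~\ref{Thereg} and compute the integrand on each with the appropriate $n$. On the spacelike piece with $n=(1-\mu)^{-1/2}T$, direct expansion gives
\[
J^Y_\mu T^\mu = \frac{\alpha(\partial_u\phi)^2}{2(1-\mu)} + \frac{\alpha+\beta(1-\mu)}{2}|\nabb\phi|^2 + \frac{\beta(\partial_v\phi)^2}{2},
\]
manifestly nonnegative in view of (i); the null-segment integrands admit analogous positive expressions. Summing yields the lower bound $0\le \int_{\Sigma_t}J^Y_\mu n^\mu$. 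For the upper bound, the crucial observation is that since $r_1<3M$ (as fixed after Proposition~\ref{polukrisimo}), on $\mathrm{supp}\,Y\subset\{r\le r_1\}$ we have $(r-3M)^{-2}\le (r_1-3M)^{-2}=C$, so that $T_{\mu\nu}N^\mu n^\nu=(r-3M)^{-2}T_{\mu\nu}\tilde{N}^\mu n^\nu\le C\,T_{\mu\nu}\tilde{N}^\mu n^\nu$ on $\mathrm{supp}\,Y$. A term-by-term comparison of the explicit formula for $J^Y_\mu n^\mu$ above with the expression \eqref{eq:JN} for $J^N_\mu n^\mu$, using the uniform boundedness of $\alpha,\beta$ on $\mathrm{supp}\,\chi_1$, then gives $J^Y_\mu n^\mu\le C\,T_{\mu\nu}\tilde{N}^\mu n^\nu$ pointwise on $\mathrm{supp}\,Y$, which integrates to $\int_{\Sigma_t}J^Y_\mu n^\mu\le C\,{\bf Z}^{\tilde N,P}_\phi(t)$, the $(P\phi)^2$ summand being nonnegative. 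The case of $\overline Y$ is the mirror image.

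The whole proposition is thus essentially bookkeeping of signs and supports; the one place where something substantive is invoked is this last observation, that the strict containment of $\mathrm{supp}\,Y$ and $\mathrm{supp}\,\overline{Y}$ in $\{r\ne 3M\}$ renders the photon-sphere degeneracy of $\tilde N$ harmless for the upper bound. The value of $r_1$ was fixed in Proposition~\ref{polukrisimo} precisely to guarantee both this and the red-shift positivity, so no further choice is needed here.
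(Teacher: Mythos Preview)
Your proposal is correct and follows essentially the same route as the paper: support arguments for the two vanishing horizon fluxes, direct computation of $J^Y_\mu(\partial_v)^\mu=\alpha|\nabb\phi|^2+\beta(\partial_v\phi)^2$ on $\mathcal{H}^+$, and splitting $\Sigma_t$ into its spacelike and null pieces with the key input that $\mathrm{supp}\,Y$ avoids $r=3M$ so that $N$ and $\tilde N$ are comparable there. One small inaccuracy: you write $\mathrm{supp}\,Y\subset\{r\le r_1\}$, but in fact $\mathrm{supp}\,\chi_1=\{r^*\le r_1^*/2\}$, so the support extends slightly past $r_1$; since $r_1^*/2<0$ this region is still contained in $\{r<3M\}$, so your bound $(r-3M)^{-2}\le C$ survives with a marginally different constant and the argument goes through unchanged.
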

\begin{proof}
Once again, we shall here consider only the current $J^Y(\phi)$. The considerations
for $J^{\overline{Y}}$ are practically identical.

 By 
the construction in
Section~\ref{Ysec}, the support of $J^Y(\phi)$ is contained in the region $r^*\le \frac 12 r_1^*$. 
This immediately implies that $J^Y(\phi)|_{\overline{\mathcal H}^+}=0$.
Moreover
$Y$ is a bounded future-directed time-like vector field in the region $r_1^*\le r^*< \frac 12 r_1^*$, which implies
that there we have 
\begin{align*}
0\le J_\mu^Y(\phi) n^\mu&= T_{\mu\nu}(\phi) Y^\nu n^\mu\\
&\le C T_{\mu\nu}(\phi) T^\nu n^\mu\\ &=
C J_\mu^T(\phi) n^\mu\le C J_\mu^{\tilde N}(\phi) n^\mu.
\end{align*}
Restricted to  the support of $Y$, the remaining part of $\Sigma_t$ is contained in a 
null segment $v={\rm const}$. Thus, we have 
\begin{align*}
J_\mu^Y(\phi) n^\mu&=T_{\mu\nu}(\phi)Y^\nu \left (\frac{\pa}{\pa v}\right)^\mu=
\frac {\alpha}{1-\mu} T_{uu} + \beta T_{uv}\\ 
&=\frac{\alpha}{1-\mu} (\pa_u\phi)^2 + \beta (1-\mu) |\nabb\phi|^2.
\end{align*}
The functions $\alpha, \beta$ are non-negative and in the region $r\le r_1$ are given by 
$$
\alpha=2-\mu +\eta_{-1}(r^*) |r^*|^{-\delta},\qquad \beta=\gamma_b(1-\mu +\eta_{-1}
(r^*) |r^*|^{-\delta}),
$$
which implies that 
$$
0\le J_\mu^Y(\phi) n^\mu\le C\left (\frac{ (\pa_u\phi)^2}{1-\mu} + (1-\mu) |\nabb\phi|^2\right).
$$
Comparing this to the expression for $J_\mu^N(\phi) n^\mu$, given in \eqref{eq:JN}, we
see that 
$$
0\le J_\mu^Y(\phi) n^\mu\le C J_\mu^N(\phi) n^\mu
$$
on the null portion $v={\rm const}$
 of $\Sigma_t$. Since on this portion we have $N=\tilde N$, we finally obtain the
desired inequality 
$$
0\le \int_{\Sigma_t} J_\mu^Y(\phi) n^\mu \le C\,  {\bf Z}^{\tilde{N},P}_\phi(t).
$$
On the other hand, on ${\mathcal H}^+$ we have $n=\frac{\pa}{\pa v}$ and 
$$
J_\mu^Y(\phi) n^\mu=\frac {\alpha}{1-\mu} T_{uv} + \beta T_{vv}=2|\nabb\phi|^2,
$$
since $\beta=(1-\mu)=0$ and $T_{uv}=(1-\mu)|\nabb\phi|^2$ on ${\mathcal H}^+$. Thus 
$$
\int_{\mathcal{H}^+\cap \{v_1\le v \le v_2\} }J_\mu^Y(\phi) n^\mu=2\int_{\mathcal{H}^+\cap \{v_1\le v \le v_2\} } |\nabb\phi|^2.
$$
\end{proof}

\subsection{Applications of the integral identity for currents}
In this section we exploit the divergence theorem for compatible currents to relate
various integral quantities.
From Proposition~\ref{Bxs}, Proposition~\ref{posvan} and
identity $(\ref{conid})$, the following two propositions follow immediately:
\begin{proposition}
\label{stb}
For $J=  J^{X,i}(\phi)$, $J^{{X}^a,3}(\phi)$, $J^{{X}^b,2}(\phi-\phi_0)$, $J^{{X}^b,0}(\phi)$,
$J^{X^c,0}(\phi_0)$, $J^{X^d,4}(\phi-\phi_0)$ and 
$K=\nabla^\mu J_\mu$,  we have
\[
\left| \int_{\mathcal{R}(t_1,t_2)} K\,\,\right| \le E({\bf Z}^{T}_\phi(t_1)+{\bf Z}^{T}_\phi(t_2)
+{\bf F}^T_\phi(t_1,t_2)).
\]
\end{proposition}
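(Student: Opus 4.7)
The plan is to apply the integral identity $(\ref{conid})$ in the region $\mathcal{R}(t_1,t_2)$ to each of the six currents listed, then invoke Proposition~\ref{Bxs} to dominate each of the boundary contributions by the quantities appearing on the right-hand side of the claimed inequality.

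Concretely, for each of the currents $J$ in the list, after verifying that $J$ and its divergence $K = \nabla^\mu J_\mu$ are indeed compatible currents built from $\phi$ (or from $\phi_0$ or $\phi-\phi_0$, both of which separately solve $\Box_g\phi = 0$ by spherical symmetry of $(\mathcal{M},g)$ and the fact that $\Box_g$ commutes with the spherical Laplacian $\Trb$), I would rearrange $(\ref{conid})$ as
\[
\int_{\mathcal{R}(t_1,t_2)} K = \int_{\Sigma_{t_1}} J_\mu n^\mu - \int_{\Sigma_{t_2}} J_\mu n^\mu - \int_{\mathcal{H}^+\cap\{v_1\le v\le v_2\}} J_\mu n^\mu - \int_{\overline{\mathcal{H}}^+\cap\{u_1\le u\le u_2\}} J_\mu n^\mu.
\]
Taking absolute values and applying the triangle inequality reduces matters to estimating each of the four boundary integrals.

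The first two boundary terms are each bounded by $E\,{\bf Z}^T_\phi(t_i)$ by the first inequality of Proposition~\ref{Bxs}, while the sum of the two horizon contributions is bounded by $E\,{\bf F}^T_\phi(t_1,t_2)$ by the second inequality of Proposition~\ref{Bxs}. Summing yields the stated estimate. There is no substantial obstacle here: all the work has been done in Proposition~\ref{Bxs}, where one carefully dissects, current by current, the boundary integrand on the spacelike and null portions of $\Sigma_t$ and on the horizons, exploiting in particular that the functions $f_\ell$, $f^a_\ell$, $f^b$, $f^c$, $f^d$ are uniformly bounded on the support of each current and that $1-\mu$ vanishes precisely on $\mathcal{H}^+\cup\overline{\mathcal{H}}^+$. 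The only point that deserves minor verbal care is the bookkeeping in the case of $J^{X^b,2}(\phi-\phi_0)$ and $J^{X^d,4}(\phi-\phi_0)$, where one uses $|{\bf Z}^T_{\phi-\phi_0}(t_i)|\le C\,{\bf Z}^T_\phi(t_i)$ and the analogous inequality for ${\bf F}^T$, both of which are immediate from orthogonality of spherical harmonics in $L^2(\mathbb{S}^2)$.
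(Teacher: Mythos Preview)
Your proposal is correct and is precisely the paper's own argument: the paper states that Proposition~\ref{stb} follows immediately from Proposition~\ref{Bxs} and identity~$(\ref{conid})$, which is exactly the rearrangement-plus-triangle-inequality you describe. One cosmetic remark: the second inequality in the \emph{statement} of Proposition~\ref{Bxs} is written only as an upper bound on the sum of the horizon fluxes, whereas you need control on the absolute value; but the \emph{proof} of Proposition~\ref{Bxs} in fact establishes the absolute-value bound on each horizon term separately (see the displayed inequality for $\left|\int_{\mathcal{H}^+} J^{X,2}_\mu n^\mu\right|$), so there is no actual gap.
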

\begin{proposition}
\label{oldIYs}
We have
\begin{equation}
\label{ineq1}
\int_{\mathcal{R}(t_1,t_2)} K^Y(\phi)+ \int_{\Sigma_{t_2}}J^Y_\mu(\phi) n^\mu
+
\int_{\mathcal{H}^+\cap \{v_1\le v \le v_2\} }J_\mu^Y(\phi) n^\mu
\le  C\,{\bf Z}^{\tilde{N},P}_\phi(t_1),
\end{equation}
\begin{equation}
\label{ineq2}
\int_{\mathcal{R}(t_1,t_2)} K^{\overline{Y}}(\phi) + \int_{\Sigma_{t_2}}J_\mu^{\overline{Y}}(\phi) n^\mu+
\int_{\overline{\mathcal{H}}^+\cap \{u_1\le u \le u_2\} }J^{\overline{Y}}_\mu(\phi) n^\mu
\le C\,{\bf Z}^{\tilde{N},P}_\phi(t_1).
\end{equation}
\end{proposition}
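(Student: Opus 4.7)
The plan is to apply the conservation identity $(\ref{conid})$ to the currents $J^Y(\phi)$ and $J^{\overline Y}(\phi)$ in the region $\mathcal R(t_1,t_2)$, and then use the sign and size information supplied by Proposition~\ref{posvan} on each boundary piece to pass from the resulting equality to the desired inequalities.

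Specifically, for the first assertion, identity $(\ref{conid})$ applied to $J^Y$ rearranges to
\[
\int_{\mathcal R(t_1,t_2)} K^Y(\phi) + \int_{\Sigma_{t_2}} J^Y_\mu n^\mu + \int_{\mathcal H^+\cap\{v_1\le v\le v_2\}} J^Y_\mu n^\mu + \int_{\overline{\mathcal H}^+\cap\{u_1\le u\le u_2\}} J^Y_\mu n^\mu = \int_{\Sigma_{t_1}} J^Y_\mu n^\mu.
\]
Proposition~\ref{posvan} then provides exactly the three ingredients needed. First, the flux of $J^Y$ through $\overline{\mathcal H}^+$ vanishes identically, since by construction $Y$ is supported away from the cosmological horizon; this allows us to drop that term altogether. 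Second, the flux of $J^Y$ through $\mathcal H^+$ is non-negative (it is in fact proportional to $\int|\nabb\phi|^2$ on the horizon); this is the horizon term that one keeps on the left of $(\ref{ineq1})$. Third, the initial flux $\int_{\Sigma_{t_1}} J^Y_\mu n^\mu$ is bounded above by $C\,{\bf Z}^{\tilde N,P}_\phi(t_1)$. Assembling these three facts with the rearranged identity yields $(\ref{ineq1})$ immediately.

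The argument for $(\ref{ineq2})$ is completely symmetric: one applies $(\ref{conid})$ to $J^{\overline Y}$ in the same region, and invokes the reflected pieces of Proposition~\ref{posvan}. Namely, $\int_{\mathcal H^+} J^{\overline Y}_\mu n^\mu = 0$ (because $\overline Y$ is supported away from the black hole horizon), $\int_{\overline{\mathcal H}^+} J^{\overline Y}_\mu n^\mu \ge 0$, and the initial flux $\int_{\Sigma_{t_1}} J^{\overline Y}_\mu n^\mu$ is again majorized by $C\,{\bf Z}^{\tilde N, P}_\phi(t_1)$. There is no real obstacle to anticipate: the genuine work sits in Proposition~\ref{posvan} (verifying the positivity/vanishing at the horizons and the boundedness on $\Sigma_{t_1}$), while the present proposition is simply the formal rearrangement of the divergence theorem once those ingredients are in hand.
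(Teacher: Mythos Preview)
Your argument is correct and coincides with the paper's own proof, which simply records that Proposition~\ref{oldIYs} follows immediately from identity~$(\ref{conid})$ together with Proposition~\ref{posvan}. You have spelled out exactly the rearrangement and the three inputs from Proposition~\ref{posvan} that the paper leaves implicit; note only that the non-negativity of the $\mathcal H^+$ flux is not strictly needed for $(\ref{ineq1})$ since that term already sits on the left-hand side.
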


Applying  $(\ref{conid})$ with the energy current $J^T_{\mu}$ gives us the following
proposition
\begin{proposition}
\label{evaveo}
\[
{\bf Z}_\phi^T(t_2)+{\bf F}_{\phi}^T(t_1,t_2) =\, {\bf Z}^{T}_\phi(t_1)\le C\, {\bf Z}^N_\phi(t_1),
\]
\end{proposition}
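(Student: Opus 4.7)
The plan is to prove Proposition \ref{evaveo} by applying the divergence identity $(\ref{conid})$ to the current $J^T_\mu$ associated to the global Killing field $T$. Since $T$ is Killing, its deformation tensor $\pi_T^{\mu\nu}$ vanishes identically, and therefore the bulk integrand $K^T = T_{\mu\nu}(\phi)\pi_T^{\mu\nu} = 0$ throughout $\mathcal{R}(t_1,t_2)$. Specialising $(\ref{conid})$ to the current $J^T_\mu$ then yields the exact conservation law
\[
{\bf Z}^T_\phi(t_2) - {\bf Z}^T_\phi(t_1) + \int_{\mathcal{H}^+\cap\{v_1\le v\le v_2\}} J^T_\mu(\phi) n^\mu + \int_{\overline{\mathcal{H}}^+\cap\{u_1\le u\le u_2\}} J^T_\mu(\phi) n^\mu = 0,
\]
which, by the definition of ${\bf F}^T_\phi(t_1,t_2)$ in Section~\ref{auxq}, is precisely the desired equality ${\bf Z}^T_\phi(t_2) + {\bf F}^T_\phi(t_1,t_2) = {\bf Z}^T_\phi(t_1)$.

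For this equality to make sense I also want to confirm that the horizon flux terms are individually non-negative (so that the rearrangement is meaningful and the equality is not misinterpreted). On $\mathcal{H}^+$ one has $n = \partial/\partial v$ and $(1-\mu) = 0$, so the formulas of Sections~\ref{cordsec} and~\ref{CCsec} give $T_{uv} = (1-\mu)|\nabb\phi|^2 = 0$ and hence
\[
J^T_\mu n^\mu = T_{v\nu}T^\nu = \tfrac12 T_{vv} + \tfrac12 T_{uv} = \tfrac12(\partial_v\phi)^2 \ge 0,
\]
using $T = \tfrac12(\partial_u + \partial_v)$. The same computation, with the roles of $u$ and $v$ reversed, shows $J^T_\mu n^\mu \ge 0$ on $\overline{\mathcal{H}}^+$. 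This is just the statement that $T$ is null future-directed on the horizons and $T_{\mu\nu}$ satisfies the dominant energy condition.

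Once the conservation law is in hand, the inequality ${\bf Z}^T_\phi(t_1) \le C\, {\bf Z}^N_\phi(t_1)$ is immediate from $(\ref{note3})$, which was recorded in the preliminary inequalities of Section~\ref{auxq} as a direct consequence of the fact that on $\Sigma_t$ both terms are positive-definite quadratic forms in $d\phi$ and $N$ dominates $T$ in the relevant sense away from the photon sphere and at the horizons. There is no real obstacle here: the proof is essentially a one-line application of the Killing property combined with the already-established pointwise comparison between the $T$-flux and the $N$-flux. The only point that requires any care is ensuring the orientation conventions of $(\ref{conid})$ are consistent with the definitions of ${\bf Z}^T_\phi$ and ${\bf F}^T_\phi$, and that the horizon contributions have the correct sign to be combined into ${\bf F}^T_\phi$ rather than subtracted.
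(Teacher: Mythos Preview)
Your proof is correct and follows essentially the same approach as the paper: apply the divergence identity $(\ref{conid})$ to $J^T_\mu$, use $K^T=0$ from the Killing property of $T$, and invoke $(\ref{note3})$ for the final inequality. Your additional verification that the horizon fluxes are non-negative is not strictly needed for the stated equality but is a reasonable sanity check.
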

\begin{proof}
The statement follows immediately from $K^T=0$ and $(\ref{note2})$, $(\ref{note3})$.
\end{proof}

Applying $(\ref{conid})$ with the energy current $J^{T}_{\mu} +J^Y_{\mu}+
J^{\overline{Y}}_{\mu}$, we obtain
\begin{proposition}\label{achronal}
Let $\Sigma'\subset \mathcal{R}_{t_1,t_2}$ be achronal. Then
\[
\int_{\Sigma'} J^N_\mu(\phi)n^\mu \le {\bf Z}^N_\phi(t_1)+
C \int_{\mathcal{R}(t_1,t_2)\cap J^-(\Sigma')} -K^{Y}(\phi)  -K^{\overline{Y}}(\phi)
\]
In particular,
\[
{\bf Z}^N_\phi(t_2) \le {\bf Z}^N_\phi(t_1)+ C\int_{\mathcal{R}(t_1,t_2)} -K^{Y}(\phi)-K^{\overline{Y}}(\phi).
\]
\end{proposition}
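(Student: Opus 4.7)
The plan is to apply the divergence identity $(\ref{conid})$ in the subregion $\mathcal{R}'\doteq \mathcal{R}(t_1,t_2)\cap J^-(\Sigma')$, whose boundary consists of (part of) $\Sigma_{t_1}$, the achronal hypersurface $\Sigma'$, and segments of $\mathcal{H}^+$ and $\overline{\mathcal{H}}^+$, using the composite current
\[
J_\mu(\phi)\doteq J^T_\mu(\phi)+J^Y_\mu(\phi)+J^{\overline{Y}}_\mu(\phi).
\]
Since $T$ is Killing, $K^T=0$, so the bulk divergence collapses to $K^Y(\phi)+K^{\overline{Y}}(\phi)$. On each horizon segment the individual fluxes of $J^T$, $J^Y$, $J^{\overline Y}$ are separately nonnegative: for $J^T$ because $T$ is future-directed null there, and for the other two by Proposition~\ref{posvan}. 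These boundary contributions therefore have the right sign to be discarded. On the past boundary I would estimate
\[
\int_{\Sigma_{t_1}\cap J^-(\Sigma')}J_\mu n^\mu \le \int_{\Sigma_{t_1}}J_\mu n^\mu \le C\,{\bf Z}^N_\phi(t_1),
\]
invoking $(\ref{note3})$ for the $J^T$ piece and Proposition~\ref{posvan} combined with $(\ref{note2})$ for the $J^Y$ and $J^{\overline Y}$ pieces.

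The heart of the argument is the pointwise comparison on $\Sigma'$,
\[
T_{\mu\nu}(\phi)\,N^\mu n^\nu \le C\,T_{\mu\nu}(\phi)\,(T+Y+\overline{Y})^\mu n^\nu.
\]
To establish this I would work in the null frame $(\partial_u,\partial_v)$, where $T_{uu}=(\partial_u\phi)^2$, $T_{vv}=(\partial_v\phi)^2$ and $T_{uv}=(1-\mu)|\nabb\phi|^2$ are each pointwise nonnegative. Hence the contraction of $T_{\mu\nu}$ with $V=a\partial_u+b\partial_v$ against a future-directed causal $n$ is a nonnegative form, monotone increasing in $a,b\ge 0$. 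The desired comparison then reduces to a coefficientwise domination $N\preceq C(T+Y+\overline{Y})$ in this frame. Away from the horizons, $Y=\overline Y=0$ and $N=T$, so this is trivial. Near $\mathcal{H}^+$ the $\partial_u$-coefficient of $N$ is $\tfrac12+\chi_1/(1-\mu)$, dominated by $\tfrac12+\alpha/(1-\mu)$ in $T+Y$ since $\alpha\ge 2-\mu\ge\chi_1$ on the support of $\chi_1$; the $\partial_v$-coefficients on both sides are bounded. The situation near $\overline{\mathcal{H}}^+$ is symmetric.

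Assembling these ingredients through the divergence identity yields the first stated inequality (after absorbing the multiplicative constant into $C$ via the standing convention $C\ge 1$). The second inequality is then the special case $\Sigma'=\Sigma_{t_2}$, which is achronal by construction. The principal delicate point I anticipate is the pointwise comparison just sketched, which must be verified uniformly up to and including the horizon boundary of $\Sigma'$, where both $N$ and $T+Y+\overline{Y}$ degenerate at matched rates of order $(1-\mu)^{-1}$; the separate positivity of $T_{uu}$, $T_{vv}$, $T_{uv}$ in null coordinates is precisely what reduces this to a simple coefficient inspection rather than a genuine quadratic-form juggling exercise.
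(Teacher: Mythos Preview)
Your proposal is correct and follows essentially the same route as the paper: apply the divergence identity for the current $J^T+J^Y+J^{\overline Y}$ on $\mathcal{R}(t_1,t_2)\cap J^-(\Sigma')$, discard the nonnegative horizon fluxes, and use the pointwise equivalence $(J^T_\mu+J^Y_\mu+J^{\overline Y}_\mu)n^\mu\approx J^N_\mu n^\mu$, which the paper simply asserts while you supply the null-frame coefficient comparison in detail. One cosmetic point: both your argument and the paper's equivalence actually produce a constant $C$ in front of ${\bf Z}^N_\phi(t_1)$ as well as in front of the bulk term, so the stated inequality should be read with that harmless adjustment (it is irrelevant for the downstream application in Proposition~\ref{xwrisaux}).
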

\begin{proof}
The vector field $T+Y+\overline{Y}$ is timelike and one sees easily
\[
\left( J^{T}_{\mu}+ J^Y_{\mu} +J^{\overline{Y}}_\mu\right)n^\mu\approx J^N_{\mu} n^\mu,
\]
while certainly $K^{T}+K^Y+K^{\overline{Y}}= K^Y+K^{\overline{Y}}$.  \end{proof}

We have an alternative bound on ${\bf Z}^N_\phi(t_2)$ as follows
\begin{proposition}
\label{privavafora}
\[
{\bf Z}^N_\phi(t_2) \le C\, {\bf Z}^{\tilde{N},P}_\phi(t_2)-C\int_{\mathcal{R}(t_1,t_2)} K^{\Theta}(\phi).
\]
\end{proposition}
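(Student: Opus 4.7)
The plan is to apply the divergence identity $(\ref{conid})$ to the current $J^\Theta$ in $\mathcal{R}(t_1,t_2)$, use the vanishing of $\xi$ on $\Sigma_{t_1}$ to kill the lower boundary term, and then recover the full $N$-energy on $\Sigma_{t_2}$ from the $T$-energy there by absorbing the $Y$ and $\overline{Y}$ boundary contributions into $C\,{\bf Z}^{\tilde N,P}_\phi(t_2)$ via Proposition~\ref{posvan}.

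First I would verify that $\Theta = \xi T$ vanishes on $\Sigma_{t_1}$, equals $T$ on $\Sigma_{t_2}$, and is a nonnegative multiple of $T$ on the horizon pieces of $\partial\mathcal{R}(t_1,t_2)$. Indeed, on the spacelike portion $r \in [r_1,R_1]$ of $\Sigma_{t_i}$ the argument of $\zeta_{(t_1,t_2)}$ in the definition of $\Theta$ is $t = t_i$; on the null portion $\{v = (t_i+r_1^*)/2\}$, the argument $2v - r_1^*$ equals $t_i$; and on the null portion $\{u = (t_i - R_1^*)/2\}$, the argument $2u + R_1^*$ equals $t_i$. So all three cutoffs evaluate to $\zeta(0) = 0$ for $i=1$ and $\zeta(1) = 1$ for $i=2$. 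On the horizon segments $\xi \ge 0$, and $J^T_\mu n^\mu$ reduces to $\tfrac12(\partial_v\phi)^2 \ge 0$ on $\mathcal{H}^+$ and $\tfrac12(\partial_u\phi)^2 \ge 0$ on $\overline{\mathcal{H}}^+$. Consequently, identity $(\ref{conid})$ applied to $J^\Theta$ yields
\[
{\bf Z}^T_\phi(t_2) = \int_{\Sigma_{t_2}} J^\Theta_\mu n^\mu \le -\int_{\mathcal{R}(t_1,t_2)} K^\Theta(\phi).
\]

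For the second step I would note that on $\Sigma_{t_2}$ the vector field $T + Y + \overline{Y}$ is future-directed timelike and pointwise comparable to $N$, so
\[
J^N_\mu n^\mu \le C\bigl(J^T_\mu n^\mu + J^Y_\mu n^\mu + J^{\overline{Y}}_\mu n^\mu\bigr)
\]
pointwise on $\Sigma_{t_2}$; this is the same pointwise comparison already employed in the proof of Proposition~\ref{achronal}. Integrating over $\Sigma_{t_2}$ and using the upper bound of Proposition~\ref{posvan}, the $Y$ and $\overline{Y}$ contributions are absorbed into $C\,{\bf Z}^{\tilde N,P}_\phi(t_2)$, which gives ${\bf Z}^N_\phi(t_2) \le C\,{\bf Z}^T_\phi(t_2) + C\,{\bf Z}^{\tilde N,P}_\phi(t_2)$; combining with the first step yields the asserted inequality.

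The only mildly delicate point is the verification above that the three separate cutoff arguments collapse to $t_i$ on $\Sigma_{t_i}$, which is why $\Theta$ was designed with the arguments $2v - r_1^*$ and $2u + R_1^*$ on the two null segments rather than simply $t$, which is ill-adapted there. Once this bookkeeping is in place the nonnegativity of the horizon contribution is immediate and the remainder is a standard patching of $T$-flux into $N$-flux via the auxiliary fields $Y,\overline{Y}$.
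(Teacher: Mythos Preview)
Your proof is correct and follows essentially the same route as the paper's: verify that $\Theta|_{\Sigma_{t_1}}=0$ and $\Theta|_{\Sigma_{t_2}}=T$, apply the divergence identity to $J^\Theta$, and then pass from ${\bf Z}^T_\phi(t_2)$ to ${\bf Z}^N_\phi(t_2)$ at the cost of $C\,{\bf Z}^{\tilde N,P}_\phi(t_2)$. You supply more detail than the paper does---in particular the explicit check that the three cutoff arguments all equal $t_i$ on $\Sigma_{t_i}$, the nonnegativity of the horizon flux terms, and the use of Proposition~\ref{posvan} together with the $T+Y+\overline{Y}\sim N$ comparison from Proposition~\ref{achronal} to justify ${\bf Z}^N_\phi(t_2)\le C\,{\bf Z}^T_\phi(t_2)+C\,{\bf Z}^{\tilde N,P}_\phi(t_2)$---whereas the paper simply asserts the last inequality and the divergence theorem.
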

\begin{proof}
Recall that by the construction given in Section~\ref{pragmaux}, the vector field $\Theta$ is
future timelike, $\Theta|_{\Sigma_{t_2}}=T$ and  $\Theta|_{\Sigma_{t_1}}=0$. 
The result now  follows from the statement
\[
{\bf Z}^N_\phi(t_2) \le C\, {\bf Z}^{\tilde{N},P}_{\phi}(t_2)+ C\int_{\Sigma_{t_2}} J_\mu^\Theta (\phi) n^\mu
\]
and the divergence theorem.
\end{proof}

\subsection{Bounding ${\bf Q}_\phi$ from ${\bf Z}^N_\phi$}
In this section we establish the first key part of 
Theorem~\ref{ME}, that is to say, 
statement $(\ref{MEstatement})$. We begin with the following 
\begin{proposition}
\label{megalnprot}
\begin{eqnarray*}
{\bf Q}_{1,\phi}(t_1,t_2) &\le& E\, {\bf Z}^N_\phi(t_1)
+
\epsilon\, \hat{\bf {Q}}_{\phi}(t_1,t_2)
\end{eqnarray*}
\end{proposition}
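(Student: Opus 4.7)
The plan is to build one combined compatible current
\[
J_\mu \doteq J^{X,3}_\mu(\phi) + J^{X^a,3}_\mu(\phi) + J^{X^b,2}_\mu(\phi-\phi_0) + J^{X^c,0}_\mu(\phi_0) + J^{X^d,4}_\mu(\phi-\phi_0),
\]
exactly the combination of currents whose divergence dominates $q^a_0+q^{a'}_1+q^b_0+q^d_1$ sphere-by-sphere via \eqref{eq:Xd}. Apply the identity \eqref{conid} to $J$ over $\mathcal{R}(t_1,t_2)$. The entire proof is then about showing (a) the spacetime term controls ${\bf Q}_{1,\phi}(t_1,t_2)$ from below modulo the $\epsilon\hat{\bf Q}$ error, and (b) the boundary terms are bounded by $E\,{\bf Z}^N_\phi(t_1)$.

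For (a), I split $\mathcal{R}(t_1,t_2)$ into the ``good'' region $\{r^*_0\le r^*\le R^*_0\}$ and its complement. In the good region, the pointwise bound \eqref{eq:Xd}, integrated over each symmetry sphere, gives $\int_{{\mathbb S}^2}K \ge C^{-1}\int_{{\mathbb S}^2}(q^a_0+q^{a'}_1+q^b_0+q^d_1)r^2\,dA$; using $q^x_1\le q^x_0$ together with the equivalence \eqref{eq:qphi0}, i.e.\ $q_1\approx q^a_1+q^b_1+q^d_1$, this bounds $C^{-1}q_1(\phi)$ on each sphere. In the complement, the cutoffs $\eta_1,\eta_2$ in $f^a_\ell,f^b,f^d$ make $K^{X^a,3}$, $K^{X^b,2}(\phi-\phi_0)$, $K^{X^d,4}(\phi-\phi_0)$ vanish, while $K^{X^c,0}(\phi_0)=2r(\partial_t\phi_0)^2\ge 0$ by \eqref{giatomndeviko}; the only non-signed contribution is $K^{X,3}$, for which Lemma~\ref{controllem} gives $\int_{{\mathbb S}^2}K^{X,3}\ge -\epsilon\int_{{\mathbb S}^2}\hat q(\phi)$. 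Adding the two contributions and integrating in the remaining directions yields
\[
\int_{\mathcal{R}(t_1,t_2)}K \;\ge\; C^{-1}{\bf Q}_{1,\phi}(t_1,t_2)\;-\;\epsilon\,\hat{\bf Q}_\phi(t_1,t_2).
\]

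For (b), each of the five currents appearing in $J$ is covered by Proposition~\ref{Bxs}, so its boundary contribution is at most $E\bigl({\bf Z}^T_\phi(t_1)+{\bf Z}^T_\phi(t_2)+{\bf F}^T_\phi(t_1,t_2)\bigr)$; Proposition~\ref{evaveo} then collapses this to $E\,{\bf Z}^N_\phi(t_1)$. The conclusion follows by rearranging the divergence identity.

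The main obstacle is to check carefully that everything outside the good region can indeed be absorbed into the single $\epsilon\hat{\bf Q}$ error, in particular the $-\epsilon\int\hat q$ bound from Lemma~\ref{controllem} is paid for by $\hat{\bf Q}_\phi$ on the right-hand side (not by ${\bf Q}_{1,\phi}$ on the left, which is supported in a strictly smaller region). This relies crucially on the separation of constants: the large factors coming from derivatives of the $f_\ell$'s, from $f^c=r^2$ acting on $\phi_0$, and from boundary traces on the horizons all fall into the $E$ category, while the truly small contribution from the support region near $r^*_0,R^*_0$ sits inside $\epsilon$, consistent with the convention of Section~\ref{depcon}.
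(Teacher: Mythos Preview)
Your proof is correct and follows essentially the same approach as the paper. The paper organizes the argument into three separate sublemmas (one each for ${\bf Q}^a_{1,\phi}+{\bf Q}^{a'}_{1,\phi}$, ${\bf Q}^b_{1,\phi}$, and ${\bf Q}^d_{1,\phi}$), invoking in turn \eqref{eq:Xa}, \eqref{eq:Xb}, \eqref{eq:Xd} together with the same support observations, the nonnegativity of $K^{X^c,0}(\phi_0)$ and of $K^{X,3}$ in $\{r_0^*\le r^*\le R_0^*\}$, and Lemma~\ref{controllem}; it then applies Proposition~\ref{stb} (which is the divergence identity combined with Proposition~\ref{Bxs}) and Proposition~\ref{evaveo} exactly as you do. Your single combined current is just a compressed packaging of the same ingredients.
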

\begin{proof}
The proposition follows from the three Lemmas below:
\begin{lemma}
\label{q1lem}
\[
{\bf Q}^a_{1,\phi}+{\bf Q}^{a'}_{1,\phi}\le C \int_{\mathcal{R}(t_1,t_2)}K^{X,3}(\phi)+
 C\int_{\mathcal{R}(t_1,t_2)}K^{X^a,3}(\phi)+
\epsilon\,\hat{\bf Q}_\phi.
\]
\end{lemma}
\begin{lemma}
\label{q3lem}
\[
{\bf Q}^b_{1,\phi} \le C \int_{\mathcal{R}(t_1,t_2)}K^{X,3}(\phi)+
 C\int_{\mathcal{R}(t_1,t_2)}K^{X^a,3}(\phi)+C  \int_{\mathcal{R}(t_1,t_2)}
K^{X^{b},2}(\phi-\phi_0)+
\epsilon\,\hat{\bf Q}_\phi.
\]
\end{lemma}
\begin{lemma}
\label{q4lem}
\begin{eqnarray*}
{\bf Q}^{d}_{1,\phi} &\le& C \int_{\mathcal{R}(t_1,t_2)}K^{X,3}(\phi)+
 C\int_{\mathcal{R}(t_1,t_2)}K^{X^a,3}(\phi)+C \int_{\mathcal{R}(t_1,t_2)}
K^{X^{b},2}(\phi-\phi_0)\\
&&\hbox{}
+C  \int_{\mathcal{R}(t_1,t_2)}
K^{X^c,0}(\phi_0)+C\int_{\mathcal{R}(t_1,t_2)} K^{X^{d},4}(\phi-\phi_0)+
\epsilon\,\hat{\bf Q}_\phi.
\end{eqnarray*}
\end{lemma}
\begin{proof}The statements of the above three Lemmas follow directly from
\eqref{eq:X3}, \eqref{eq:Xa}, \eqref{eq:Xb} and \eqref{eq:Xd}, combined with the 
observation that, since the supports of the currents $J^{X^a,3}$, $J^{X^b,2}$ and
$J^{X^d,4}$, as well as of the quantities $q_{1}^x(\phi$), are contained in the region 
$\{2r_1^*\le r^*\le 2R_1^*\}$, since $K^{X^c,0}(\phi_0)$ is nonnegative, and since
$K^{X,3}$ is nonnegative in $\{r_0^*\le r^*\le R_0^*\}$,
it suffices to apply Lemma~\ref{controllem}.
\end{proof}
Proposition \ref{megalnprot} now follows from Proposition~\ref{stb} and Proposition~\ref{evaveo}.
\end{proof}
\begin{proposition}
\label{KYboundf}
\[
 \int_{\mathcal{R}(t_1,t_2)} K^{Y}(\phi)+
K^{\overline{Y}}(\phi) \le C\, {\bf Z}^{\tilde{N},P}_\phi(t_1).
\]
\end{proposition}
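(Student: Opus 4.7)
The plan is to deduce Proposition~\ref{KYboundf} essentially immediately by combining Proposition~\ref{oldIYs} with the positivity statements of Proposition~\ref{posvan}. The work has already been front-loaded into those two propositions; what remains is to observe that the ``extra'' boundary terms appearing on the left-hand sides of Proposition~\ref{oldIYs} all have the correct sign and can therefore be dropped.

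More precisely, the two displayed inequalities of Proposition~\ref{oldIYs} read
\[
\int_{\mathcal{R}(t_1,t_2)} K^Y(\phi)+ \int_{\Sigma_{t_2}}J^Y_\mu(\phi) n^\mu
+\int_{\mathcal{H}^+\cap \{v_1\le v \le v_2\} }J_\mu^Y(\phi) n^\mu
\le  C\,{\bf Z}^{\tilde{N},P}_\phi(t_1),
\]
\[
\int_{\mathcal{R}(t_1,t_2)} K^{\overline{Y}}(\phi) + \int_{\Sigma_{t_2}}J_\mu^{\overline{Y}}(\phi) n^\mu+
\int_{\overline{\mathcal{H}}^+\cap \{u_1\le u \le u_2\} }J^{\overline{Y}}_\mu(\phi) n^\mu
\le C\,{\bf Z}^{\tilde{N},P}_\phi(t_1).
\]
By Proposition~\ref{posvan}, each of the four boundary integrals on the left-hand sides, namely $\int_{\Sigma_{t_2}}J^Y_\mu n^\mu$, $\int_{\mathcal{H}^+}J^Y_\mu n^\mu$, $\int_{\Sigma_{t_2}}J^{\overline{Y}}_\mu n^\mu$, and $\int_{\overline{\mathcal{H}}^+}J^{\overline{Y}}_\mu n^\mu$, is nonnegative. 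Dropping them yields the two one-sided bounds
\[
\int_{\mathcal{R}(t_1,t_2)} K^Y(\phi)\le C\,{\bf Z}^{\tilde{N},P}_\phi(t_1),\qquad \int_{\mathcal{R}(t_1,t_2)} K^{\overline{Y}}(\phi)\le C\,{\bf Z}^{\tilde{N},P}_\phi(t_1),
\]
and summing gives the stated conclusion.

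Conceptually, the subtle point worth highlighting (even though the present proof is a short combination) is that $K^Y$ and $K^{\overline Y}$ are \emph{not} nonnegative densities: positivity was only established by Proposition~\ref{polukrisimo} in the regions $r\le r_1$ and $r\ge R_1$, while in the transition regions where the cutoffs $\chi_1$, $\overline\chi_1$ vary the signs are indefinite. The inequality of Proposition~\ref{KYboundf} is therefore meaningful precisely because it controls these bad-sign contributions through the boundary flux on $\Sigma_{t_1}$, exploiting the fact that any ``loss'' to indefinite terms is more than compensated by the positive fluxes through $\Sigma_{t_2}$ and the future horizons. There is no genuine obstacle to overcome; the content has been isolated entirely in the preparatory propositions, and the proof is a one-line deduction from them.
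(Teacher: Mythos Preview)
Your proof is correct and follows exactly the same approach as the paper: add the two inequalities of Proposition~\ref{oldIYs} and invoke Proposition~\ref{posvan} to drop the nonnegative boundary terms on $\Sigma_{t_2}$, $\mathcal{H}^+$, and $\overline{\mathcal{H}}^+$. The additional conceptual commentary you include is accurate and helpful, though not part of the paper's terse proof.
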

\begin{proof}
This follows immediately by adding $(\ref{ineq1})$, ($\ref{ineq2})$ of Proposition~\ref{oldIYs}, 
in view also of Proposition~\ref{posvan}.

\end{proof}

\begin{proposition}
\label{eukoloprop}
\[
\hat {\bf {Q}}_\phi(t_1,t_2)\le C\, {\bf Q}_{1,\phi} (t_1,t_2)+ C\int_{\mathcal{R}(t_1,t_2)}
K^Y(\phi)+K^{\overline{Y}}(\phi)
\]
\end{proposition}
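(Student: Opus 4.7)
The plan is to decompose the spacetime region $\mathcal{R}(t_1,t_2)$ into three pieces based on $r^*$ and compare $\hat{q}(\phi)$ against either $q_1(\phi)$ or $K^Y(\phi)$, $K^{\overline{Y}}(\phi)$ on each piece. Write $A = \mathcal{R}(t_1,t_2)\cap\{r^*\le r_1^*\}$, $\overline{A} = \mathcal{R}(t_1,t_2)\cap\{r^*\ge R_1^*\}$, and $B = \mathcal{R}(t_1,t_2)\cap\{r_1^*\le r^*\le R_1^*\}$. On $A$ and $\overline{A}$, Proposition~\ref{polukrisimo} gives $\hat{q}(\phi)\le CK^Y(\phi)$ and $\hat{q}(\phi)\le CK^{\overline{Y}}(\phi)$ respectively, so direct integration yields $\int_A\hat{q}\le C\int_A K^Y$ and similarly for $\overline{A}$. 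On $B$ the cutoff $\eta_1\equiv 1$, hence $q_1(\phi)=q(\phi)$, and we need only show that $\hat{q}(\phi)\le C q(\phi)$ pointwise there.

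For this bulk comparison, inspect the definitions \eqref{eq:qphi} and \eqref{eq:justqphi}: the only extra contribution in $\hat{q}$ is the angular term $\frac{\eta_{-1}(\chi_1+\overline\chi_1)}{1-\mu}|r^*|^{-\delta-1}|\nabb\phi|^2$. Because of the factor $\chi_1+\overline\chi_1$, this term is supported on $\{r^*\le r_1^*/2\}\cup\{r^*\ge R_1^*/2\}$, which intersects $B$ only in the two subregions $[r_1^*,r_1^*/2]$ and $[R_1^*/2,R_1^*]$. On these subregions $(1-\mu)$ is bounded below (we are strictly away from both horizons), $|r^*|$ is bounded below, and $(r-3M)^2\ge c>0$; the last fact means that the $|\nabb\phi|^2$ piece of $q(\phi)$ is already comparable to $|\nabb\phi|^2$, so the extra angular term is absorbed into $q(\phi)$ with a constant depending only on $M$, $\Lambda$. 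Hence $\int_B\hat{q}\le C\,{\bf Q}_{1,\phi}(t_1,t_2)$.

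It remains to replace $\int_A K^Y$ by $\int_{\mathcal{R}}K^Y$, and similarly for $K^{\overline{Y}}$. Since $K^Y(\phi)$ is supported in $\{r^*\le r_1^*/2\}$, we have
\[
\int_A K^Y(\phi) = \int_{\mathcal{R}(t_1,t_2)}K^Y(\phi) - \int_{\mathrm{strip}_Y} K^Y(\phi),
\qquad \mathrm{strip}_Y\doteq \mathcal{R}(t_1,t_2)\cap\{r_1^*\le r^*\le r_1^*/2\}.
\]
On $\mathrm{strip}_Y$ the functions $\alpha$, $\beta$ from Section~\ref{Ysec} are smooth with bounded derivatives, and $(1-\mu)$ is bounded below, so the explicit formula \eqref{toexoumeauto} for $K^Y$ gives $|K^Y(\phi)|\le C\bigl((\pa_u\phi)^2+(\pa_v\phi)^2+|\nabb\phi|^2\bigr)\le C q(\phi)=Cq_1(\phi)$ pointwise on $\mathrm{strip}_Y$, using $\eta_1\equiv 1$ there. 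Consequently $-\int_{\mathrm{strip}_Y}K^Y(\phi)\le C\,{\bf Q}_{1,\phi}(t_1,t_2)$, and therefore $\int_A K^Y(\phi)\le \int_{\mathcal{R}(t_1,t_2)} K^Y(\phi)+C\,{\bf Q}_{1,\phi}(t_1,t_2)$. The analogous argument on $\mathrm{strip}_{\overline{Y}}\doteq\mathcal{R}\cap\{R_1^*/2\le r^*\le R_1^*\}$ handles $K^{\overline{Y}}$. Summing the three pieces yields the claimed inequality.

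The proof is essentially a bookkeeping exercise, so there is no real analytical obstacle; the one point requiring genuine care is the transition strips $\mathrm{strip}_Y$, $\mathrm{strip}_{\overline{Y}}$, where the sign of $K^Y$, $K^{\overline{Y}}$ is indefinite and one must compensate by absorbing $|K^Y|$, $|K^{\overline{Y}}|$ into ${\bf Q}_{1,\phi}$ via the elementary pointwise bound coming from \eqref{toexoumeauto}, \eqref{toexoumeauto2}.
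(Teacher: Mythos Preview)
Your proof is correct and follows essentially the same idea as the paper's. The paper compresses the argument into a single pointwise inequality $\hat{q}(\phi)\le Cq_1(\phi)+K^Y(\phi)+K^{\overline{Y}}(\phi)$, citing \eqref{eq:qphi0}, \eqref{eq:qphi} and Proposition~\ref{polukrisimo}; your version unfolds this by splitting $\mathcal{R}(t_1,t_2)$ into the two horizon regions (handled by Proposition~\ref{polukrisimo}) and the bulk $B$ (where $\eta_1\equiv1$ and $\hat q\le Cq_1$), and then explicitly accounts for the indefinite sign of $K^Y$, $K^{\overline{Y}}$ on the transition strips by absorbing $|K^Y|$, $|K^{\overline{Y}}|$ into $q_1$. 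This is exactly the content hidden in the paper's pointwise claim: the strip contribution is what forces the $Cq_1$ term to be present even in the paper's formulation.
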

\begin{proof}
The statement follows from the inequality
\[
\hat{q}(\phi)\le Cq_1(\phi)+ K^Y(\phi)+
K^{\overline{Y}}(\phi),
\]
which is a direct consequence of \eqref{eq:qphi0},
\eqref{eq:qphi} and Proposition \ref{polukrisimo}.
\end{proof}

It now immediately follows from Proposition~\ref{megalnprot},
Proposition~\ref{KYboundf} and Proposition~\ref{eukoloprop} and our conventions
regarding constants $C$, $E$ and $\epsilon$
that the following holds:
\begin{proposition}
\label{teliknepilogn}
If $r_0^*$ is chosen sufficiently small, and $R_0^*$ is chosen
sufficiently large, depending only on $M$, $\Lambda$, then
\[
\hat{\bf Q}_\phi(t_1,t_2)\le E\, {\bf Z}^N_\phi(t_1).
\]
\end{proposition}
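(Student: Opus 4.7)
The plan is to chain together Propositions \ref{megalnprot}, \ref{KYboundf}, and \ref{eukoloprop} to produce a self-improving inequality for $\hat{\bf Q}_\phi(t_1,t_2)$, and then absorb the error term by exploiting the smallness of the constant $\epsilon$ once $r_0^*$ is taken sufficiently negative and $R_0^*$ sufficiently positive.

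Concretely, I would start from Proposition \ref{eukoloprop},
\[
\hat{\bf Q}_\phi(t_1,t_2)\le C\,{\bf Q}_{1,\phi}(t_1,t_2)+C\int_{\mathcal{R}(t_1,t_2)}\bigl(K^Y(\phi)+K^{\overline{Y}}(\phi)\bigr),
\]
and substitute the bound from Proposition \ref{megalnprot} into the first term on the right and the bound from Proposition \ref{KYboundf} into the second. Using also $(\ref{note2})$, which gives ${\bf Z}^{\tilde{N},P}_\phi(t_1)\le C\,{\bf Z}^N_\phi(t_1)$, one arrives at
\[
\hat{\bf Q}_\phi(t_1,t_2)\le C\bigl(E\,{\bf Z}^N_\phi(t_1)+\epsilon\,\hat{\bf Q}_\phi(t_1,t_2)\bigr)+C\cdot C\,{\bf Z}^N_\phi(t_1).
\]
Applying the algebra of constants from Section \ref{depcon} ($CE=E$, $C\epsilon=\epsilon$, $C+C=C$, and $C\le E$) collapses the right hand side to $E\,{\bf Z}^N_\phi(t_1)+\epsilon\,\hat{\bf Q}_\phi(t_1,t_2)$.

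The remaining step is the absorption. Recalling that $\epsilon$ depends on $r_0^*,R_0^*$ only, and $\epsilon\to 0$ as $r_0^*\to-\infty$, $R_0^*\to\infty$, I would choose these parameters (depending only on $M,\Lambda$) so that $\epsilon\le \tfrac12$. Subtracting $\epsilon\,\hat{\bf Q}_\phi$ from both sides yields
\[
\tfrac12\,\hat{\bf Q}_\phi(t_1,t_2)\le E\,{\bf Z}^N_\phi(t_1),
\]
which is the claim after renaming $2E$ as $E$.

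The main conceptual obstacle, rather than any computation, is the legitimacy of the absorption: one must have $\hat{\bf Q}_\phi(t_1,t_2)<\infty$ a priori before one can move a fraction of it to the left. This is where the phrase ``sufficiently regular solution'' in Theorem \ref{ME} enters; for smooth $\phi$ the integrand $\hat{q}(\phi)$ is manifestly bounded on the compact region $\mathcal{R}(t_1,t_2)$ (the potentially singular weights $|r^*|^{-\delta-1}$ are active only on a compact set in $r^*$), so finiteness is automatic. For rougher data one would establish the estimate first for smooth solutions and then pass to the limit via standard density/approximation, using the linearity of the wave equation; this is purely routine once the a priori estimate for smooth $\phi$ is in hand.
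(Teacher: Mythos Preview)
Your proof is correct and follows essentially the same route as the paper, which simply states that the result ``immediately follows from Proposition~\ref{megalnprot}, Proposition~\ref{KYboundf} and Proposition~\ref{eukoloprop} and our conventions regarding constants $C$, $E$ and $\epsilon$.'' You have spelled out explicitly the chaining of these three propositions and the absorption argument that the paper leaves implicit, and your remark on the a priori finiteness of $\hat{\bf Q}_\phi(t_1,t_2)$ needed to justify the absorption is a welcome addition.
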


\emph{Henceforth, let $r_0$, $R_0$ be so chosen so that the conclusion of the above
proposition holds. In particular, in what follows we shall need only make use of
constants $C$ depending only on $M$, $\Lambda$.}

Finally, since by \eqref{eq:qphi}, \eqref{eq:justqphi}
\[
{\bf Q}_\phi(t_1,t_2)\le \hat{\bf {Q}}_\phi(t_1,t_2),
\]
the statement $(\ref{MEstatement})$ of Theorem~\ref{ME} follows immediately. 

%

\subsection{Bounding ${\bf Z}^N_\phi$ from ${\bf Z}^{\tilde{N},P}_\phi$,  ${\bf Q}_\phi$ and ${\bf Q}_{\Omega_i\phi}$}
Bound $(\ref{MEnew})$ of Theorem~\ref{ME} is contained in the following
\begin{proposition}
\label{metaaux} For all $t_1<t_2$ 
\[
{\bf Z}_\phi^N(t_2)\le C{\bf Z}^{\tilde{N},P}_\phi(t_2) +(t_2-t_1)^{-1} C\Big(
{\bf Z}_\phi^N(t_1)+{\bf Q}_\phi(t_1,t_2)+\sum_{i=1}^3{\bf Q}_{\Omega_i\phi}(t_1,t_2)\Big).
\]
\end{proposition}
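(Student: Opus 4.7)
The starting point is Proposition~\ref{privavafora}, which gives
\[
{\bf Z}^N_\phi(t_2)\le C\,{\bf Z}^{\tilde N,P}_\phi(t_2)-C\int_{\mathcal R(t_1,t_2)}K^\Theta(\phi).
\]
Thus it suffices to bound $-\int K^\Theta(\phi)$ by $C(t_2-t_1)^{-1}({\bf Z}^N_\phi(t_1)+{\bf Q}_\phi+\sum_{i=1}^3{\bf Q}_{\Omega_i\phi})$. The key structural feature of $\Theta=\xi T$ is that its non-Killing component involves the factor $\zeta'_{(t_1,t_2)}$, which has magnitude $O((t_2-t_1)^{-1})$ and support on a subregion $\mathcal R'\subset\mathcal R(t_1,t_2)$ of temporal extent $\sim(t_2-t_1)/4$. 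Extracting this weight gives
\[
-\int_{\mathcal R(t_1,t_2)}K^\Theta(\phi)\,dVol_{\mathcal M}\le\frac{C}{t_2-t_1}\int_{\mathcal R'}\tilde q(\phi)\,dVol_{\mathcal M},
\]
where $\tilde q(\phi)$ is pointwise comparable to the \emph{nondegenerate} $N$-energy density $J^N_\mu n^\mu$ in each of the three subregions determined by the three terms of $K^\Theta$.

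The remainder reduces to proving
\[
\int_{\mathcal R'}\tilde q(\phi)\,dVol_{\mathcal M}\le C\bigl({\bf Z}^N_\phi(t_1)+{\bf Q}_\phi+\textstyle\sum_i{\bf Q}_{\Omega_i\phi}\bigr),
\]
with constants independent of $t_2-t_1$. The contributions from the regions where $\chi_1$ or $\overline\chi_1$ is active are bounded term-by-term by $q(\phi)$: the null derivative terms $(\partial_u\phi)^2/(1-\mu)$ and $(\partial_v\phi)^2/(1-\mu)$ are dominated by the corresponding $(1-\mu)^{-2}$ pieces of $q$ since $1-\mu\le 1$, while $|\nabb\phi|^2$ is dominated by $(r-3M)^2|\nabb\phi|^2\subset q$ because $(r-3M)^2$ is bounded below on $\mathrm{supp}\,\chi_1\cup\mathrm{supp}\,\overline\chi_1$. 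In the middle region $r^*\in[r_1^*,R_1^*]$, the $(\partial_{r^*}\phi)^2$ contribution is the nondegenerate part of $q$ itself, and hence is also $\le C{\bf Q}_\phi$.

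The essential difficulty is controlling $(\partial_t\phi)^2$ and $(1-\mu)|\nabb\phi|^2$ near the photon sphere $r=3M$, where these are nondegenerate in $J^N$ but degenerate in $q$. For the angular piece, I would use the fact that $(\Omega_i\phi)_0=\Omega_i\phi_0=0$, which via the elementary Poincar\'e inequality on each sphere of symmetry gives $|\nabb\phi|^2\le\tfrac12\sum_i|\nabb\Omega_i\phi|^2$; then Poincar\'e in $r^*$ over the bounded interval $[r_1^*,R_1^*]$ controls $\int|\nabb\Omega_i\phi|^2\,dr^*$ by $\int(\partial_{r^*}\nabb\Omega_i\phi)^2\,dr^*$ plus a trace on the null segments at $r=r_1,R_1$---the former sitting inside the nondegenerate $(\partial_{r^*}\Omega_i\phi)^2$ piece of $q(\Omega_i\phi)\subset{\bf Q}_{\Omega_i\phi}$, and the latter, which lies inside $\mathrm{supp}\,\chi_1\cup\mathrm{supp}\,\overline\chi_1$ where $(r-3M)^2$ is bounded below, absorbed into ${\bf Q}_{\Omega_i\phi}$ and ${\bf Z}^N_\phi(t_1)$ via standard trace/energy considerations. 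For the $(\partial_t\phi)^2$ piece, I would use the signed identity~(\ref{eq:Xbd02}), which bounds the spacetime integral of $\eta_2((\partial_t\phi)^2-|\nabb\phi|^2)$ by a linear combination of current divergences $K^{X,3},K^{X^a,3},K^{X^b,\cdot},K^{X^c,0},K^{X^d,\cdot}$; via Propositions~\ref{stb} and~\ref{evaveo} these reduce to $C\,{\bf Z}^N_\phi(t_1)$, and combining with the angular step closes the estimate. The main obstacle is this angular bookkeeping: the trade between Poincar\'e-on-the-sphere and Poincar\'e-in-$r^*$ must be arranged so that exactly one angular derivative is lost and no power of $(t_2-t_1)$ reappears, which forces one to use the \emph{nondegenerate} $(\partial_{r^*}\Omega_i\phi)^2$ piece of ${\bf Q}_{\Omega_i\phi}$ precisely in the photon-sphere neighbourhood where the $(r-3M)^2$-weighted pieces are useless.
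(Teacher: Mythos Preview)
Your overall architecture is correct and matches the paper: start from Proposition~\ref{privavafora}, extract $(t_2-t_1)^{-1}$ from $\zeta'_{(t_1,t_2)}$, control the null-region and $(\partial_{r^*}\phi)^2$ contributions directly by $q(\phi)$, and handle $(\partial_t\phi)^2$ near $r=3M$ via~\eqref{eq:Xbd02} together with Propositions~\ref{stb} and~\ref{evaveo}. The gap is in your treatment of $|\nabb\phi|^2$ near the photon sphere.

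Your chain ``spherical Poincar\'e, then Poincar\'e in $r^*$'' loses \emph{two} angular derivatives, not one. After the spherical step you are left with $\int|\nabb\Omega_i\phi|^2$; Poincar\'e in $r^*$ on this quantity produces $\int(\partial_{r^*}\nabb\Omega_i\phi)^2=\int|\nabb\partial_{r^*}\Omega_i\phi|^2=r^{-2}\sum_j\int(\partial_{r^*}\Omega_j\Omega_i\phi)^2$, which is the $(\partial_{r^*}\,\cdot\,)^2$ piece of $q(\Omega_j\Omega_i\phi)$, not of $q(\Omega_i\phi)$ as you claim. The scalar $(\partial_{r^*}\Omega_i\phi)^2$ simply does not dominate the tensorial quantity $(\partial_{r^*}\nabb\Omega_i\phi)^2$. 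The fix is to drop the spherical Poincar\'e entirely and apply the one-dimensional Poincar\'e in $r^*$ directly to $\nabb\phi$: then the derivative term is $\int|\partial_{r^*}\nabb\phi|^2=r^{-2}\sum_i\int(\partial_{r^*}\Omega_i\phi)^2$, which \emph{is} exactly the nondegenerate piece of $\sum_i q(\Omega_i\phi)$, and the ``boundary'' contribution is supplied not by a trace but by the weighted term $\int(r-3M)^2|\nabb\phi|^2\subset{\bf Q}_\phi$, which is nondegenerate away from $r=3M$. This is precisely how the paper proceeds.
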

\begin{proof}
By Proposition~\ref{privavafora} 
\[
{\bf Z}_\phi^N(t_2)\le C{\bf Z}^{\tilde{N},P}_\phi(t_2) - C\int_{\mathcal{R}(t_1,t_2)} K^\Theta(\phi).
\]
The current $K^\Theta(\phi)$ was defined in Section \ref{pragmaux}. One
easily sees 
$$
 -K^\Theta(\phi)\le \frac C{t_2-t_1}\left (\chi_1 \frac{(\pa_u\phi)^2}{1-\mu} +\overline
 \chi_1 \frac{(\pa_u\phi)^2}{1-\mu} + |\nabb\phi|^2 + \eta_1
 \left((\pa_t\phi)^2+(\pa_{r^*}\phi)^2\right)\right).
$$
Comparing this with \eqref{eq:justqphi}, we infer that the statement of the proposition would
follow from the estimate
\begin{equation}
\label{wouldfollow}
\int_{\mathcal{R}(t_1,t_2)} \eta_2 \left((\pa_t\phi)^2+|\nabb\phi|^2\right)\le C
\Big({\bf Z}_\phi^N(t_1)+
{\bf Q}_\phi(t_1,t_2)+\sum_{i=1}^3{\bf Q}_{\Omega_i\phi}(t_1,t_2)\Big).
\end{equation}

 From \eqref{eq:justqphi} 
we have that
\begin{align*}
&C{\bf Q}_\phi(t_1,t_2)\ge \int_{\mathcal{R}(t_1,t_2)} 
(r-3M)^2|\nabb\phi|^2+(\pa_{r^*}\phi)^2,\\
&C\sum_{i=1}^3{\bf Q}_{\Omega_i\phi}(t_1,t_2)\ge \int_{\mathcal{R}(t_1,t_2)} \eta_1
|\pa_{r^*}\nabb\phi|^2.
\end{align*}
A one-dimensional Poincar\'e inequality immediately implies that 
$$
C\left({\bf Q}_\phi(t_1,t_2)+\sum_{i=1}^3{\bf Q}_{\Omega_i\phi}(t_1,t_2)\right)\ge 
\int_{\mathcal{R}(t_1,t_2)} \eta_2 |\nabb\phi|^2.
$$
It thus follows from \eqref{eq:Xbd02} that
\begin{align*}
\int_{\mathcal{R}(t_1,t_2)} &\left (C K^{X,3}(\phi) +C K^{X^a,3}(\phi) + 
C K^{X^b,2}(\phi-\phi_0) +C K^{X^b,0}(\phi)\right.\\
&\qquad\left.+CK^{X^c,0}(\phi_0)+CK^{X^d,4}(\phi-\phi_0)\right)\\
&\ge \int_{\mathcal{R}(t_1,t_2)}  \eta_2(\partial_t\phi)^2
-C\left({\bf Q}_\phi(t_1,t_2)+\sum_{i=1}^3{\bf Q}_{\Omega_i\phi}(t_1,t_2)\right).
\end{align*}
The desired $(\ref{wouldfollow})$ now follows
from Propositions \ref{stb} and \ref{evaveo}.
\end{proof}

\subsection{Bounding ${\bf Z}^N_\phi(t_2)$ from ${\bf Z}^N_\phi(t_1)$ and ${\bf Q}_\phi$}
The final statements of Theorem~\ref{ME} follows from
\begin{proposition}
\label{xwrisaux}
Let $\Sigma'\subset \mathcal{R}(t_1,t_2)$ be achronal. Then
\begin{equation}
\label{piogevika2}
\int_{\Sigma'} T_{\mu\nu}(\phi)N^\mu n^\nu \le C\Big({\bf Z}^N_\phi(t_1)+
 {\bf {Q}}_\phi(t_1,t_2)\Big).
\end{equation}
In particular, 
\begin{equation}
\label{MEstatement32}
{\bf {Z}}^N_\phi(t_2) \le C\Big({\bf Z}^N_\phi(t_1)+
 {\bf {Q}}_\phi(t_1,t_2)\Big).
 \end{equation}
\end{proposition}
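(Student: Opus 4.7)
}
The plan is to deduce the estimate directly from Proposition~\ref{achronal} by controlling the spacetime error $\int(-K^Y-K^{\overline Y})$ in terms of ${\bf Q}_\phi$. Since $T_{\mu\nu}(\phi)N^\mu n^\nu = J^N_\mu(\phi)n^\mu$ by definition, Proposition~\ref{achronal} already gives
\[
\int_{\Sigma'}T_{\mu\nu}(\phi)N^\mu n^\nu \le C\,{\bf Z}^N_\phi(t_1) + C\int_{\mathcal{R}(t_1,t_2)\cap J^-(\Sigma')}\bigl(-K^Y(\phi)-K^{\overline Y}(\phi)\bigr),
\]
so the only remaining task is the pointwise estimate
\[
-K^Y(\phi)-K^{\overline Y}(\phi) \le C\,q(\phi)
\]
on $\mathcal{R}(t_1,t_2)$. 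Integrating this and invoking the equivalence $(\ref{eq:ZQ})$ between $\int q(\phi)$ and ${\bf Q}_\phi(t_1,t_2)$ yields $(\ref{piogevika2})$; the ``in particular'' claim $(\ref{MEstatement32})$ is then the special case $\Sigma'=\Sigma_{t_2}$, in which $\int_{\Sigma_{t_2}}J^N_\mu n^\mu = {\bf Z}^N_\phi(t_2)$ by definition $(\ref{eq:justZ})$.

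To establish the pointwise bound I would partition $\mathcal{R}(t_1,t_2)$ according to the supports of $Y$ and $\overline Y$, which by construction lie in $\{r^* \le 2r_2^*\}$ and $\{r^* \ge 2R_2^*\}$ respectively. Outside these two sets both $K^Y$ and $K^{\overline Y}$ vanish identically. On $\{r\le r_1\}$, which sits inside the support of $Y$ but outside that of $\overline Y$, Proposition~\ref{polukrisimo} gives $K^Y(\phi)\ge \hat q(\phi)/C \ge 0$ while $K^{\overline Y}(\phi)=0$, so $-K^Y-K^{\overline Y}\le 0$; the symmetric argument disposes of $\{r\ge R_1\}$. The only nontrivial zones are the two ``transition strips'' for the cutoffs $\chi_1,\overline\chi_1$, namely $\{r_1^* < r^* < 2r_2^*\}$ and $\{2R_2^* < r^* < R_1^*\}$. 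By the choice of the special values in Section~\ref{specialval}, on these strips $r$ is bounded away from both horizons $r=r_b,r_c$ (so $(1-\mu)^{-1}$ is $\le C$) and away from the photon sphere $r=3M$ (so $(r-3M)^2\ge c(M,\Lambda)>0$). The explicit formulas $(\ref{toexoumeauto})$--$(\ref{toexoumeauto2})$ involve only the $1$-jet of $\phi$ with coefficients now uniformly bounded, so Cauchy--Schwarz on the $\pa_u\phi\,\pa_v\phi$ term gives
\[
|K^Y(\phi)|+|K^{\overline Y}(\phi)| \le C\bigl((\pa_u\phi)^2+(\pa_v\phi)^2+|\nabb\phi|^2\bigr) \le C\bigl((\pa_t\phi)^2+(\pa_{r^*}\phi)^2+|\nabb\phi|^2\bigr),
\]
which by $(\ref{eq:justqphi})$ and the lower bound $(r-3M)^2\ge c$ on the transition strips is itself $\le C\,q(\phi)$.

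The main (and really only) obstacle is the geometric bookkeeping verifying that the transition strips for $\chi_1,\overline\chi_1$ are separated both from the horizons and from the photon sphere. This is automatic from the ordering $r_1^*<2r_2^*<0<2R_2^*<R_1^*$ fixed in Section~\ref{specialval}, together with Proposition~\ref{polukrisimo} which was used precisely to choose $r_1,R_1$. No further analytic input is needed: all the serious work has been done in Propositions~\ref{polukrisimo} and~\ref{achronal}, and the present statement is essentially the harvesting of those two estimates together with the elementary fact that the ``error region'' for the red-shift currents is classically controlled by the degenerate energy density $q(\phi)$.
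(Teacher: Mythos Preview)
Your proposal is correct and follows essentially the same route as the paper: invoke Proposition~\ref{achronal}, observe that $K^Y,K^{\overline Y}$ are nonnegative in $\{r\le r_1\}$, $\{r\ge R_1\}$ by Proposition~\ref{polukrisimo} and vanish outside the supports of $\chi_1,\overline\chi_1$, and then bound $|K^Y|+|K^{\overline Y}|$ by $C\,q(\phi)$ on the remaining transition strips using that $(1-\mu)$ and $(r-3M)^2$ are bounded away from zero there. The paper's own proof is terser (it simply cites~$(\ref{eq:justqphi})$ and~$(\ref{toexoumeauto})$ for the last step) and uses the slightly sharper support bound $r^*\le \tfrac12 r_1^*$ rather than your $r^*\le 2r_2^*$, but the argument is otherwise identical.
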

\begin{proof}
By Proposition \ref{achronal}
\[
\int_{\Sigma'} J^N_\mu(\phi)n^\mu \le {\bf Z}^N_\phi(t_1)+
C \int_{\mathcal{R}(t_1,t_2)\cap J^-(\Sigma')} -K^{Y}(\phi)  -K^{\overline{Y}}(\phi)
\]
Recall that the current $K^Y(\phi)$ (respectively, $K^{\overline{Y}}(\phi)$) is positive 
for $r\le r_1$ (respectively, $r\ge R_1$) and  vanishes for $r^*\ge \frac 12 r_1^*$
(respectively, $r^*\le \frac 12 R_1^*$). Moreover, comparing~\eqref{eq:justqphi} 
and~\eqref{toexoumeauto}
easily implies the bound
\[
-K^Y(\phi)-K^{\overline{Y}}(\phi) \le C{q}(\phi)
\]
for $r_1^*\le r^*\le \frac12 r_1^*$, and $\frac12 R_1^*\le r^*\le \frac 12 R_1^*$. The result now follows immediately.
\end{proof}

\section{Proof of Theorem~\ref{gendata}}
\label{proofgendata}
\subsection{Energy decay}   
\label{edsec}
\begin{proposition}
\label{prwtodecay}
There exist constants $C$, $c$, depending only on $M$, $\Lambda$, such that
for all $\phi_\ell$ solutions of $\Box_g\phi_\ell=0$ in $J^+(\Sigma_0)\cap\mathcal{D}$
with spherical harmonic number
$\ell$, then
\[
{\bf Z}_{\phi_\ell}^{{N}} (t)+{\bf Q}_{\phi_\ell} (t,t^*)\le C\,{\bf {Z}}^N_{\phi_\ell} (0)e^{-2ct/\ell^2}
\]
for all $t$ and all $t^*\ge t$.
\end{proposition}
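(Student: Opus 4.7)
The plan is to prove a one-step contraction $Z(t+T_\ell) \le \tfrac12 Z(t)$, where $Z(t) \doteq {\bf Z}^N_{\phi_\ell}(t)$ and $T_\ell \doteq K\ell^2$ for a constant $K$ depending only on $M$, $\Lambda$, and then iterate. Iteration gives $Z(t) \le C Z(0)\, e^{-2ct/\ell^2}$ with $c = (\ln 2)/(2K)$; the bound on $Q(t,t^*) \doteq {\bf Q}_{\phi_\ell}(t,t^*)$ is then immediate from $(\ref{MEstatement})$, which yields $Q(t,t^*) \le C Z(t)$.

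The first ingredient is to convert the ``loss of one angular derivative'' in $(\ref{MEnew})$ into a factor of $\ell^2$. Since $\Omega_i$ commutes with both $\Box_g$ and $\Trb$, each $\Omega_i \phi_\ell$ is another solution of the wave equation lying in the $\ell$-th eigenspace of $\Trb$. Because $[\Omega_i, \partial_t] = [\Omega_i, \partial_{r^*}] = 0$, the sphere identity $\sum_i \int_{\mathbb S^2} |\Omega_i \psi_\ell|^2 = \ell(\ell+1) \int_{\mathbb S^2} \psi_\ell^2$ applied componentwise to the integrand of ${\bf Z}^N$ (for $\psi_\ell \in \{\partial_t \phi_\ell,\ \partial_{r^*}\phi_\ell,\ \phi_\ell\}$, the last of these combined with the identity $|\nabb\psi|^2 = r^{-2}\sum_i |\Omega_i \psi|^2$) yields
\[
\sum_{i=1}^3 {\bf Z}^N_{\Omega_i \phi_\ell}(t) \le C\,\ell(\ell+1)\, Z(t).
\]
Then $(\ref{MEstatement})$ applied to each $\Omega_i \phi_\ell$ gives $\sum_i {\bf Q}_{\Omega_i \phi_\ell}(t_1,t_2) \le C\,\ell^2\, Z(t_1)$.

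The second ingredient is a pigeonhole argument. Since $Q(t,t+T) = \int_t^{t+T} {\bf Z}^{\tilde N, P}_{\phi_\ell}(s)\, ds$ and $Q(t,t+T) \le C Z(t)$ by $(\ref{MEstatement})$, averaging on the subinterval $[t+T/2,\, t+T]$ produces a time $t^{**}$ there with ${\bf Z}^{\tilde N, P}_{\phi_\ell}(t^{**}) \le 2CT^{-1} Z(t)$. Feeding $t_1 = t$, $t_2 = t^{**}$ into $(\ref{MEnew})$ and inserting the three bounds on ${\bf Z}^{\tilde N, P}(t^{**})$, $Q(t,t^{**})$, and $\sum_i Q_{\Omega_i \phi_\ell}(t,t^{**})$ just established yields
\[
Z(t^{**}) \le C\,\ell^2\, T^{-1}\, Z(t).
\]
Propagating forward with $(\ref{MEstatementnew})$, combined once more with $(\ref{MEstatement})$, gives $Z(t') \le C'\, Z(t^{**}) \le C' C \ell^2 T^{-1} Z(t)$ uniformly for $t' \in [t^{**},\, t+T]$, in particular at $t' = t+T$. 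Choosing $T = T_\ell = K\ell^2$ with $K$ large (depending only on $M, \Lambda$) to absorb the accumulated multiplicative constants produces the desired contraction $Z(t+T_\ell) \le \tfrac12 Z(t)$, and iteration yields the claim.

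The main technical obstacle is the careful bookkeeping in the second step: one must verify that every constant on the right-hand sides of $(\ref{MEnew})$ and $(\ref{MEstatementnew})$ is $\ell$-independent (which they are by Theorem \ref{ME}), so that the only $\ell$-dependent term is the $\sum_i Q_{\Omega_i\phi_\ell}$ contribution, and hence can be tamed by taking $T_\ell$ linear in $\ell^2$. This is precisely what forces the decay rate to scale like $\ell^{-2}$ and reflects the heuristic prediction that the exponential rate $c_\ell$ of the $\ell$-th quasi-normal mode degrades as $\ell$ grows, in a manner dictated by the photon sphere.
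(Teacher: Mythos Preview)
Your proof is correct and follows essentially the same approach as the paper. The paper packages the pigeonhole-and-iterate argument into an abstract calculus lemma (Lemma~\ref{calculus}, proved in Appendix~\ref{calcap}) applied to $f(t)={\bf Z}^{\tilde N,P}_{\phi_\ell}(t)$ and $h(t)={\bf Z}^N_{\phi_\ell}(t)$, while you carry out the iteration directly; both arguments hinge on the same conversion $\sum_i{\bf Q}_{\Omega_i\phi_\ell}\le C\ell^2\int f$, the same pigeonhole selection of a good time in an interval of length $\sim\ell^2$, and the same geometric contraction.
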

\begin{proof}
This follows immediately from estimates
$(\ref{MEstatement})$ and $(\ref{MEnew})$,
in view of the following lemma, proved in Appendix~\ref{calcap},
applied to the functions $f(t)=\hat{\bf Z}_{\phi_\ell}(t)$ and $h(t)={\bf Z}_{\phi_\ell}^{{N}} (t)$:
\begin{lemma}
\label{calculus}
Let $k$, ${k}_0$ be positive constants and
let $f:[0,\infty)\to\mathbb R$, $g;[0,\infty)\to \mathbb R$ be nonnegative continuous functions
satisfying 
\[
h(t_2)+\int_{t_2}^{t_3}f(\tau)d\tau\le k \left( f(t_2) +
(t_2-t_1)^{-1}\left (h(t_1)+\ell^2 \int_{t_1}^{t_2} f(\tau)  d\tau\right)\right)
\]
for all $t_3>t_2>t_1\ge 0$,
and $\int_0^\infty f \le k_0$. Then there exists a 
constants $c$ depending only on $k$, and a universal constant $C$ such that
\[
h(t)+\int_{t}^{t^*}f(\bar{t})d\bar{t}\le C(\max\{ h(0),k_0\} )e^{-ct/\ell^2}
\]
for all $t$ and for all $t^*\ge t$.
\end{lemma}

\end{proof}

\begin{proposition}
\label{apauto2}
There exist constants $C$, $c$, depending only on $M$, $\Lambda$, such that
for all $\phi_\ell$ solutions of $\Box_g\phi_\ell=0$ in $J^+(\Sigma_0)\cap\mathcal{D}$
with spherical harmonic number
$\ell$ and for all achronal $\Sigma'\subset \mathcal{D}\cap J^+(\Sigma_0)$,
\[
\int_{\Sigma'}T_{\mu\nu}(\phi_\ell)N^\mu n^\nu
\le  C\,{\bf {Z}}^N_{\phi_\ell} (0)\Big(e^{-2cv_+(\Sigma')/\ell^2}+e^{-2cu_+(\Sigma')/\ell^2}\Big),
\]
in particular
\[
{\bf Z}^N_{\phi_\ell}(t) \le C\,{\bf {Z}}^N_{\phi_\ell} (0)e^{-2ct/\ell^2}.
\]
\end{proposition}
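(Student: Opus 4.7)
The strategy is to combine the achronal flux bound of Proposition~\ref{xwrisaux} with the uniform time-decay proved in Proposition~\ref{prwtodecay}, choosing the initial slice $\Sigma_{t_1}$ to lie as close below $\Sigma'$ as possible. Set $V\doteq v_+(\Sigma')$ and $U\doteq u_+(\Sigma')$.

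From the definition of $\Sigma_{t_1}$ in Section~\ref{Thereg}, its null legs lie at $v=(t_1+r_1^*)/2$ (near $\mathcal{H}^+$) and $u=(t_1-R_1^*)/2$ (near $\overline{\mathcal{H}}^+$), while its spacelike piece lies at $t=t_1$. In particular $v\ge (t_1+r_1^*)/2$, $u\ge (t_1-R_1^*)/2$, and $u+v\ge t_1$ on all of $\Sigma_{t_1}$. Using $V\ge 1$, $U\ge 1$ together with the achronality of $\Sigma'$, one checks that there is a constant $C_0$ depending only on $r_1^*,R_1^*$ (and hence only on $M,\Lambda$) such that the choice
\[
t_1 \doteq \max\bigl\{0,\ 2\min(U,V) - C_0\bigr\}
\]
guarantees $\Sigma'\subset J^+(\Sigma_{t_1})$.

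With this $t_1$, for any $t_2$ large enough that $\Sigma'\subset\mathcal{R}(t_1,t_2)$, Proposition~\ref{xwrisaux} gives
\[
\int_{\Sigma'}T_{\mu\nu}(\phi_\ell)N^\mu n^\nu \le C\bigl({\bf Z}^N_{\phi_\ell}(t_1) + {\bf Q}_{\phi_\ell}(t_1,t_2)\bigr).
\]
Sending $t_2\to\infty$ (justified by monotonicity of ${\bf Q}_{\phi_\ell}(t_1,\cdot)$) and then applying Proposition~\ref{prwtodecay} (taking $t^*\to\infty$ there) bounds the right-hand side by $C\,{\bf Z}^N_{\phi_\ell}(0)\,e^{-2ct_1/\ell^2}$. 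Substituting $t_1\ge 2\min(U,V)-C_0$ and absorbing the fixed factor $e^{2cC_0/\ell^2}\le e^{2cC_0}$ into $C$ yields
\[
\int_{\Sigma'}T_{\mu\nu}(\phi_\ell)N^\mu n^\nu \le C\,{\bf Z}^N_{\phi_\ell}(0)\,e^{-4c\min(U,V)/\ell^2} \le C\,{\bf Z}^N_{\phi_\ell}(0)\bigl(e^{-2cV/\ell^2}+e^{-2cU/\ell^2}\bigr),
\]
after perhaps halving the decay constant; the degenerate case $t_1=0$ (i.e.~when $\min(U,V)$ is bounded) gives the stated bound trivially since both exponentials exceed a fixed positive number. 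The ``in particular'' clause follows by specializing to $\Sigma'=\Sigma_t$, for which $U\approx V\approx t$ by direct inspection of the definition.

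The analytic heavy lifting is entirely contained in the two cited propositions; the only point of substance is the geometric claim that the prescribed $\Sigma_{t_1}$ lies to the past of $\Sigma'$. This is the main (though essentially routine) obstacle: one may verify it by splitting $\Sigma'$ into the portion near $\mathcal{H}^+$ (where $v$ is bounded by $V$ but $u$ may be arbitrarily large), the portion near $\overline{\mathcal{H}}^+$ (symmetrically), and the bounded middle portion, applying achronality of $\Sigma'$ together with the explicit description of the three pieces of $\Sigma_{t_1}$ to each.
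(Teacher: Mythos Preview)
Your proof is correct and follows exactly the approach the paper intends: the paper's own proof consists of the single sentence ``This follows immediately from Proposition~\ref{prwtodecay} and the inequality~$(\ref{piogevika})$ of Theorem~\ref{ME},'' and you have simply unpacked what ``immediately'' means---namely, choosing $t_1\approx 2\min(U,V)$ so that $\Sigma'\subset J^+(\Sigma_{t_1})$, and then feeding the decay of ${\bf Z}^N_{\phi_\ell}(t_1)+{\bf Q}_{\phi_\ell}(t_1,\infty)$ from Proposition~\ref{prwtodecay} into the achronal flux bound~$(\ref{piogevika})$. Your verification that this $t_1$ works (via the piecewise description of $\Sigma_{t_1}$) and your handling of the degenerate case $t_1=0$ are both sound; the ``in particular'' clause is in fact already contained in Proposition~\ref{prwtodecay} directly, so specialising to $\Sigma'=\Sigma_t$ is not even needed there.
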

\begin{proof}
This follows immediately from Propositions~\ref{prwtodecay} and the inequality
$(\ref{piogevika})$ of Theorem~\ref{ME}.
\end{proof}

The energy decay statements of Theorem~\ref{second} now follow from the following

\begin{proposition}
Let $\Sigma$ be a Cauchy surface for $\mathcal{M}$, and let $\varphi$,
$\dot\varphi$, ${\bf E}_0(\varphi, \dot\varphi)$ be as in the statement of
Theorem~\ref{gendata} or Theorem~\ref{second}.
Then, there exist  constants $C$, $t_0\ge0$ depending only on $M$, $\Lambda$ and the geometry
of $\Sigma\cap J^-(\mathcal{D})$,  such that
\[
\Sigma_{t_0} \subset J^+(\Sigma)\cap \mathcal{D},
\]
and such that
for all solutions $\phi$ to the wave equation $\Box_g\phi=0$ on $J^+(\Sigma)\cap J^-(\mathcal{D})$,
the estimate
\begin{eqnarray*}
{\bf Z}^N_\phi(t_0)
							 &\le& C\, {\bf E}_0(\varphi, \dot\varphi)
\end{eqnarray*}
holds.
\end{proposition}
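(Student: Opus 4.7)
The proposition is a local energy estimate: propagate the $N$-energy from $\Sigma$ to a sufficiently late $\Sigma_{t_0}$ through a bounded region. First I would establish the existence of $t_0$. The set $\Sigma \cap J^-(\mathcal{D})$ is compact, because the quotient of $J^-(\mathcal{D})$ in the Penrose diagram is a bounded region whose past boundary on $\mathcal{M}$ is pierced transversally by the Cauchy surface $\Sigma$. The hypersurfaces $\Sigma_t$, being defined intrinsically in terms of $M,\Lambda$ and $r_1,R_1$ and obtained as $t$-translates via the flow of $T$, sweep through $\mathcal{D}$ toward the future as $t\to\infty$: the spacelike slab $\{t\}\times[r_1,R_1]$ recedes into $J^+(\Sigma)$ and the null segments concentrate onto $\mathcal{H}^+\cup\overline{\mathcal{H}}^+$. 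A compactness argument produces $t_0\ge0$, depending only on $M,\Lambda$ and the geometry of $\Sigma\cap J^-(\mathcal{D})$, such that $\Sigma_{t_0}\subset J^+(\Sigma)\cap\mathcal{D}$.

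Next I would apply the divergence identity for the current $J^N_\mu(\phi)$ on the bounded region $\mathcal{R}_0 := \mathcal{D}\cap J^+(\Sigma)\cap J^-(\Sigma_{t_0})$. The future boundary is $\Sigma_{t_0}$, and the past achronal boundary $\Sigma'$ consists of $\Sigma\cap\mathcal{D}$ together with possible null segments of $\mathcal{H}^-\cup\overline{\mathcal{H}}^-$ (lying below the past bifurcation spheres, where $\Sigma$ crossed these horizons). The identity reads
\[
{\bf Z}^N_\phi(t_0) = \int_{\Sigma'} J^N_\mu n^\mu + \int_{\mathcal{R}_0} K^N.
\]
Because $N$ is $T$-invariant and future-directed timelike everywhere in the closure of $\mathcal{R}_0$ up to and including the horizons---this being exactly what the red-shift construction in Section~\ref{vfsec} delivers---the deformation tensor $\pi_N$ is bounded uniformly on $\mathcal{R}_0$. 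This gives the pointwise bound $|K^N(\phi)| \le C\, T_{\mu\nu}(\phi) N^\mu \tilde n^\nu$ for any future-directed timelike $\tilde n$. Foliating $\mathcal{R}_0$ by the hypersurfaces $\Sigma_t\cap\mathcal{R}_0$ for $t\in[0,t_0]$ and applying a Gr\"onwall-type inequality on this bounded interval yields
\[
{\bf Z}^N_\phi(t_0) \le C\, \int_{\Sigma'} J^N_\mu n^\mu,
\]
with $C$ depending on $t_0$ and hence only on $M,\Lambda$ and the geometry of $\Sigma\cap J^-(\mathcal{D})$.

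Finally I would bound $\int_{\Sigma'}J^N_\mu n^\mu$ by $C\,{\bf E}_0(\varphi,\dot\varphi)$. On the $\Sigma\cap\mathcal{D}$ portion, $N$ and the Riemannian normal $n$ are both future-timelike on a compact set, so the integrand is pointwise controlled by a geometric constant times $|\nabla_\Sigma\phi|^2+(n\phi)^2$, giving the desired bound. On any null horizon segment of $\Sigma'$, one performs a second application of the $J^N$ divergence identity on the smaller bounded region whose future boundary is this segment and whose past boundary is the corresponding subset of $\Sigma\cap J^-(\mathcal{D})$; the same Gr\"onwall argument reduces the horizon flux to an $\mathbf{E}_0$-bound. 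The main technical obstacle is verifying the uniform boundedness of $\pi_N$ up to the past horizons $\mathcal{H}^-,\overline{\mathcal{H}}^-$ as well, so that the Gr\"onwall step is legitimate at those null boundary pieces; but this is simply the mirror of the red-shift property that motivated $N$ in the first place, and no new structural ingredient is required, since the argument is strictly local in $t$ and does not interact with the photon-sphere obstruction.
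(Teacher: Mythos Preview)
Your overall strategy---divergence identity for a timelike multiplier plus Gr\"onwall on a compact region---is the paper's as well, but your execution has a genuine gap at the past horizons. You assert that the specific $N$ of Section~\ref{vfsec} is future-timelike with uniformly bounded $\pi_N$ ``up to and including the horizons,'' and that regularity at $\mathcal{H}^-,\overline{\mathcal{H}}^-$ is ``simply the mirror of the red-shift property.'' This is false. Near $\mathcal{H}^-$ one has $\chi_1=1$, so $N=(1-\mu)^{-1}\partial_u+T$; in regular (Kruskal-type) coordinates $(1-\mu)^{-1}\partial_u$ extends smoothly across $\mathcal{H}^+$ but \emph{blows up} on $\mathcal{H}^-$. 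The red-shift at the future horizon becomes a blue-shift at the past horizon for the very same vector field, so $\pi_N$ is not bounded on $\mathcal{R}_0$ and your Gr\"onwall step breaks down there. Your ``second application'' in the region beyond $\mathcal{H}^-\cup\overline{\mathcal{H}}^-$ fares no better, since $N$ is not even defined outside $\mathcal{D}$. A related problem: the foliation $\Sigma_t\cap\mathcal{R}_0$, $t\in[0,t_0]$, does not reach the past achronal boundary $\Sigma'$; the $\Sigma_t$ accumulate on $\mathcal{H}^-\cup\overline{\mathcal{H}}^-$ only as $t\to-\infty$, so a Gr\"onwall on a bounded $t$-interval never connects to $\Sigma$.

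The paper's (sketched) proof sidesteps both issues by \emph{abandoning} the specific $N$ and the $\Sigma_t$ in this step. It extends $N|_{\Sigma_{t_0}}$ to an \emph{arbitrary} smooth future-timelike vector field on the compact region $J^-(\Sigma_{t_0})\cap J^+(\Sigma)$, and chooses an \emph{arbitrary} spacelike foliation $\mathcal{S}_\tau$, $\tau\in[-1,0]$, with $\mathcal{S}_{-1}=\Sigma$, $\mathcal{S}_0=\Sigma_{t_0}$, and $\partial\mathcal{S}_\tau$ on the null lateral boundary. Compactness of the region then gives the pointwise bound on $K$ by the flux density, and Gr\"onwall on the finite interval $[-1,0]$ yields $f(0)\le e^{C}f(-1)$. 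The only role of the specific $N$ is its value on $\Sigma_{t_0}$, where ${\bf Z}^N_\phi(t_0)=f(0)$ is evaluated. Your argument becomes correct once you make this replacement.
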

\begin{proof}
This is completely standard. We give a sketch to emphasize here too the role of compatible
currents based on vector fields! Extend say $N_0$ from $\Sigma_0$ to 
an arbitrary future-timelike vector field $N$ in $J^-(\Sigma_{t_0})\cap J^+(\Sigma)$,
and consider an arbitrary spacelike foliation $\mathcal{S}_{\tau}$  of this region by manifolds
with boundary, such that $\mathcal{S}_{-1}=\Sigma$, $\mathcal{S}_0 = \Sigma_T$, and
$\partial \mathcal{S}_{\tau}\subset J^-(\Sigma_T)\setminus I^-(\Sigma_T)$.
Consider the analogue of $(\ref{conid})$ in $J^-(\mathcal{S}_\tau)\cap
J^+(\mathcal{S}_{-1})$. Set 
\[
f(\tau)=  \int_{\mathcal{S}_\tau} J_\mu^N n^\mu.
\]
In view of the fact that $N$ is future timelike, 
we obtain
\[
f(\tau) \le f(-1) 
+ \int_{J^-(\mathcal{S}_\tau)\cap
J^+(\mathcal{S}_{-1})} K.
\]
On the other hand, one easily sees that there exists a $C$ depending
on the geometry of our chosen foliation of the compact set $J^-(\Sigma_T)\cap J^+(\Sigma)$
such that for all $\tau\in [-1,0]$, 
\[
\int_{J^-(\mathcal{S}_\tau)\cap
J^+(\mathcal{S}_{-1})} K \le C \int_{-1}^\tau f(\bar{\tau})d\bar\tau.
\]
It now follow that $f(0)\le e^Cf(-1)$.

The result now follows by noting that 
\[
f(-1) \le C\, {\bf E}_0(\varphi,\dot\varphi), \qquad  {\bf Z}^N_\phi(t_0)=  f(0).
\]
\end{proof}

\subsection{Pointwise decay}
The pointwise decay statements follow easily.

\appendix

\section{Proof of Lemma~\ref{vanlem}}
\label{vanapp}
Consider the function 
\[
A(r) \doteq r^3h_\ell(r)=
 \ell(\ell+1)\left(1-\frac {3M}r\right) + \frac {3M}r -\frac {8M^2}{r^2} + 
 \frac {M\Lambda r}3-\frac {2\Lambda^2 r^4}9.
\]
Recall that 
\[
\mu(r) = \frac {2M}r + \frac{\Lambda r^2}3,\qquad \mu^2= \frac {4M^2}{r^2} + \frac {4\Lambda Mr}3 + 
\frac {\Lambda^2 r^4}9.
\]
We may thus rewrite the function as
\[
A(r) = \ell(\ell+1)\left(1-\frac {3M}r\right) + \frac {3M}r + {3\Lambda Mr}- 2\mu^2.
\]
For $\ell\ge 1$,  $A$ is concave in $r$:
\[
\frac{d^2A}{dr^2} = - 6M(\ell^2+\ell-1) r^{-3} - 4 \left(\frac{d\mu}{dr}\right)^2- 
4 \mu \frac{d^2\mu}{dr^2},\qquad \frac{d^2\mu}{dr^2}= 4M r^{-3} +\frac 23 \Lambda>0.
\]

The value of $A$ at $r_{c}$ is given by
\begin{eqnarray*}
A(r_c)&=&  (\ell(\ell+1)-2)\left(1-\frac {3M}{r_c}\right) - \frac {3M}{r_c} + {3\Lambda M r_c}+2- 2\mu^2(r_c)\\
&=&(\ell(\ell+1)-2)\left(1-\frac {3M}{r_c}\right) - \frac {3M}{r_c} + {3\Lambda M r_c},
\end{eqnarray*}
where we have used $\mu(r_{c})=1$.
The same formula evidently holds at $r=r_b$ with $r_b$ replacing $r_c$ everywhere.

Since $r_{c}>3M$, $\ell(\ell+1)\ge 2$ 
and $\Lambda r_{c}^2=3-\frac {6M}{r_{c}}$, we have
\[
A(r_{c})\ge \frac {3M}{r_c} (\Lambda r_{c}^2-1)=  \frac {3M}{r_c} \left(2-\frac {6M}{r_{c}}\right)> 0.
\]
From $r_{b}<3$, we obtain similarly
\[
A(r_{b})\le \frac {3M}{r_b} \left(2-\frac {6M}{r_{b}}\right)<0.
\]
By concavity the function $A(r)$ then has exactly one zero on the interval 
$(r_{b},r_{c})$, and thus, so does $h_\ell(r)= r^{-3} A(r)$.

The second statement of the lemma is clear from the final, which in turn follows
immediately from the form of the function $h_\ell$.

\section{Proof of Lemma~\ref{calculus}}
\label{calcap}
By replacing $k$ with $\max\{k,1\}$, we may assume in what follows that $k\ge 1$.
Let $t_1,\ldots, t_i$ be a sequence with $18k(\ell ^2+1) \ge t_i-t_{i-1} \ge 9k( \ell^2+1)$.
We can choose 
\[
t_{i+1} \in [t_i+ 9k (\ell^2+1), t_i+ 18k (\ell^2+1)]
\]
such that by pigeonhole principle
\begin{equation}
\label{st1}
f(t_{i+1}) \le k^{-1}(9\ell^2+9)^{-1} \int_{t_i+(9\ell^2+9)}^{t_i+(18\ell^2+18)} f (\tau) d\tau. 
\end{equation}
Assumptions of the Lemma then imply that
\[
h(t_i)+\int_{t_i}^{t_{i+1}} f(\tau) d\tau
\le k f({t}_{i}) +\frac 19 h(t_{i-1})+ \frac19 \int_{t_{i-1}}^{t_{i}}f(\tau) d(\tau),
\]
\[
\int_{t_i+(9\ell^2+9)}^{t_{i}+(18\ell^2+18)} f(\tau) d\tau
\le k f({t}_{i}) +\frac 19 h(t_{i-1})+ \frac19 \int_{t_{i-1}}^{t_{i}}f(\tau) d(\tau).
\]
Therefore,
\[
f(t_{i+1})\le (9\ell^2+9)^{-1} \left (f(t_i) +  \frac 1{9k} h(t_{i-1})+
\frac 1{9k} \int_{t_{i-1}}^{t_{i}}f(\tau) d(\tau)\right).
\]
Thus if
\begin{equation}
\label{av1}
f(t_i) \le \bar C 2^{-i-1}
\end{equation}
\begin{equation}
\label{av2}
h(t_{i-1})+\int_{t_{i-1}}^{t_{i}} f(\tau)\, d\tau\le \,\bar C\, k\,  2^{-i+1}
\end{equation}
then
\[
f({t_{i+1}}) \le \bar C( 2^{-i}3^{-2} +  3^{-4}2^{-i+2} )\le \bar C2^{-i-1}
\]
\[
h(t_i)+\int_{t_i}^{t_{i+1}}f(\tau)\,d\tau \le \bar C( k2^{-i-1} + 3^{-2} k 2^{-i+1} )\le \bar C\,k\, 2^{-i}.
\]
Now we have that $(\ref{av1})$, $(\ref{av2})$ indeed hold for $i=1$, where
$t_0=0$, and  $t_1$ is defined so as to satisfy $(\ref{st1})$, with $\bar C=\max\{h(0),  k_0\}$. 
This proves that for the sequence $t_i$, defined above,
$$
h(t_i)+\int_{t_i}^{t_{i+1}} f(\tau)\, d\tau\,\le \bar C\, k\,  2^{-i}\le \bar C\, k\,  e^{-c t_i/k\ell^2}.
$$
Using the assumptions of the Lemma again
we immediately obtain the desired statement.


\begin{thebibliography}{99}
\bibitem{jb:gcc} R. Beig and J. M. Heinzle \emph{CMC-Slicings of
Kottler-Schwarzschild-de Sitter Cosmologies},
Comm. Math. Phys., {\bf 260} (2005), 673--709

\bibitem{bs1} P. Blue and A. Soffer \emph{Semilinear wave
equations on the Schwarzschild manifold. I. Local decay
estimates}, Adv. Differential Equations {\bf 8} (2003), no. 5, 595--614

\bibitem{bs2} P. Blue and A. Soffer \emph{Errata for ``Global existence and scattering for the nonlinear Schrodinger equation on Schwarzschild manifolds'', ``Semilinear wave equations on the Schwarzschild manifold I: Local Decay Estimates'', and ``The wave equation on the Schwarzschild metric II: Local Decay for the spin 2 Regge Wheeler equation''}, gr-qc/0608073, 6 pages

\bibitem{bs:le} P. Blue and J. Sterbenz \emph{Uniform decay
of local energy and the semi-linear wave equation
on Schwarzschild space} Comm. Math. Phys. {\bf 268} (2006), no. 2,
481--504

\bibitem{bh} J.-F. Bony and D. H\"afner \emph{Decay and non-decay of the local
energy for the wave equation in the de Sitter-Schwarzschild metric}, preprint 2007

\bibitem{brady} P. R. Brady, C. M. Chambers, W. Krivan, and P. Laguna
\emph{Telling Tails in the Presence of a Cosmological Constant},
Phys. Rev. D {\bf 55} (1997) 7538-7545

\bibitem{brady2} P. R. Brady, C. M. Chambers, W. G. Laarakkers, 
and E. Poisson \emph{Radiative falloff in Schwarzschild-de
Sitter spacetime} Phys. Rev. D {\bf 60} (1999) 064003

\bibitem{brandon} B. Carter \emph{Black hole equilibrium sates} In: Black holes/Les 
astres occlus (\'Ecole d'\'Et\'e Phys. Th\'eor., Les Houches, 1972), pp. 57--214.
 Gordon and Breach, New York, 1973.

\bibitem{book2} D. Christodoulou \emph{The action principle and partial
differential equations}, Ann. Math. Studies No. 146, 1999 

\bibitem{dr1} M. Dafermos and I. Rodnianski
\emph{A proof of Price's law for the collapse of a 
self-gravitating scalar field}, Invent. Math. {\bf 162}
(2005), 381--457

\bibitem{dr2} M. Dafermos and I. Rodnianski
\emph{A note on boundary value problems for black hole 
evolution}, gr-qc/0403034

\bibitem{dr3} M. Dafermos and I. Rodnianski
\emph{The redshift effect and radiation decay on black hole
spacetimes}, gr-qc/0512119

\bibitem{GibHawk} G. W. Gibbons and S. W. Hawking 
\emph{Cosmological event horizons, thermodynamics, and particle creation}
Phys. Rev. D. {\bf 15} (1977), no. 10, 2738--2751

\bibitem{gpp:de1} C. Gundlach, R. H. Price, and J. Pullin
\emph{Late-time behavior of stellar collapse and explosions. I.
Linearized perturbations} Phys. Rev. D {\bf 49} (1994), 883--889

\bibitem{he:lssst} S. W. Hawking and G. F. R. Ellis
\emph{The large scale structure of space-time} Cambridge
Monographs on Mathematical Physics, No. 1. Cambridge
University Press, London-New York, 1973


\bibitem{kw:lss} B. Kay and R. Wald
\emph{Linear stability of Schwarzschild under perturbations
which are nonvanishing on the bifurcation $2$-sphere}
Classical Quantum Gravity {\bf 4} (1987), no. 4, 893--898

\bibitem{kottler} F. Kottler \emph{\"Uber die physikalischen
Grundlagen der Einsteinschen Gravitationstheorie},
Ann. Phys. {\bf 56} 401--462 (1918)

\bibitem{lake} K. Lake and R. C. Roeder \emph{Effects of a nonvanishing cosmological constant on the spherically symmetric vacuum manifold} Phys. Rev. D {\bf 15} (1977) 3513--3519

\bibitem{rpr:ns} R. Price \emph{Nonspherical perturbations
of relativistic gravitational collapse. I. Scalar and gravitational
perturbations} Phys. Rev. D (3) {\bf 5} (1972), 2419--2438

\bibitem{Weyl} H. Weyl \emph{\"Uber die statischen kugelsymmetrischen L\"osungen von Einsteins “kosmologischen” Gravitationsgleichungen}, Phys. Z. {\bf 20} (1919),  31--34

\bibitem{SB-Zworski} A. S\'a Barreto and M. Zworski \emph{Distribution of resonances
for spherical black holes} Math. Res. Lett. {\bf 4} (1997), no. 1, 103--121

\end{thebibliography}
\end{document}